\documentclass[11pt,a4paper,reqno,french,tikz]{amsart}
\usepackage[utf8]{inputenc} \usepackage[T1]{fontenc}
\usepackage{amsthm,amsmath,amsfonts,amssymb,amsxtra,appendix,bookmark,dsfont,bm,mathrsfs,amstext,amsopn,mathrsfs,mathtools,comment,cite,hyperref,color,xcolor,cite}
\usepackage[margin=4cm]{geometry}

\newtheorem{theorem}{Theorem}[section]\newtheorem{lemma}[theorem]{Lemma}\newtheorem{proposition}[theorem]{Proposition}\newtheorem{remark}[theorem]{Remark}

\let\C\relax\newcommand{\C}{\mathbb{C}}\newcommand{\Z}{\mathbb{Z}}\newcommand{\R}{\mathbb{R}}\newcommand{\N}{\mathbb{N}}\newcommand{\bbL}{\mathbb{L}}

\newcommand\cB{\mathcal{B}}\newcommand\cC{\mathcal{C}}\newcommand\cE{\mathcal{E}}\newcommand\cF{\mathcal{F}}\newcommand\cH{\mathcal{H}}\newcommand\cJ{\mathcal{J}}\newcommand\cL{\mathcal{L}}\newcommand\cM{\mathcal{M}}\newcommand\cP{\mathcal{P}}\newcommand\cR{\mathcal{R}}\newcommand\cS{\mathcal{S}}

\DeclareMathOperator{\Span}{Span}\DeclareMathOperator{\Ker}{Ker}\DeclareMathOperator{\im}{Im}
\def\d{{\rm d}}

\renewcommand{\ge}{\geqslant}\renewcommand{\le}{\leqslant}

\newcommand{\pa}[1]{\left( #1 \right)}                              
\newcommand{\bpa}[1]{\big( #1 \big)}                                
\newcommand{\acs}[1]{\left\{ #1 \right\}}                       
\newcommand{\ab}[1]{\left|#1\right|}                                
\newcommand{\ps}[1]{\left< #1 \right>}                              
\newcommand{\nor}[2]{ \left| \! \left| #1 \right| \! \right|_{#2} } 
\newcommand{\norm}[1]{ \left| \! \left| #1 \right| \! \right| }     

\newcommand\vp{\varphi}                            
\newcommand{\ep}{\varepsilon} 
\let\p\relax\newcommand{\p}{\psi}                  
\newcommand{\na}{\nabla}                           

\newcommand{\f}[2]{\frac{#1}{#2}}                                              
\newcommand{\mylim}[1]{\underset{\substack{#1}}{\text{\normalfont{lim}}}\;} 
\newcommand{\mat}[1]{\begin{pmatrix} #1 \end{pmatrix}}                         

\newcommand{\uz}[1]{U^0_{#1,k}}
\newcommand{\uo}[1]{U^1_{#1,k}}
\newcommand{\restr}[2]{#1_{\mkern 1mu \vrule height 2ex\mkern2mu #2}}

\def\fR{{\mathfrak R}}

\newcommand{\bbH}{\mathbb{H}}
\newcommand{\bbU}{\mathbb{U}}
\newcommand{\bbE}{\mathbb{E}}
\newcommand{\bbF}{\mathbb{F}}
\newcommand{\bbP}{\mathbb{P}}
\newcommand{\der}{D}
\newcommand{\fer}{_{\rm F}}
\newcommand{\per}{_{\rm per}}
\newcommand{\loc}{_{\rm loc}}

\newcommand{\xep}{\pa{\tfrac{x}{\ep}}}
\newcommand{\elk}{^{\ep}_k}

\newcommand{\sao}{\sqrt{\ab{\Omega}}}
\newcommand{\w}{w}

\newcommand{\tdr}{t}
\newcommand{\rdr}{r}
\newcommand{\sdr}{s}
\newcommand{\tur}{t'}
\newcommand{\rur}{r'}
\newcommand{\sur}{s'}
\newcommand{\chir}{\chi}
\newcommand{\gaur}{\gamma_1}
\newcommand{\gadr}{\gamma_2}

\newcommand{\vpp}{\widetilde{v}\fer}
\newcommand{\PP}{\mathbb{P}}

\usepackage{slashed}

\newcommand{\icol}[1]{
  \left(\begin{smallmatrix}#1\end{smallmatrix}\right)%
}

\usepackage{dsfont} 
\def\1{{\mathds{1}}}

\usepackage{subcaption}

\usepackage{graphicx}
\graphicspath{
  {home/louis/heavy_stuff/graphs_code/improved_graphene/plot_bands/}
  {pics/}
}

\title[]{Some effective operators\\for graphene monolayer superlattices,\\from variational perturbation theory}
\author[L. Garrigue]{Louis Garrigue}
\address{Laboratoire ``analyse géométrie modélisation'', CY Cergy Paris Université, 95302 Cergy-Pontoise, France}
\email{louis.garrigue@cyu.fr}
\date{\today}
\begin{document}
\maketitle

\begin{abstract}
Our goal is to provide precise effective operators for monolayer graphene at Fermi energy. We consider the microscopic potential created by a lattice, and add a macroscopic potential with the same periodicity but varying at a scale $\varepsilon^{-1} \in \mathbb{N}$, creating a superlattice. Our approach consists in coupling the variational approximation, perturbation theory together with a multiscale method. At the effective level the usual massless Dirac operator is replaced by other operators, and we provide simulations in the case of graphene.
\end{abstract}

\section{Introduction}%
\label{sec:Introduction}

In this document, we consider semiclassical Schrödinger operators of the form
\begin{align}\label{eq:exact_ep_inv_omega} 
	h^\ep := \tfrac{1}{2} \pa{-i\na + \ep A(\ep x)}^2 + v(x) + \ep V(\ep x),
\end{align}
where $v$, $V$ and $A$ are $\Omega$-periodic potentials having the same periodicity. When $\ep$ becomes small, the spectrum of~\eqref{eq:exact_ep_inv_omega} is numerically hard to reach because the periodicity cell is $\ep^{-1} \Omega$, hence one needs to develop effective models. We will be interested in the states at the interface between filled and unfilled bands, because they are the main ones determining the physical behavior of quantum systems. In particular, this applies to graphene considered as a one-body model like in DFT.


The physics literature investigated those doubly periodic systems, called monolayer superlattices, enabling a better control on the propagation of electronic waves for graphene, with periodic electric potentials~\cite{BarPeeVas08,BreFer09,WanZhu10,LevKimBro12,DeaYouWan12,SilEng12,FerRodFal16,Fernandes23}, magnetic ones~\cite{RamVasMat08,BerKezLoz08,BerGumLoz08,DelMar09,DelMar11,YuaZhaZen11,LePhaNgu12,WuLiuChe17,BezLim20} for instance to compute the mini-band structures~\cite{ParYanSon08b,ParYanSon08,IsaJonKin08,BreFer09,BalJorNil10,LiDieFor21,WuKilPar12}, or transport properties~\cite{BliFreSav09,BurYeyBre11,CheKraDan20}. We will often take $A = 0$ for simplicity. It is well-known that the effective operator modeling particles around the Fermi level, for only one valley, is then a massless Dirac operator in the potential $V$, i.e. it is
\begin{align*}
v\fer \sigma \cdot (-i\na) + V,
\end{align*}
where $\sigma := (\sigma_1,\sigma_2)$ are the Pauli matrices, and $v\fer \in \R_+$ is the graphene Fermi velocity, see for instance~\cite[(2)]{IsaJonKin08}. Notice that in some articles, authors start from $\tfrac{1}{2} (-\Delta) + v(x) + V(\ep x)$ and derive $v\fer \sigma \cdot (-i\na) + \f{1}{\ep} V(x)$ but this involves a singular potential $\f{1}{\ep}V$ so here we rather start from~\eqref{eq:exact_ep_inv_omega} and at the end one can formally replace $V \rightarrow \f{1}{\ep}V$. This approach of modulating  monolayer graphene using a parametrized larger scale potential provides an alternative to moiré systems to obtain precise control over the electronic behavior. Moreover, a macroscopic periodic modulation of monolayer graphene is a first step before a better understanding of bilayer moirés~\cite{CarMasFan17}, relaxed graphene (see \cite[p58]{LanLif86}, \cite[eq. (45)]{Naumis20}, \cite[Chapter 9, p213]{Katsnelson20}) or two untwisted microscopic lattices with different lattice constants~\cite{KinUchMil12,OrtYanBri12}.

The mathematics literature also contains works on the study of~\eqref{eq:exact_ep_inv_omega}, or in the scaling $\tfrac{1}{2} \pa{-i\na + A(\ep x)}^2 + v(x) + V(\ep x)$. The derivation of effective models for graphene at the macroscopic scale was done using semiclassics techniques in~\cite{Buslaev87,GerMarSjo91,GuiRalTru88,Dimassi93}, a dynamical approach in~\cite{GerMarMau97,PanSpoTeu03,FefWei14,BalCazMas26}, and in a spectral approach in~\cite{FefLeeWei18} in the strong binding regime. The case of twisted bilayer graphene received a lot of attention, derivations of effective operators for twisted bilayer graphene began with the Bistritzer-MacDonald model~\cite{BisDon11}, which was mathematically studied in~\cite{BecEmbWit22,BecHumZwo23,BecHumZwo24,BecZwo24}. Other derivations were done in~\cite{WatKonMac23,CanMen23,CanGarGon23b,QuaWatMas25,KonLiuLus24}, and in~\cite{QuiKonLus25} for higher-order models. To develop effective models in multiscale situations, reduced basis methods were applied in homogenization~\cite{Nguyen08,Boyaval08,HerOliHue14,OhlSch15,ChuEfeHou23}, and ideas from multiscale finite element method~\cite{HouWuCai99,EfeHou09,BabLip11} were applied to semiclassial Schrödinger operators~\cite{CheMaZha20,LiZha25}. 


In~\cite{CanGarGon23b}, the authors used a variational approximation which produced an effective operator $v\fer \sigma \cdot (-i\na) + \tfrac{1}{2} \ep (-\Delta) + V$. The variational space consisted in functions of the form $\sum_{a=1}^{2} \alpha_a(x) \phi_a \xep$, where $\alpha$ is let free and $\phi_1$ and $\phi_2$ are the Bloch eigenfunctions at the Dirac point corresponding to the Fermi energy. The variational space is hence free at the macroscopic level but imposed at the microscopic level. 
In this document, we use the same approach but also couple it to perturbation theory. To the best of our knowledge, the coupling between the variational approximation and perturbation theory was first used in~\cite{NooLow74}. It is now used in the physics literature~\cite{FraHeIps18,DugEksFur24}, and a mathematical analysis of this method is provided in~\cite{GarSta24}. To develop an improved variational space for~\eqref{eq:exact_ep_inv_omega}, we enrich the microscale functions by not only inserting the Bloch eigenfunctions at the Dirac point, but also their derivatives with respect to the momentum parameter $k$. We hence do not obtain a $2\times 2$-valued matrix operator but an $M \times M$-valued one, where $M$ is the number of functions in the microscale basis. For instance, corresponding to order one in perturbation theory, one can add $4$ derivatives (two perturbation directions for two states), resulting in a $6 \times 6$-matrix valued effective operator presented in Section~\ref{sub:Order 1, independent on k}.

We introduce the problem in Section~\ref{sec:intro}, then in Section~\ref{sec:The effective operators} we present the general effective operators. In Section~\ref{sec:Particular effective models} we take particular cases for deriving effective operators. Finally in Section~\ref{sec:Simulations} we display some simulations for the band diagrams, showing that the effective operators derived in this document provide more accurate band diagrams and eigenvectors than the traditional massless Dirac operator.







\section{Introduction of the setting}%
\label{sec:intro}



In this section we introduce the objects that will be manipulated, most importantly we present the scaled exact operator.

\subsection{The lattice}%
\label{sub:The lattice}

We consider the two-dimensional space $\R^2$ instead of $\R^3$ in which graphene is embedded, but everything can be extended to the $\R^3$ situation. Let us consider $a_0 > 0$, $k_{\rm D} := \frac{4\pi}{3a_0}$, and the graphene lattice vectors
\begin{align}\label{eq:def_lattice} 
	a_1 &:= a_0\mat{\frac{1}{2} \\ -\frac{\sqrt{3}}{2}}, \qquad  & a_2 := -a_0\mat{\frac{1}{2} \\ \frac{\sqrt{3}}{2}}, \\
	a_1^* &:= \sqrt{3} k_{\rm D}\mat{\frac{\sqrt{3}}{2} \\ -\frac{1}{2}}, \qquad  & a_2^* := -  \sqrt{3} k_{\rm D}\mat{\frac{\sqrt{3}}{2} \\ \frac{1}{2}}.
\end{align}
We define the action of $\R^2$ on $a_1$ and $a_2$ as $n a := n_1 a_1 + n_2 a_2$ and similarly $m a^* := m_1 a^*_1 + m_2 a^*_2$. 
The direct lattice is $\bbL := \{n a \;|\; n \in \Z^2\}$, the dual one is thus $\bbL^* := \{m a^* \;|\; m \in \Z^2\}$. The fundamental cell is $\Omega := \{x a \;|\; x \in [0,1]^2\}$ and we denote by $\cB$ the first Brillouin zone. We will consider the Dirac point $K := -\tfrac{1}{3} (a_1^* + a_2^*) \in \cB$, a high-symmetry point.

\subsection{Scaling}%
\label{sub:Scaling}

For any $n \in \N$ and any $q \in [1,+\infty[$, we define the set of periodic operators
\begin{align*}
	L^q\per(n\Omega) := \acs{f \in L^q\loc(\R^2) \;|\; \forall x \in \R^2, \forall y \in \bbL, f(x + ny) = f(x)}.
\end{align*}
For any $n \in \N$ and any function $f,g \in L^2\per(\ep^{-1}\Omega)$, we define $\ps{f,g}_{L\per^2(n\Omega)} := \int_{n\Omega} \overline{f}g$. Moreover, we will use mainly
\begin{align*}
\ps{\cdot,\cdot} := \ps{\cdot,\cdot}_{L^2\per(\Omega)}.
\end{align*}
We define the scaling operator $Q$ on functions $f : \R^2 \rightarrow \R$ by 
\begin{align*}
(Qf)(x) := f\pa{\tfrac{x}{\ep} },
\end{align*}
its inverse is $\pa{Q^{-1} f}(x) = f(\ep x)$. We have $Q : L^2\per(\ep^{-1}\Omega) \rightarrow L^2\per(\Omega)$ and for $\ep \in 1/\N := \{1/n, n \in \N \backslash \{0\}\}$, $\ps{Qf,Qg}_{L\per^2(\Omega)} = \ep^2 \ps{f,g}_{L\per^2(\ep^{-1} \Omega)}$ so in those spaces, $Q^* = \ep^2 Q^{-1}$.

\subsection{Symmetries of the potentials}%
\label{sub:Symmetries of the potentials}

Let us define the rotation matrix 
\begin{align*}
R_\theta := \mat{\cos \theta & - \sin \theta \\ \sin \theta & \cos \theta},
\end{align*}
and its corresponding action $(\cR_\theta f)(x) := f(R_{-\theta} x)$ on functions $f : \R^2 \rightarrow \R$. We consider potentials with honeycomb symmetry, which is defined in~\cite[Definition 2.1]{FefWei12}. They are potentials $v \in L^p(\R^2)$, with $p>1$, such that 
\begin{align}\label{eq:honeycomb} 
v\in L^p\per(\Omega), \qquad  v(-x) = v(x), \qquad  \cR_{\f{2\pi}{3} }v = v, \qquad v(x_1,-x_2) = v(x_1,x_2).
\end{align}
As for the potentials $V$ and $A$, we only assume that $V,A \in L^p\per(\Omega)$ but for simulations of Section~\ref{sec:Simulations} we will choose $A=0$ and $V$ respecting the symmetries~\eqref{eq:honeycomb}. 

\subsection{The scaled exact operator}%
\label{sub:The scaled exact operator}

\subsubsection{The microscopic exact problem}%
\label{ssub:The microscopic exact problem}

Let us denote by $(E^q_m, u^q_m) \in \R \times H^1\per(\Omega)$ the Bloch eigenpairs of 
\begin{align*}
	h_q :=\tfrac{1}{2} (-i\na_q)^2 + v,  \qquad \bbL-\text{periodic},
\end{align*}
which is bounded from below, for $m \in \N$ and $q \in \cB$, sorted such that $E^q_m$ is increasing with $m$. Hence 
\begin{align}\label{eq:schro_V0} 
h_q u^q_m =  E^q_m u^q_m .
\end{align}
We assume that there is $m\fer \in \N$ such that the dispersion relation of $k \mapsto h_k$ has a conical intersection at the Dirac point $k = K$ at the level $m\fer$. This situation is generic, as proved in~\cite[Theorem 5.1]{FefWei12}. In particular, this implies that $E^K_{m\fer} = E^K_{m\fer +1} =: E\fer$, which we call the Fermi level. We also assume that this level is exactly 2-fold degenerate, so $E^K_{m\fer -1} < E^K_{m\fer} = E^K_{m\fer +1} < E^K_{m\fer+2}$. 

Let us define 
\begin{align}\label{eq:w1w2} 
\w_1 := u^K_{m\fer}, \qquad \qquad \w_2 := u^K_{m\fer+1}
\end{align}
such that $\nor{\w_1}{L^2\per(\Omega)} = \nor{\w_2}{L^2\per(\Omega)} = 1$ and such that with 
\begin{align}\label{eq:def_phi} 
\phi_a(x) := e^{iKx}\w_a(x),
\end{align}
 we have 
\begin{align}\label{eq:syms_phi} 
\cR_{\f{2\pi}{3} } \phi_a = \omega^a \phi_a, \qquad \qquad  \text{for all } a \in \{1,2\},
\end{align}
where $\omega := e^{i\f{2\pi}{3}}$, see~\cite{FefWei12}. The functions $\phi_1$ and $\phi_2$ are the Bloch states at the Dirac point and Fermi level, see~\cite{FefWei12} for more details about the symmetries of those Bloch states. 

\subsubsection{The exact operator}%
\label{ssub:The exact operator}

We want to study the operator $h^\ep = -\f 12 \Delta + v + \ep Q^{-1} V$ defined in~\eqref{eq:exact_ep_inv_omega}, which is $\ep^{-1} \bbL$-periodic. For any $q \in \R^2$, we define $\na_q := e^{-iqx} \na e^{iqx}$ where $e^{iqx}$ is considered as a multiplication operator, so $\na_q = \na +iq$ and $-i\na_q = -i\na + q$. Instead of $h^\ep$, we work on its Bloch transform, acting on $L^2\per(\ep^{-1} \Omega)$ and given by
\begin{align}\label{eq:def_h_ep_q} 
h^\ep_q = e^{-iqx} h^\ep e^{iqx} = \tfrac 12 (-i\na_q + \ep Q^{-1} A)^2 + v + \ep Q^{-1} V.
\end{align}
We have $\na_q Q = \ep^{-1} Q \na_{\ep q}$ so $\ep Q^{-1}\na_{\ep ^{-1} q}  = \na_{ q} Q^{-1}$, $(-i\na_q + Q^{-1} A)^2 Q^{-1} = \ep^2 Q^{-1} \pa{-i\na_{\ep^{-1} q} + A}^2$ and instead of $h^\ep_q$ we work on its rescaled (and hence semiclassical) version $Q h^\ep_q Q^{-1}$. We center the study of $h^\ep$ around the Fermi level, which we assume to be at a the Dirac point $K$, presenting a conical intersection in the dispersion relation, see~\cite{FefWei12} for a precise definition. Hence we work on
\begin{align}\label{eq:exact_op} 
	H\elk := Q \pa{h^\ep_{K + \ep k} - E\fer} Q^{-1} = \tfrac{1}{2} \ep^2 \pa{-i\na_{\ep^{-1} K + k} + A}^2 + Qv + \ep V - E\fer,
\end{align}
which acts on $L^2\per(\Omega)$. The two operations ``scaling'' and ``Bloch transform'' commute, hence the order in which we apply them does not matter.


\subsubsection{Exact eigenvectors when $V=0$}%
\label{ssub:Exact eigenvectors}

Let us denote by $H^{\ep,V=0}_k$ the operator $H\elk$ with $V=0$. In this case, its band diagram is simply a folding of the band diagram of $-\f{1}{2} \Delta + v$, (see~\cite{KuBerLee10,Farjam14,MayYndSol20,QuaRybSch25} for the folding/unfolding operations) and the eigenmodes of $H^{\ep,V=0}_k$ can be deduced from the ones of $-\f{1}{2} (-i\na_p) + v$. Let use analyze the branches starting at the Dirac point. We compute that $H^{\ep,V=0}_k Q u^{K+\ep k}_m = (E^{K+\ep k}_m -E\fer) Q u^{K+\ep k}_m$, so $\pa{E^{K+\ep k}_m -E\fer, Q u^{K+ \ep k}_m}_{m \in \N}$ are eigenmodes for $H^{\ep,V=0}_k$. Moreover if $V$ is considered as a perturbation, the eigenmodes of $H^{\ep,V=0}_k$ are smoothly deformed to the ones of $H^\ep_k$.

\section{The general effective operators}%
\label{sec:The effective operators}

In this section we present effective operators for two-scales graphene, in a general way. More specific choices will be made in Section~\ref{sec:Particular effective models}.

\subsection{Two-scales reduced basis}%

Let us consider a family $\cF$ of $M$ linearly independent and periodic functions $\psi_1,\dots,\psi_M \in H^1\per(\Omega)$,
\begin{align}\label{eq:families} 
\cF := \pa{\psi_j}_{1 \le j \le M}.
\end{align}
This family will impose the behaviors of eigenfunctions of the reduced space at the microscopic scale. They potentially depend on the momentum $k$, but not on $\ep$. As in~\cite{CanGarGon23b}, we then build the reduced vector space
\begin{align}\label{eq:proj_space} 
 \acs{ \begingroup\textstyle\sum\endgroup_{j=1}^{M}  \alpha_j(x) \psi_j\xep \;|\; \alpha_j \in \cC^{\infty}\per(\Omega), j\in \{1, \dots, M\}},
\end{align}
which is the two-scale space ``spanned'' by the space of microscopic functions of $\cF$ when the macroscopic behavior is let free. Efficient families will 
\begin{itemize}
\item have $M$ small so the resulting effective model will be ``computable'',
\item choose $\p_j$ in a way so that the resulting effective operator will be accurate.
\end{itemize}
We then define the operator
	\begin{align*}
		\cJ : \hspace{0.5cm} 
	\begin{array}{rcl}
		\C^M \otimes \cC^{\infty}\per(\Omega,\C) & \longrightarrow &  H^1\per(\Omega,\C) \\
		\alpha & \longmapsto & \sum_{j=1}^{M} \alpha_j(x) \psi_j \pa{\tfrac{x}{\ep} },
	\end{array}
	\end{align*}
which enables to go from envelop functions to real-space functions.

\subsection{Effective operator}%
\label{sub:Effective operator}

We study the action of the exact operator $H\elk$ in the projected space~\eqref{eq:proj_space}. We want to associate it a low-dimensional model which will reproduce the Fermi eigenmodes with precision. Let us define 
\begin{align*}
	\cM &:= \pa{\ps{ \psi_a, \pa{h_k - E\fer} \psi_b}}_{1 \le a,b \le M}, \qquad \qquad \cS := \pa{\ps{\psi_a, \psi_b}}_{1 \le a,b \le M} \\
	    &\qquad \qquad \qquad \qquad  \cL := \pa{\ps{\psi_a, (-i\na_K) \psi_b}}_{1 \le a,b \le M},
\end{align*}
where the coefficients of $\cL$ belong to $\C^2$. Those $M \times M$ matrices are all hermitian, $\cS$ is a Gram matrix, so since $\cF$ is composed of linearly independent functions, $\cS$ is strictly positive. Then an effective model for $H\elk$ is
\begin{align}\label{eq:effective_op_general} 
	\bbH\elk := \ep^{-1} \cM \otimes \1 + \cL \cdot\otimes (-i\na_k + A)  +   \cS \otimes \pa{\tfrac{1}{2}\ep(-i\na_k + A)^2 + V}
\end{align}
as an operator of $\C^M \otimes \cC^{\infty}\per(\Omega)$, where the only spatial dependence exists via $V$ and $A$. A formal derivation of~\eqref{eq:effective_op_general} is provided by the following proposition.
\begin{proposition}[Weak convergence]\label{prop:weak_conv} 
	Take $\alpha, \beta \in \C^M \otimes \cC^{\infty}\per(\Omega)$, and $V,A \in \cC^{\infty}\per(\Omega)$. For any $N \in \N$, there exists $C_N > 0$ such that for any $\ep \in ]0,1[ \cap (1/\N)$,
\begin{align*}
	\ab{\ep^{-1} \ab{\Omega} \ps{\cJ \alpha , H\elk   \cJ \beta }_{L^2\per(\Omega,\C)}  - \ps{\alpha , \bbH\elk  \beta}_{\C^M \otimes L^2\per(\Omega,\C)}} \le C_N \ep^N.
\end{align*}
\end{proposition}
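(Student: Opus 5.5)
The plan is to compute $H\elk \cJ\beta$ explicitly with the Leibniz rule, separating macroscopic derivatives (acting on $\beta$, $A$, $V$) from microscopic ones (acting on $\psi_b(\cdot/\ep)$). Then every term of $\ps{\cJ\alpha, H\elk \cJ\beta}$ becomes an integral of the form $\int_\Omega f(x)\, g(x/\ep)\,\d x$, with $f$ smooth and $\Omega$-periodic and $g \in L^1\per(\Omega)$. The whole statement then reduces to one averaging lemma applied term by term.

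\emph{Averaging lemma.} For $f \in \cC^\infty\per(\Omega)$, $g \in L^1\per(\Omega)$ and $\ep \in 1/\N$,
\begin{align*}
\ab{\int_\Omega f(x)\, g\pa{\tfrac{x}{\ep}}\d x - \tfrac{1}{\ab{\Omega}} \int_\Omega f \int_\Omega g} \le C_N \nor{g}{L^1\per(\Omega)}\, \ep^{N}.
\end{align*}
\begin{itemize}
\item Since $\ep^{-1}\in\N$, the function $g(\cdot/\ep)$ is $\Omega$-periodic, and its Fourier modes lie in $\ep^{-1}\bbL^*$.
\item Expand $f = \sum_{m\in\bbL^*} \widehat f_m e^{imx}$. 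The integral equals the zero-mode product plus $\sum_{m\ne 0} \widehat f_{-m/\ep}\cdot(\text{Fourier coefficient of } g)$.
\item The Fourier coefficients of $g$ are bounded by $\nor{g}{L^1}$, and $\ab{\widehat f_{n}} \le C_N' \ab{n}^{-N-3}$ by smoothness.
\item Hence the remainder is $O(\ep^{N+1})$ uniformly. This also absorbs the prefactor $\ep^{-1}$ in the statement.
\end{itemize}
The products $g = \overline{\psi_a}\psi_b$, $\overline{\psi_a}\,\na\psi_b$ and $\overline{\na\psi_a}\cdot\na\psi_b$ are all in $L^1\per(\Omega)$ because $\psi_j \in H^1\per(\Omega)$.

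\emph{Expansion.} Write $D := -i\na_k + A$. Since $-i\na_{\ep^{-1}K+k}$ applied to $\beta_b(x)\psi_b(x/\ep)$ gives $(D\beta_b)(x)\psi_b(x/\ep) + \ep^{-1}\beta_b(x)\,((-i\na_K)\psi_b)(x/\ep)$, we get
\begin{align*}
H\elk \cJ\beta = \sum_b \Big[ \ep^{-1}\beta_b\, \big(\big(\tfrac12(-i\na_K)^2 + v - E\fer\big)\psi_b\big)\pa{\tfrac{x}{\ep}} + (D\beta_b)\cdot\big((-i\na_K)\psi_b\big)\pa{\tfrac{x}{\ep}} \\
+ \big(\big(\tfrac12\ep D^2 + V\big)\beta_b\big)\,\psi_b\pa{\tfrac{x}{\ep}}\Big]\cdot\ep ,
\end{align*}
up to the bookkeeping of the overall factor $\ep$ and of the cross terms, which have to be symmetrized correctly. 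Pairing with $\cJ\alpha$, multiplying by $\ep^{-1}\ab{\Omega}$ and applying the averaging lemma, the three groups produce exactly the three blocks of $\bbH\elk$:
\begin{itemize}
\item $\ep^{-1}\,\overline{\alpha_a}\beta_b$ times the average of $\overline{\psi_a}(h_K - E\fer)\psi_b$ gives the $\ep^{-1}\cM$ term;
\item $\overline{\alpha_a}\, D\beta_b$ times the average of $\overline{\psi_a}(-i\na_K)\psi_b$ gives the $\cL$ term;
\item $\overline{\alpha_a}\,(\tfrac12\ep D^2+V)\beta_b$ times the average of $\overline{\psi_a}\psi_b$ gives the $\cS$ term.
\end{itemize}
Here I read $h_k$ in the definition of $\cM$ as the microscopic operator at the Dirac momentum $K$, since $k$-dependence only enters through $\psi_j$. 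There are finitely many terms, each with error $C\ep^{N}$, which gives the claim.

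\emph{Main obstacle.} The delicate step is regularity. With $\psi_b$ only in $H^1\per$, the term $(-i\na_K)^2\psi_b$ and hence $\cM$ must be understood as quadratic forms. I would therefore avoid applying $H\elk$ to $\cJ\beta$ directly, and instead write the second-order part in the symmetric form $\tfrac12\ep^2\ps{(-i\na_{\ep^{-1}K+k}+A)\cJ\alpha,\,(-i\na_{\ep^{-1}K+k}+A)\cJ\beta}$.
\begin{itemize}
\item Expanding both factors by the Leibniz rule gives four groups of terms.
\item The microscopic--microscopic group yields $\ep^{-1}\tfrac12\ps{(-i\na_K)\psi_a,(-i\na_K)\psi_b}$, which is the kinetic part of $\cM$.
\item The macroscopic--macroscopic group yields $\tfrac12\ep\,\ps{D\alpha,D\beta}\cS$; after integration by parts in the macroscopic variable only (legitimate since $\alpha,\beta,A$ are smooth), this becomes $\ps{\alpha,\tfrac12\ep D^2\beta}\cS$.
\item The two mixed groups each carry one macroscopic and one microscopic gradient. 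I must check that, after averaging, they combine exactly into $\ps{\alpha, \cL\cdot D\beta}$.
\end{itemize}
That last verification is what I expect to be the most error-prone point. It uses the hermiticity of $-i\na_K$ on $\Omega$-periodic functions, together with an integration by parts in the slow variable inside the averaged coefficients: the averaged coefficient $\ps{(-i\na_K)\psi_a,\psi_b}$ is a constant, so moving the macroscopic derivative from $\alpha$ onto $\beta$ creates no extra terms.
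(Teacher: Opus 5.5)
Your proposal is correct and follows the paper's proof: your Leibniz expansion of $H\elk \cJ\beta$ is exactly the paper's formula for $H\elk\pa{g(x)\vp\xep}$, including the factor $\ep$ on the cross term, so the bookkeeping hedge is unnecessary; applying the Fourier averaging lemma term by term, with $h_k$ in $\cM$ read as $h_K$ (which is what the paper's proof actually produces), is precisely how the paper concludes. Your $L^1$ version of the averaging lemma and your quadratic-form treatment of the second-order term are mild refinements: the paper states the lemma for $L^2$ microscopic functions via Cauchy--Schwarz and applies $H\elk$ directly, whereas your versions make the argument rigorous when $\psi_j \in H^1\per(\Omega)$ only, where products such as $\overline{\psi_a}\na\psi_b$ are merely in $L^1$.
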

A proof is given in Section~\ref{sec:proof of prop}. The eigenvalue equation for $\bbH\elk$ is then
\begin{align*}
\bbH\elk \alpha\elk =\bbE\elk \cS \alpha\elk,
\end{align*}
where the generalized eigenpairs of $\bbH\elk$ will be denoted by $\pa{\bbE\elk, \alpha\elk}$. When $\cF$ is well chosen, the operator $\bbH\elk$ spectrally well approximates the exact operator $H\elk$ in the following sense : if $V$ and $A$ are small enough and if $\pa{\bbE\elk, \alpha\elk}$ is a generalized eigenpair of $\bbH\elk$, then there exists an eigenpair $\pa{\cE\elk,\Phi\elk}$ of $H\elk$ such that
\begin{align*}
\ep^{-1} \pa{\cE\elk - \cE^{\ep}_0} \simeq \bbE\elk - \bbE^\ep_0, \qquad \qquad \Phi\elk \simeq \cJ \alpha\elk \textup{ up to a phase factor}.
\end{align*}
The maps $(k,V) \mapsto \cE\elk$ and $(k,V) \mapsto \bbE\elk$ are locally smooth away from eigenvalue crossings. So to detect which level of the effective model corresponds to which one of the exact model, we start by identifying the eigenmodes of $H^{\ep,V=0}_0$ and $\bbH^{\ep,V=0}_0$ (see Section~\ref{ssub:Exact eigenvectors}), which eigenvalues then vanish, and then we let $k$ and $V$ change smoothly to see the deformations of eigenvalues and keep track of them even at crossings.

\begin{remark}[Undo the Bloch transform]
When the coefficient matrices $\cM, \cL$ and $\cS$ do not depend on $k$, the operator in real space can be written as the inverse of the Bloch transform $\bbH^{\ep}  = e^{ikx} \bbH\elk e^{-ikx}$ and does not depend on $k$.
\end{remark}

\begin{remark}[Effective operators are defined for any $\ep >0$]
While the exact operator $H\elk$ is only defined for $\ep \in 1/\N$, the effective operators $\bbH\elk$ are defined for any $\ep > 0$. Moreover, the computation cost of the effective operators is essentially constant in $\ep$ while it was exponentially increasing with $1/\ep$ in the exact model.
\end{remark}

\subsection{First simplification}%
\label{sub:First simplification}

We recall the Pauli matrices
\begin{align*}
\sigma_1 := \mat{0 & 1 \\ 1 & 0}, \qquad \sigma_2 := \mat{0 & -i \\ i & 0}, \qquad \sigma_3 := \mat{1 & 0 \\ 0 & -1}.
\end{align*}
In our choices of $\cF$, for any $k$ we will always have $\psi_a = \w_a$ for $a \in \{1,2\}$, where $\w_a$ are the Bloch states at the point $K$, defined in Section~\ref{ssub:The microscopic exact problem}. We will also always have $\w_a \perp \psi_b$ for $a \in \{1,2\}$ and $b \in \{3,\dots,M\}$. Hence the operators $\cM$, $\cL$ and $\cS$ can be simplified as
\begin{align}\label{eq:M_L_S} 
	\cM = \mat{0 & 0 \\ 0 & M}, \qquad \cL = \mat{v\fer \sigma  & T  \\ T^*  & L}, \qquad \cS = \mat{\1 & 0 \\ 0 & S}.
\end{align}
where
\begin{align}\label{eq:def_matrices} 
	S &:= \pa{\ps{\psi_{a+2}, \psi_{b+2}}}_{1 \le a,b \le M-2}, \quad M := \pa{\ps{ \psi_{a+2}, \pa{h_k - E\fer} \psi_{b+2}}}_{1 \le a,b \le M-2} \nonumber \\
	T& := \pa{\ps{\psi_{a}, (-i\na_K) \psi_{b+2}}}_{\substack{1 \le a \le 2 \\  1 \le b \le M-2}}, L := \pa{\ps{\psi_{a+2}, (-i\na_K) \psi_{b+2}}}_{1 \le a,b \le M-2}.
\end{align}
Moreover, $v\fer \in \R_+$ is defined via $\ps{w_1, (-i\na_K) w_2} = v\fer \pa{1,  -i}^T$, see~\cite[Appendix I]{CanGarGon23b}.

\section{Particular effective models}%
\label{sec:Particular effective models}

From now on we take $A = 0$.

\subsection{Preliminary information about the unscaled system}%
\label{sub:Preliminary information about the unscaled system}

In this section, we present particular choices for the families $\cF$ defined in~\eqref{eq:families}, which will implement a multiscale variational perturbation theory.

\subsubsection{A family of Bloch eigenmodes with special symmetry}%
\label{ssub:A family of Bloch eigenmodes with special symmetry}

From~\cite{FefWei12,BerCom18}, the symmetries of $\w_a$ presented in Section~\ref{ssub:The microscopic exact problem} imply that
\begin{align}\label{eq:vF} 
	\ps{\w_a, (-i\na_K) \w_a} = 0, \; a \in \{1,2\}, \quad \qquad \ps{\w_1, (-i\na_K) \w_2} = v\fer \icol{1 \\ -i}.
\end{align}
Those computations are enough to derive the effective massless Dirac operator $v\fer \sigma \cdot (-i\na)$ from $-\f 12 \Delta + v$ where $v$ has honeycomb symmetry.

\subsubsection{Pseudo-inverse}%
\label{ssub:Pseudo-inverse}

Let us denote by $P$ the orthogonal projector onto the eigenspace $\Ker \pa{h_K-E\fer} = \Span\pa{w_1,w_2}$, and $P_\perp := 1-P$. An important quantity which we will need is the pseudo-inverse
\begin{align*}
	R := \left\{
	\begin{array}{ll}
	\pa{(E\fer - h_K)_{\mkern 1mu \vrule height 2ex\mkern2mu P_\perp L^2\per(\Omega) \rightarrow P_\perp L^2\per(\Omega)}}^{-1} & \mbox{on } P_\perp L^2\per(\Omega) \\
	0 & \mbox{on } P L^2\per(\Omega)
	\end{array}
	\right.
\end{align*}
extended on $L^2\per(\Omega)$ by linearity, so $(E\fer - h_K) R = R (E\fer - h_K) = P_\perp$.

\subsection{Variational perturbation theory : Dirac modes and their derivatives up to order $\ell$}%
\label{sub:Dirac modes and their derivatives}

The first way of building families $\cF$ comes from variational perturbation theory~\cite{NooLow74,FraHeIps18}, which was analyzed mathematically in~\cite{GarSta24}. It consists in building a variational space using the derivatives of the eigenfunctions. Here we build $\cF$ using $u^q_{m\fer}$ and their derivatives with respect to $q$. In Appendix~\ref{sec:Degenerate perturbation theory our problem} we present how one can differentiate those eigenstates with respect to $q$ at $q = K$.

For any $k \in \R^2 \backslash \{0\}$, we will use the differentiation operator in direction $\f{k}{\ab{k} } $ denoted by
\begin{align*}
\der_k := \tfrac{k}{\ab{k} } \cdot (-i\na_K ),
\end{align*}
where we recall that $\na_K = \na + iK$. We use the notation $\partial_{K,j} = \pa{\na_K}_j = \partial_j + iK_j$. Labeling the first index of $\cF$ by the perturbative order, we define, for any direction $k \in \R^2 \backslash \{0\}$,
\begin{align}\label{eq:cFs} 
	\cF_0 & := \pa{\w_1, \w_2 }, \qquad \cF_{1,k} := \pa{\w_1, \w_2, R \der_k \w_1, R \der_k \w_2}, \\
	\cF_1 &:= \pa{\w_1, \w_2, R (-i\partial_{K,1}) \w_1, R (-i\partial_{K,2}) \w_1, R (-i\partial_{K,1}) \w_2, R (-i\partial_{K,2}) \w_2,}, \nonumber\\
	\cF_{2,k} &:= \{\w_a, R \der_k \w_a,  (R \der_k)^2 \w_a - R^2 \der_k P \der_k \w_a \}_{1 \le a \le 2} \nonumber\\
	\cF_2 &:= \{\w_a, R (-i\partial_{K,b}) \w_a ,  R^2 (-i\partial_{K,b}) \w_a,  R (-i\partial_{K,b}) R (-i\partial_{K,c}) \w_a \}_{1 \le a,b,c \le 2}.\nonumber
\end{align}
Remark that the vectors of the families $\cF_{\ell,k}$ depend on $k$ only via the direction $\f{k}{\ab{k}} $. We have $\ab{\cF_0} = 2$, $\ab{\cF_{1,k}} = 4$, $\ab{\cF_1} = \ab{\cF_{2,k}} = 6$, $\ab{\cF_{2}} = 18$. As explained in Appendix~\ref{sec:Degenerate perturbation theory our problem}, each $\cF_{\ell}$ and $\cF_{\ell,k}$ contain the directional derivatives of $q \mapsto u^{q}_{m\fer}$ up to order $\ell$ at $q = K$, and Section~\ref{sec:Formal derivation of the perturbation series for the macroscopic functions} gives a justification of this choice. More precisely, those spaces are built such that for $\ell \in \{0,1,2\}$,
\begin{align*}
	u^{K + k}_{m\fer} = U + O_{\ab{k}\to 0 }\bpa{\ab{k}^{\ell +1} },\qquad  \text{ where } U \in \Span \cF_{\ell,k} \subset \Span \cF_\ell.
\end{align*}

Following the method presented in Section~\ref{sub:Building reduced spaces from perturbation theory}, one can similarly build families $\cF_{\ell,k}$ at all perturbative orders $\ell$.

\subsection{Order 0}

The family $\cF_0$ is the choice made in~\cite[Remark 1]{CanGarGon23b}. Then the effective model consists in a massless Dirac operator 
\begin{align*}
\bbH \elk = v\fer \sigma \cdot (-i\na_k) + V + \tfrac{1}{2} \ep (-i\na_k)^2.
\end{align*}



\subsection{Order 1, $k$-dependent}%
\label{sub:Order 1}

We want to compute the matrices $S$, $M$, $L$ and $T$ defined in~\eqref{eq:def_matrices} and which are sufficient to describe the effective operator.

For any $q = (q_1,q_2) \in \R^2$, we define $q_\C := q_1 + i q_2$. For any $k\in \cB \backslash \{0\}$, let us define $\theta_k \in [0,2\pi[$ such that $k_\C = \ab{k} e^{i\theta_k}$. Let us define
\begin{align}\label{eq:def_trs} 
\begin{array}{ll}
\tdr := \nor{R (-i\partial_{K,1}) w_1}{}^2 , & \tur := \ps{(-i\partial_{K,1}) w_1, R (-i\partial_{K,1}) w_1}, \\
 \rdr := \ps{R (-i\partial_{K,1}) w_1, R (-i\partial_{K,1}) w_2}, & \rur := \ps{(-i\partial_{K,1}) w_1, R (-i\partial_{K,1}) w_2}, \\
 \sdr := \ps{R (-i\partial_{K,2}) w_1, R (-i\partial_{K,1}) w_1}, & \sur := \ps{(-i\partial_{K,2}) w_1, R (-i\partial_{K,1}) w_1}, \\
\end{array}
\end{align}
and
\begin{align}\label{eq:def_cgg} 
	\begin{array}{l}
 \chir := \ps{R (-i\partial_{K,1}) w_1, (-i\partial_{K,1}) R (-i\partial_{K,1}) w_1},  \\
 \gaur := \ps{R (-i\partial_{K,1}) w_1, (-i\partial_{K,2}) R (-i\partial_{K,2}) w_2},  \\
 \gadr := \ps{R (-i\partial_{K,2}) w_1, (-i\partial_{K,1}) R (-i\partial_{K,2}) w_2}.
\end{array}
\end{align}

\begin{proposition}[Effective matrices for $\cF_{1,k}$]\label{prop:order1_kdep} 
	In the case $\cF_{1,k}$, we have that $\rdr, \tur ,\rur, \sdr, \sur , \chir, \gaur, \gadr\in \R$ and the matrices defined in~\eqref{eq:def_matrices} are
\begin{align}\label{eq:expr_SM_k} 
	S= \mat{\tdr & \rdr e^{i 2\theta_k} \\ \rdr e^{-i 2\theta_k} & \tdr}, \qquad M = -\mat{\tur & \rur e^{i 2\theta_k} \\ \rur e^{-i 2\theta_k} & \tur}
\end{align}

\begin{align}\label{eq:expr_T_k} 
T = \mat{\pa{\tur \1 - \sur \sigma_2} \tfrac{k}{\ab{k} } & \rur e^{i\theta_k} {\tiny{\mat{1 \\ i}}} \\ \rur e^{-i\theta_k} {\tiny{\mat{1 \\ -i}}} & \pa{\tur \1 + \sur \sigma_2} \tfrac{k}{\ab{k}} },
\end{align}
and using $\sigma = (\sigma_1,\sigma_2)$,
\begin{align}\label{eq:expr_L_k} 
	L = \mat{      
\chir \mat{\cos(2 \theta_k) \\ -\sin (2\theta_k)}
&
2 \gaur e^{-i\theta_k} \mat{ \cos \theta_k \\  \sin \theta_k}
\\
2 \gaur e^{i\theta_k}\mat{\cos \theta_k \\  \sin \theta_k}
&
\chir \mat{\cos(2 \theta_k) \\ -\sin (2\theta_k)}
	} + \gadr \sigma.
\end{align}
\end{proposition}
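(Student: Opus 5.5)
The plan is to reduce everything to a handful of scalar integrals by using the honeycomb symmetries of $w_1,w_2$ together with the commutation of $R$ with those symmetries. Write $\der_k = \cos\theta_k\,(-i\partial_{K,1}) + \sin\theta_k\,(-i\partial_{K,2})$. It is convenient to use the complex combinations
\[
D_\pm := (-i\partial_{K,1}) \pm i(-i\partial_{K,2}), \qquad \der_k = \tfrac12\bpa{e^{-i\theta_k} D_+ + e^{i\theta_k} D_-}.
\]
The rotation symmetry $\cR_{2\pi/3}$ commutes with $h_K$, with $P$, and hence with $R$. It acts on $D_\pm$ by multiplication by $\omega^{\pm1}$; the exact power is to be fixed carefully. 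By \eqref{eq:syms_phi}, $w_a$ carries the charge $\omega^a$. Consequently $R D_\pm w_a$ has charge $\omega^{a\pm1}$. Any inner product $\ps{f,g}$ between functions of different charge vanishes, and so does $\ps{f, D_\pm g}$ unless the charges match.

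First, I expand the entries of $S$ and $M$, i.e. $\ps{R\der_k w_a, R\der_k w_b}$ and $\ps{R\der_k w_a,(h_K-E\fer)R\der_k w_b}$. For the latter I use $(h_K - E\fer)R = -P_\perp$ and $P\der_k w_a$ to reduce it to $-\ps{\der_k w_a, R\der_k w_b}$. Here I note that $M$ is defined with $h_k$; I take it to mean $h_K$ at the microscale, as in Proposition~\ref{prop:weak_conv}'s cell problem. Expanding both entries in $D_\pm$ and applying the charge selection leaves only a few nonzero integrals.

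Next, I use the remaining symmetries to identify those integrals with the real constants $\tdr,\rdr,\tur,\rur,\sdr,\sur$. These are the inversion $v(-x)=v(x)$ combined with complex conjugation (a $\mathcal{PT}$-type antiunitary symmetry), and the reflection $x_2\mapsto -x_2$. They exchange or fix $w_1,w_2$ up to phases, which I fix using the normalization of $v\fer$ in \eqref{eq:vF}. They also map each such integral to its complex conjugate, which gives reality. The diagonal entries become $\tdr$ and $-\tur$. The off-diagonal ones pick up $e^{\pm 2i\theta_k}$, because only the $D_+$-$D_+$ (resp. $D_-$-$D_-$) pairing survives between $w_1$ and $w_2$.

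I then treat $T$ the same way. Its entries $\ps{w_a,(-i\na_K)R\der_k w_b}$ are $\C^2$-valued, so I decompose $-i\na_K$ into $D_\pm$ as well. Self-adjointness of $R$ and the identity $\ps{w_a,(-i\partial_{K,j})Rf} = \ps{(-i\partial_{K,j})w_a, Rf}$ bring these entries back to $\tur,\rur,\sur$. The $\sigma_2$ term arises from the antisymmetric part $\ps{(-i\partial_{K,2})w_1,R(-i\partial_{K,1})w_1}$.

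Finally, for $L$ the entries are three-operator integrals $\ps{R\der_k w_a,(-i\na_K)R\der_k w_b}$. After the $D_\pm$ expansion and the charge rule, these reduce to $\chir,\gaur,\gadr$. This last step is the main obstacle. The bookkeeping of which $D_\pm$ combinations survive produces the $\cos 2\theta_k$, $\sin 2\theta_k$ and $e^{\mp i\theta_k}(\cos\theta_k,\sin\theta_k)$ structures. One must also show that the isotropic part combines into $\gadr\sigma$, which will need an extra identity relating $\ps{R(-i\partial_{K,1})w_1,(-i\partial_{K,2})R(-i\partial_{K,2})w_2}$-type integrals to each other via the reflection symmetry.
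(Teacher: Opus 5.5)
Your $D_\pm$ (chiral) bookkeeping is a legitimate and lighter alternative to the paper's Cartesian route. The paper classifies the $2\times2$ blocks with Lemma~\ref{lem:rotM} and the three-derivative tensors with the $8\times8$ system $(W-8\omega^{a-b})F^{ab}=0$. Diagonalizing the $C_3$ action instead turns the vanishing pattern and the phases $e^{\pm2i\theta_k}$, $e^{\pm i\theta_k}$ into a charge count. For $S$, $M$, $T$ this goes through, up to the corrections listed at the end.

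\textbf{The gap is in the $L$ step: the identity you say is needed cannot come from the reflection.} After the charge rule, a diagonal entry of $L$ involves two amplitudes, $\ps{RD_-w_1,D_+RD_+w_1}$ and $\ps{RD_+w_1,D_-RD_-w_1}$. The off-diagonal entry involves three amplitudes $G^{12}_{\alpha\beta\gamma}:=\ps{RD_\alpha w_1,D_\beta RD_\gamma w_2}$, with $(\alpha,\beta,\gamma)\in\{(-,+,-),(+,+,+),(-,-,+)\}$. Reflection and $\cP\cC$ only send $G^{ab}_{\alpha\beta\gamma}$ to $G^{\underline a\underline b}_{-\alpha,-\beta,-\gamma}$ or its conjugate. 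This makes the amplitudes real, but it never moves an operator to another slot, so it cannot relate two different surviving amplitudes.

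That is why Lemma~\ref{lem:three_derivatives}, for general symmetric $Y_1,Y_2,Y_3$, keeps five free real constants $\chi_1,\chi_2,\gamma_1,\gamma_2,\gamma_3$. A direct expansion gives $\ps{R\der_kw_1,(-i\na_K)R\der_kw_2}=(\gamma_1+\gamma_3)e^{-i\theta_k}(\cos\theta_k,\sin\theta_k)^T+\gamma_2(1,-i)^T$. It also gives $\ps{R\der_kw_1,(-i\na_K)R\der_kw_1}=\chi_1(\cos2\theta_k,-\sin2\theta_k)^T-i\chi_2(\sin2\theta_k,\cos2\theta_k)^T$. The stated form, with $\gaur:=F^{12}_{122}$, therefore needs $\gamma_3=\gamma_1$ and $\chi_2=0$. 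In your variables this means $G^{12}_{-+-}=G^{12}_{+++}$ and equality of the two diagonal amplitudes.

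Both identities come from the special structure $Y_1=Y_3=R$, $Y_2=1$, i.e. from self-adjointness. Since $D_+^*=D_-$ and the same $R$ occupies the first and third slots, $G^{ab}_{\alpha\beta\gamma}=\overline{G^{ba}_{\gamma,-\beta,\alpha}}$, which swaps the outer slots. Only then do reflection and reality close the loop, as in the paper's $F^{12}_{221}=\overline{F^{21}_{122}}=\gamma_1$. Likewise $\chi_2=0$ because $F^{11}_{121}$ is an expectation value of the self-adjoint $-i\partial_{K,2}$, yet it lies in $i\R$. You invoke self-adjointness for $T$; it is exactly the missing ingredient for $L$.

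\textbf{Smaller corrections.} (i) $\cR_{2\pi/3}$ does not commute with $h_K$ on $L^2_{\rm per}(\Omega)$: it conjugates $h_K$ into $h_{R_{2\pi/3}K}$, and $w_a$ is not a rotation eigenfunction there. The charge argument must be run on $\phi_a=e^{iKx}w_a$ and $Z=e^{iKx}Re^{-iKx}$ in $L^2_K(\Omega)$. One must first check that the symmetries preserve $L^2_K(\Omega)$ and commute with $Z$, as the paper does before applying Lemma~\ref{lem:D}.

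(ii) With $(\cR_\theta f)(x)=f(R_{-\theta}x)$, the function $e^{iKx}D_\pm w_a$ carries charge $\omega^{a\mp1}$, not $\omega^{a\pm1}$. With your sign, the phase of $S_{12}$ would come out as $e^{-2i\theta_k}$. Moreover, the surviving term in $S_{12}$ is the mixed pairing $\ps{RD_+w_1,RD_-w_2}$, with prefactor $\overline{e^{-i\theta_k}}\,e^{i\theta_k}/4=e^{2i\theta_k}/4$. A $D_+$--$D_+$ pairing would carry no phase at all.

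(iii) The antiunitary product of reflection and $\cP\cC$ maps such an integral to $(-1)^{n_2}$ times its conjugate, where $n_2$ is the number of factors $-i\partial_{K,2}$; it does not simply map it to its conjugate. So an integral like $\ps{R(-i\partial_{K,2})w_1,R(-i\partial_{K,1})w_1}$ is purely imaginary. What is real is the coefficient $s$ of $\sigma_2$ in Lemma~\ref{lem:D} (that entry equals $is$), and this is how $\sdr,\sur$ are actually used in the matrices.
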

The computations are provided in Section~\ref{sub:Matrices in the case of k}. A drawback is that the effective matrices are not defined for $k = 0$.

\subsection{Order 1, $k$-independent}%
\label{sub:Order 1, independent on k}

We choose $\cF_1$, its vectors in the order presented in~\eqref{eq:cFs}, that is $\psi_1 = \w_1$, $\psi_2 = \w_2$, $\psi_3 = R (-i\partial_{K,1}) \w_1$, $\psi_4 = R (-i\partial_{K,2}) \w_1$, $\psi_5 = R (-i\partial_{K,1}) \w_2$, $\psi_6 = R (-i\partial_{K,2}) \w_2$.

\begin{proposition}[Effective matrices for $\cF_{1}$]\label{prop:order1_kindep} 
In the case $\cF_1$, with the same constants as in~\eqref{eq:def_trs} and~\eqref{eq:def_cgg}, which are still real, the matrices defined in~\eqref{eq:def_matrices} are
\begin{align*}
 S = \mat{\tdr \1 + \sdr \sigma_2 & \rdr D \\ \rdr D^* & \tdr \1 - \sdr \sigma_2}, \qquad M = -\mat{\tur \1 + \sur \sigma_2 & \rur D \\ \rur D^* & \tur \1 - \sur \sigma_2}
\end{align*}
and $T = (T^1,T^2)$ and $L = (L^1,L^2)$ with
\begin{align*}
T^1 = \mat{ \tur & -i\sur & \rur & i \rur \\ \rur & i\rur & \tur & i\sur}, \qquad T^2 =  \mat{i\sur & \tur & i\rur & -\rur \\ i\rur & -\rur & -i\sur & \tur}
\end{align*}
and
\begin{align*}
	L^1 = \icol{
		\chir & 0 & 2\gaur + \gadr & -i\gaur \\
		0 & -\chir & -i \gaur & \gadr \\
		2 \gaur + \gadr & i\gaur & \chir & 0 \\
		i\gaur & \gadr & 0 & -\chir
	}, \qquad 
L^2 = \icol{
		0 & -\chir & -i\gadr & \gaur \\
		-\chir & 0 & \gaur & -i\pa{2\gaur + \gadr} \\
		i \gadr & \gaur & 0 & -\chir \\
		\gaur & i \pa{2\gaur + \gadr} & -\chir & 0
}.
\end{align*}
\end{proposition}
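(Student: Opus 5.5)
Our approach is to compute every entry of $S$, $M$, $T$, $L$ in~\eqref{eq:def_matrices} for $\cF_1$ and to reduce them to the few constants of~\eqref{eq:def_trs}--\eqref{eq:def_cgg} using only the symmetries of $v$ in~\eqref{eq:honeycomb}. We use two consequences of these symmetries. First, the rotation $\cR := \cR_{2\pi/3}$ commutes with $h_K$ (after conjugation by $e^{iKx}$, which is how~\eqref{eq:syms_phi} is phrased), hence with $P$ and $R$, while $-i\na_K$ transforms as a vector: $\cR (-i\na_K) \cR^{-1} = R_{2\pi/3}^{-1}(-i\na_K)$ up to convention. In complex coordinates $\partial_\pm := -i\partial_{K,1} \pm \partial_{K,2}$ this means $\partial_\pm$ shifts the $\cR$-eigenvalue by $\omega^{\pm1}$. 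Second, an antiunitary symmetry (a combination of complex conjugation, $x\mapsto -x$ and the reflection $x_2\mapsto -x_2$, which maps $K$ to $K$ up to $\bbL^*$) fixes or exchanges $w_1, w_2$ with suitable phases. As for~\eqref{eq:vF}, we take these phases from~\cite{FefWei12,BerCom18}, and this is what makes all constants real.

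The first step is to rewrite the basis vectors $\psi_3,\dots,\psi_6$ in the chiral basis $R\partial_\pm w_a$. Because $R$ commutes with $\cR$, the vector $R\partial_\pm w_a$ has $\cR$-eigenvalue $\omega^{a\pm1}$. All inner products between vectors with distinct eigenvalues then vanish. Among the four vectors, $R\partial_+ w_1$ and $R\partial_- w_2$ both have eigenvalue $\omega^2=\omega^{-1}$, and $R\partial_- w_1$ has eigenvalue $1$, as does $R\partial_+ w_2$ (since $\omega^3=1$). So in the chiral basis the Gram matrix $S$ and the matrix $M=-\big(\ps{\partial w_a, R\,\partial w_b}\big)$ are block sparse, with few independent entries. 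Here we use $(h_K-E\fer)R = -P_\perp$ and $P_\perp \partial w_a$ to see that $M$ is minus the "$R$-weighted" version of $S$.

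The second step is to transform back to Cartesian components via $-i\partial_{K,1} = \tfrac12(\partial_++\partial_-)$ and $-i\partial_{K,2} = \tfrac{1}{2i}(\partial_+-\partial_-)$. This yields $S$ in terms of $\tdr=\|R(-i\partial_{K,1})w_1\|^2$, $\sdr$ and $\rdr$, with the off-diagonal block $D$ coming from the $w_1$–$w_2$ coupling. The same computation with one $R$ removed gives $M$ with $\tur,\sur,\rur$. The antiunitary symmetry then shows that these numbers are real and fixes the relative signs, producing the $\pm\sdr\sigma_2$ structure.

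For $T$, we use $\ps{w_a,(-i\partial_{K,j})R(-i\partial_{K,l})w_b}$, since $P w_a = w_a$. This is again one of the primed constants, and the same selection rules give $T^1,T^2$.

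The main obstacle is $L$, whose entries $\ps{R\partial w_a,(-i\partial_{K,j})R\partial w_b}$ are third-order quantities. The selection rule requires the eigenvalue of the ket times the shift $\omega^{\pm1}$ to equal the eigenvalue of the bra. This leaves a handful of chiral entries. We would express them through $\chir,\gaur,\gadr$, using hermiticity of $-i\na_K$ and the reflection symmetry to identify, for example, the entry $\ps{R\partial_1 w_1,\partial_1 R\partial_1 w_2}$ with the combination $2\gaur+\gadr$. This combination arises from the decomposition of a rank-three tensor invariant under $C_{3}$ with reflections. The bookkeeping of the factors $\tfrac12$, $i$ and the phase conventions for $w_2$ versus $w_1$ is where errors would most likely occur, so we would double-check it by verifying hermiticity of the resulting $L^1$, $L^2$.
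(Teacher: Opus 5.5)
Your route differs from the paper's. The paper solves $R_{\f{2\pi}3}MR_{-\f{2\pi}3}=yM$ (Lemma~\ref{lem:rotM}, fed into Lemma~\ref{lem:D}) and the $8\times 8$ systems $(W-8\omega^{a-b})F^{ab}=0$ of Lemma~\ref{lem:three_derivatives}. You instead diagonalize the rotation on the gradient with $\partial_\pm$, which turns those linear systems into one-line selection rules and explains the parameter counts. The paper's brute-force systems are mechanical but immune to sign bookkeeping; your route is more conceptual. The strategy is legitimate, but the selection-rule table at its core is wrong, and as written it does not give the stated forms.

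Write $\cR:=\cR_{\f{2\pi}3}$ as you do. From the law you state yourself, $\cR\na\cR^{-1}=R_{2\pi/3}^{-1}\na$, the operator $\partial_+=-i(\partial_1+i\partial_2)$ acting on $\phi_a=e^{iKx}w_a$ satisfies $\cR\partial_+=\omega^{-1}\partial_+\cR$, and $\cR\partial_-=\omega\,\partial_-\cR$. This is the opposite of your ``$\omega^{\pm1}$''. Hence $R\partial_+w_1$ and $R\partial_-w_2$ have eigenvalue $1$, while $R\partial_-w_1$ has $\omega^2$ and $R\partial_+w_2$ has $\omega$. In any convention the four eigenvalues $\omega^{a\pm1}$ form the multiset $\{1,1,\omega,\omega^2\}$, so exactly one pair can couple. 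Your table lists two coupled pairs and gives the first one the wrong eigenvalue.

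This matters. Set $f_j:=R(-i\partial_{K,j})w_1$ and $g_j:=R(-i\partial_{K,j})w_2$. Your table allows $\ps{f_n,g_m}=rD_{nm}+r''D^*_{nm}$, and one computes $\ps{R\partial_+w_1,R\partial_-w_2}=4r$ and $\ps{R\partial_-w_1,R\partial_+w_2}=4r''$. The vanishing of the second pairing, which your table does not give, is exactly what makes the $w_1$--$w_2$ blocks of $S$, $M$, $T$ multiples of $D$. The symmetries $\cC\cP$ and $\fR$ only force $r''\in\R$ and cannot remove it.

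For the same reason your ``up to convention'' hedge is not harmless. The $\omega^{\pm1}$ shift, used consistently, would produce $D^*$ instead of $D$ and conjugate the $i$'s in $T^j$ and $L^j$. The direction must therefore be fixed from $\cR_\theta f=f(R_{-\theta}\,\cdot)$ and $\cR_{\f{2\pi}3}\phi_a=\omega^a\phi_a$.

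Once corrected, the method also handles $L$ cleanly. With $G_{\sigma\tau\rho}:=\ps{R\partial_\sigma w_a,\partial_\tau R\partial_\rho w_b}$, only $G_{-++},G_{+--}$ survive for $a=b$, and only $G_{+++},G_{-+-},G_{--+}$ for $(a,b)=(1,2)$. These match $\chi_1,\chi_2$ and $\gamma_1,\gamma_2,\gamma_3$, and give $F^{12}_{111}=\gamma_1+\gamma_2+\gamma_3$ at once. Hermiticity then gives $\chi_2=0$ and $\gamma_1=\gamma_3$, as in the paper.

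Two further points. First, a single antiunitary symmetry cannot do everything you assign to it. One fixing each $w_a$ (e.g.\ $\cC\cP\fR$) gives reality. But the equal $\tdr$ and opposite signs of $\sdr$ in the two diagonal blocks require a symmetry exchanging $w_1$ and $w_2$, which is why the paper uses $\cC\cP$ and $\fR$ separately. Second, the correct computation gives $\ps{(-i\partial_{K,n})w_2,R(-i\partial_{K,m})w_1}=\rur D^*_{nm}$, as in Lemma~\ref{lem:D}. So the second rows of $T^1$ and $T^2$ should begin with $(\rur,-i\rur)$ and $(-i\rur,-\rur)$, consistent with $T_{21}$ in Proposition~\ref{prop:order1_kdep}. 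The $+i$'s displayed there in the statement appear to be a misprint, so do not try to reproduce them.
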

The proof is similar to the one of Proposition~\ref{prop:order1_kdep} and uses Lemmas~\ref{lem:D} and~\ref{lem:three_derivatives}.

\subsection{Values of the parameters for graphene}%
\label{sub:Values of the parameters for graphene}

. We computed the values of those effective parameters with the software DFTK~\cite{DFTKjcon}, which uses DFT to obtain the vectors $u_m^K$, we took the case of physical standard graphene. This provides the following values, which are expressed in electronvolt
\begin{align*}
	\rdr = -30 \text{ eV}, \qquad \tdr &= 32 \textup{ eV}, \qquad \sdr = 29 \textup{ eV}, \qquad \\
	\rur = 13 \text{ eV}, \qquad \tur &= -15 \textup{ eV}, \qquad \sur = -11 \textup{ eV}, \qquad \\
	\chir \simeq -2 \cdot 10^{-2} \textup{ eV}, \qquad &\gaur = -1.2 \textup{ eV}, \qquad \gadr= 0.6 \text{ eV}
\end{align*}
This numerical value of $\chi$ is not very precise. With our tool, we also obtain $v\fer = 11 \textup{ eV}$.




\subsection{Using excited states}%
\label{sub:Variational space from excited states}

We can also form families $\cF$ of microscopic functions from Bloch eigenstates at the Dirac point. For $p \in \N$, $p \ge 3$, we will study
\begin{align}\label{eq:cF_exc} 
	\cF^{n} := \pa{u_a^K}_{a \in \{m\fer , \dots, m\fer + n-1\}}.
\end{align}
We have $\ab{\cF^n} = n$ and remark that $\cF^2 = \cF_0$. We recall that the usual Bloch eigenfunctions studied in~\cite{FefWei12} are $w_a = u_{m\fer+a-1}^K$ for $a \in \{1,2\}$. We define $w_3 := u_{m\fer +2}^K$ and we only explicitly write the case $\cF^3 = \{w_1,w_2,w_3\}$. In general, the energy $E^3 := E^K_{m\fer + 2} = \ps{w_3, h_K w_3}$ is different to the Fermi energy $E\fer$, and $\phi_3(x) := e^{iKx} \w_3(x)$ can be defined as in~\eqref{eq:def_phi} and let us assume that $\phi_3$ is in a different representation of $\cR_{\f{2\pi}{3}}$ than $\phi_1$ and $\phi_2$ so assume that $\cR_{\f{2\pi}{3}} \phi_3 = \phi_3$. The only additional parameter is $\vpp := -i \ps{w_1, (-i\partial_{K,1}) w_3}$.

We define the parity and complex conjugation operators $\pa{\cP f}(x) := f(-x)$, $\pa{\cC f}(x) := \overline{f}(x)$. 

\begin{proposition}[Effective matrices for $\cF^3$]\label{prop:excited} 
In the case $\cF^3$ and assuming that $\cC \cP \phi_3 = \phi_3$, the matrices defined in~\eqref{eq:def_matrices} are $S=1$, $M = E^3 - E\fer$, $L = 0$ and
\begin{align*}
	T = -\vpp \mat{\icol{ -i \\ 1 } \\ \icol{ i \\ 1 }},
\end{align*}
where $\vpp  \in \R$.
\end{proposition}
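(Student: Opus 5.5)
The plan is to compute the three blocks of~\eqref{eq:def_matrices} for $\cF^3$ directly: here $M-2=1$ and $\psi_3=\w_3$, and every entry follows from normalization, the eigenvalue equation, and the symmetries of $\phi_1,\phi_2,\phi_3$.

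First, the two trivial blocks. Since $\nor{\w_3}{}=1$, we have $S=\ps{\w_3,\w_3}=1$. Reading $h_k$ in the definition of $M$ at the base point $K$ (as in the first simplification, where $\psi_a=\w_a$ is an eigenvector of $h_K$), we get $M=\ps{\w_3,(h_K-E\fer)\w_3}=E^3-E\fer$ by~\eqref{eq:schro_V0}. We also need $\w_3\perp \w_1,\w_2$, which holds because these are distinct eigenvectors of $h_K$ (distinct energies, or distinct rotation representations). This places us in the block form~\eqref{eq:M_L_S}.

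Next, $L$ and the shape of $T$ come from rotation covariance. Conjugating by $e^{iKx}$ turns $-i\na_K$ acting on $\w_b$ into $-i\na$ acting on $\phi_b$. Hence
\[
L=\ps{\phi_3,(-i\na)\phi_3}, \qquad T_a=\ps{\phi_a,(-i\na)\phi_3},
\]
with integrals over a cell that is invariant under the relevant symmetries. The rotation is unitary and satisfies $\cR_\theta^{*}(-i\na)\cR_\theta = R_{-\theta}(-i\na)$ (up to the sign convention in the vector index). Combining this with $\cR_{2\pi/3}\phi_a=\omega^a\phi_a$ from~\eqref{eq:syms_phi} and $\cR_{2\pi/3}\phi_3=\phi_3$, any vector $c=\ps{\phi_a,(-i\na)\phi_b}\in\C^2$ satisfies $R_{2\pi/3}c=\omega^{b-a}c$, up to the orientation convention.
\begin{itemize}
\item For $a=b=3$, the vector $c$ is fixed by a nontrivial rotation, so $c=0$, i.e.\ $L=0$.
\item For $a=1,2$, the vector $c$ is an eigenvector of $R_{2\pi/3}$ with eigenvalue $\omega^{\mp1}$, so it is proportional to $(1,\pm i)^T$. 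This gives $T_1\propto(-i,1)^T$ and $T_2\propto(i,1)^T$, with the signs fixed by checking against the known case~\eqref{eq:vF}.
\end{itemize}
This is the same mechanism that yields~\eqref{eq:vF}.

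To relate the two proportionality constants, use $\cC\cP$. This operator is antiunitary and commutes with $-i\na$, because $\cP$ flips the sign of $\na$ and $\cC$ flips the sign of $i$. Therefore
\[
\ps{\cC\cP f,(-i\na)\cC\cP g}=\overline{\ps{f,(-i\na)g}}.
\]
With the convention of~\cite{FefWei12} that $\phi_2=\cC\cP\phi_1$ (compatible with $v\fer>0$ in~\eqref{eq:vF}) and the hypothesis $\cC\cP\phi_3=\phi_3$, we get $T_2=\overline{T_1}$. Writing $T_1=-\vpp(-i,1)^T$ is consistent with the definition $\vpp=-i\ps{\w_1,(-i\partial_{K,1})\w_3}$, and it yields the stated form of $T$ once $\vpp\in\R$.

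The main obstacle is proving $\vpp\in\R$. The phase of $\w_3$ is fixed only up to a sign by $\cC\cP\phi_3=\phi_3$, and $T_2=\overline{T_1}$ alone does not force the coefficient to be real. The first thing to try is the remaining honeycomb symmetry $v(x_1,-x_2)=v(x_1,x_2)$ from~\eqref{eq:honeycomb}. The reflection $\cR_\sigma f(x_1,x_2)=f(x_1,-x_2)$ maps the $K$-eigenspace to itself, and possibly $\phi_1\leftrightarrow\phi_2$ after composing with $\cC$ or $\cP$. Sandwiching $(-i\partial_1)$, which commutes with this reflection, between $\phi_1$ and $\phi_3$, and using that $\phi_3$ is invariant under the corresponding symmetry, should force $\ps{\phi_1,(-i\partial_1)\phi_3}$ to be purely imaginary, i.e.\ $\vpp\in\R$. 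If the action of the reflection on $\phi_3$ is not determined by the hypotheses, I would fix the phase of $\w_3$ accordingly, exactly as the phases of $\w_1,\w_2$ are fixed to make $v\fer$ real.
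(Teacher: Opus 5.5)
Your computation of $S$, $M$, $L=0$, the rotational shape of $T$, and the relation $T_2=\overline{T_1}$ from $\cC\cP$ is exactly the paper's argument. The paper phrases the last relation as $\vpp=-\overline{c}$, where $\ps{\phi_2,(-i\na)\phi_3}=ic\icol{1\\ -i}$.

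The gap is the last step, $\vpp\in\R$, which you leave at ``should force''. This step is precisely what separates $T_2=\overline{T_1}$ from the stated form of $T$. Carry out your reflection idea with $\fR\phi_1=(-1)^\eta\phi_2$, $\fR\phi_3=\lambda\phi_3$ and $\fR\na=\sigma_3\na\fR$. Here $\lambda=\pm1$, because $\fR$ is an involution commuting with $h$ that preserves the $\cR_{2\pi/3}$-invariant sector, assuming the level $E^3$ is simple. You then get
\begin{align*}
T_1=(-1)^\eta\lambda\,\sigma_3 T_2 ,\qquad\text{hence, with } T_2=\overline{T_1} \text{ and } T_1=-\vpp\icol{-i\\ 1},\qquad \vpp=-(-1)^\eta\lambda\,\overline{\vpp}.
\end{align*}
So $\vpp\in\R$ if $(-1)^\eta\lambda=-1$, but $\vpp\in i\R$ if $(-1)^\eta\lambda=+1$. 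Your sketch takes $\lambda=1$ and never determines $\eta$, so it does not decide which case occurs.

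Your fallback of re-phasing $\w_3$ is not available, for three reasons:
\begin{itemize}
\item The hypothesis $\cC\cP\phi_3=\phi_3$ already fixes $\phi_3$ up to a sign, and a sign change cannot make $\vpp$ real.
\item $\lambda$ does not depend on any phase, since $\fR$ is linear.
\item $(-1)^\eta$ is pinned by $v\fer>0$ together with $\cC\cP\phi_1=\phi_2$. The residual freedom $\phi_1\to e^{i\alpha}\phi_1$, $\phi_2\to e^{-i\alpha}\phi_2$ must satisfy $e^{2i\alpha}=1$, which leaves $\fR\phi_1=(-1)^\eta\phi_2$ unchanged.
\end{itemize}
Thus $(-1)^\eta\lambda$ is a property of the potential. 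What is missing is a genuine extra input, the reflection parity of $\phi_3$, which is not among the stated hypotheses.

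The paper supplies this input by quoting from~\cite{FefWei12} that $(-1)^\eta\fR\phi_3=\phi_3$. It then performs this same $\fR$ computation to get $\vpp=-c$, which together with $\vpp=-\overline{c}$ gives reality. When you fill in this step, check the sign yourself rather than copying it. The paper's chain writes a factor $-\sigma_3(-1)^\eta$ where $\fR(-i\na)=\sigma_3(-i\na)\fR$ alone gives $(-1)^\eta\sigma_3$. With the parity as quoted, i.e.\ $\lambda=(-1)^\eta$, the identity above gives $(-1)^\eta\lambda=+1$, which is the purely imaginary case.
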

The computations are provided in Section~\ref{sub:Case of excited states}. We see that one advantage of this kind of choice is that the effective matrices are simple and involve few constants to compute.

\subsection{Exact modes at each $k$}
\label{sub:exact each k}

Another natural choice is to take the family $\cF = \{u^k_{m\fer}, u^k_{m\fer +1}\}$, it has the benefit to be an exact choice when $V=0$, so the eigenmodes of the effective operators give some of the exact ones. But the drawback is that in this case one needs to solve $h_k$ at each $k$ in the Brillouin zone. We will not explore this choice in the numerical investigation.


\subsection{Error and limit of high $M$}%
\label{sub:Limit of high}

Let us denote by $P_{\cJ}$ the orthogonal projection onto $\im \cJ$, and $P_{\cJ}^\perp := 1 - P_{\cJ}$. We recall that we denoted by $\Phi\elk$ the eigenvector of the exact operator $H\elk$ defined in~\eqref{eq:exact_op}. As presented in\cite[Chapter II, Section 8, (8.46)]{BabOsb91} or in~\cite[Proposition 3.2]{GarSta24}, the error between $\Phi\elk$ and the aproximation $\cJ \alpha\elk$ (produced by our effective operators) is controlled by $P_\cJ^\perp \Phi\elk$. Hence, usually, the larger $\im \cJ$ the smaller $P_\cJ^\perp \Phi\elk$ and the smaller the error $\nor{\Phi\elk - \cJ \alpha\elk}{L^2\per(\Omega)}$. So making $\cF$ larger improves the approximation but produces effective operators that are more expensive to deal with.

When $M$ is large, one can expect that formally, $\im \cJ$ becomes close to
\begin{align*}
	 &\acs{ \begingroup\textstyle\sum\endgroup_{j=1}^{M}  \alpha_j(x) \vp_j\xep \;|\; \alpha_j \in \cC^{\infty}\per(\Omega), \vp_j \in H^1\per(\Omega) , j\in \{1, \dots, M\}} \\
	&\qquad = \cC^{\infty}\per(\Omega) \otimes H^1\per(\Omega),
\end{align*}
 and all approximations become equivalent. Hence in the larger $M$ limit, using excited states or derivatives no longer matters.

\subsection{Schur reduction to a $2\times 2$ matrix-valued operator}%
\label{sub:Reduction to a matrix}

We recall the formalism of Schur's reduction in Appendix~\ref{sec:Schur complement}. We formally apply it to the effective operator $\bbH\elk$, where the Schur projection $P_2$ is the projection onto the first two components of $\C^M$, the computations are detailed in Section~\ref{sub:Schur complement applied here}. We define the $2 \times 2$ matrix-valued operator
\begin{multline}\label{eq:schur_operator} 
\cH\elk :=\1 \otimes V + v\fer \sigma \otimes \cdot (-i\na_k) \\
+ \ep \1 \otimes \pa{\tfrac{1}{2} (-i\na_k)^2 - T \cdot(-i\na_k) (M^{-1} T^*) \cdot(-i\na_k) },
\end{multline}
acting on $\C^2 \otimes L^2\per(\Omega)$, where more precisely, for any $i \in \{1,2\}$ and any $\p \in \C^2 \otimes L^2\per(\Omega)$,
\begin{multline*}
\pa{T \cdot(-i\na_k) (M^{-1} T^*) \cdot(-i\na_k) \psi}_i \\
	= \sum_{\substack{1 \le a,b,j \le 2}} \pa{ T^a M^{-1} \bpa{T^b}^*}_{ij} \pa{-i\na_k}_a \pa{-i\na_k}_b \psi_j.
\end{multline*}
The operator $\cH\elk$ is similar to the one derived in~\cite{CanGarGon23b}, but is has a corrected second-order differential operator. It is the operator $\bbH\elk$ on which we applied a Schur reduction, as explained in Appendix~\ref{sec:Schur complement}. 
Hence we ``reduced'' the operator $\bbH\elk$, which had divergences, to $\cH\elk$, which has no divergence. A drawback of the Schur reductions is that one needs to apply an operator to the eigenvectors of $\cH\elk$ to obtain an approximation of the exact eigenvectors, see Appendix~\ref{sec:Schur complement}.

\section{Simulations}%
\label{sec:Simulations}

In this section we are going to explore the way band diagrams vary with change of parameters, and the errors between the effective and exact models. The simulations are done using DFTK~\cite{DFTKjcon} but not taking the physical graphene units.

\subsection{The parameters}%
\label{sub:The parameters}

The lattice constant used in~\eqref{eq:def_lattice} will always be $a_0 = 5$. We will use the notation $ma^* := m_1 a_1^* + m_2 a_2^*$. We define $m^1 := \pa{1,0}$, $m^2 := \pa{0 \\ -1}$, $m^3 := \pa{-1,1}$ and define the honeycomb potential
\begin{align*}
v_0(x) := 2 \sum_{i=1}^{3} \cos ( (m^i a^*) \cdot x ),
\end{align*}
having the symmetries~\eqref{eq:honeycomb}. The microscopic potential will always be the same, $v = 10 v_0$. Moreover, the macroscopic one $V$ will mostly be 
\begin{align}\label{eq:def_lambda_V_honeycomb} 
V_{\text{honeycomb}} = \lambda v_0, \qquad \qquad  \lambda \in \R.
\end{align}
Our ``standard'' choice for $\ep$ will be $\ep = \f 17$. The index of the lower Fermi level will always be $m\fer = 1$.

In the band diagrams, the momentum follows a triangular path in momentum space, it will go through $\kappa := (-2,1)a^*/3$, $\Gamma := (0,0)$ and $M := -(1,0)a^*$ as indicated on the horizontal axis, the path will be 
\begin{align*}
\kappa \to \Gamma \to M \to \kappa.
\end{align*}
 The vertical axis will represent the energy differences. The origin of energies (i.e. the zero) is chosen such that when $V = 0$, the Fermi level is at zero, for the effective and exact systems.

On Figure~\ref{fig:zero} we display the band diagram of the exact model when $V=0$ and $\ep = 1$ (no superlattice).
\begin{figure}[h!]
\begin{center}
\includegraphics[width=3cm,trim={0.35cm 0cm 0.34cm 0cm},clip]{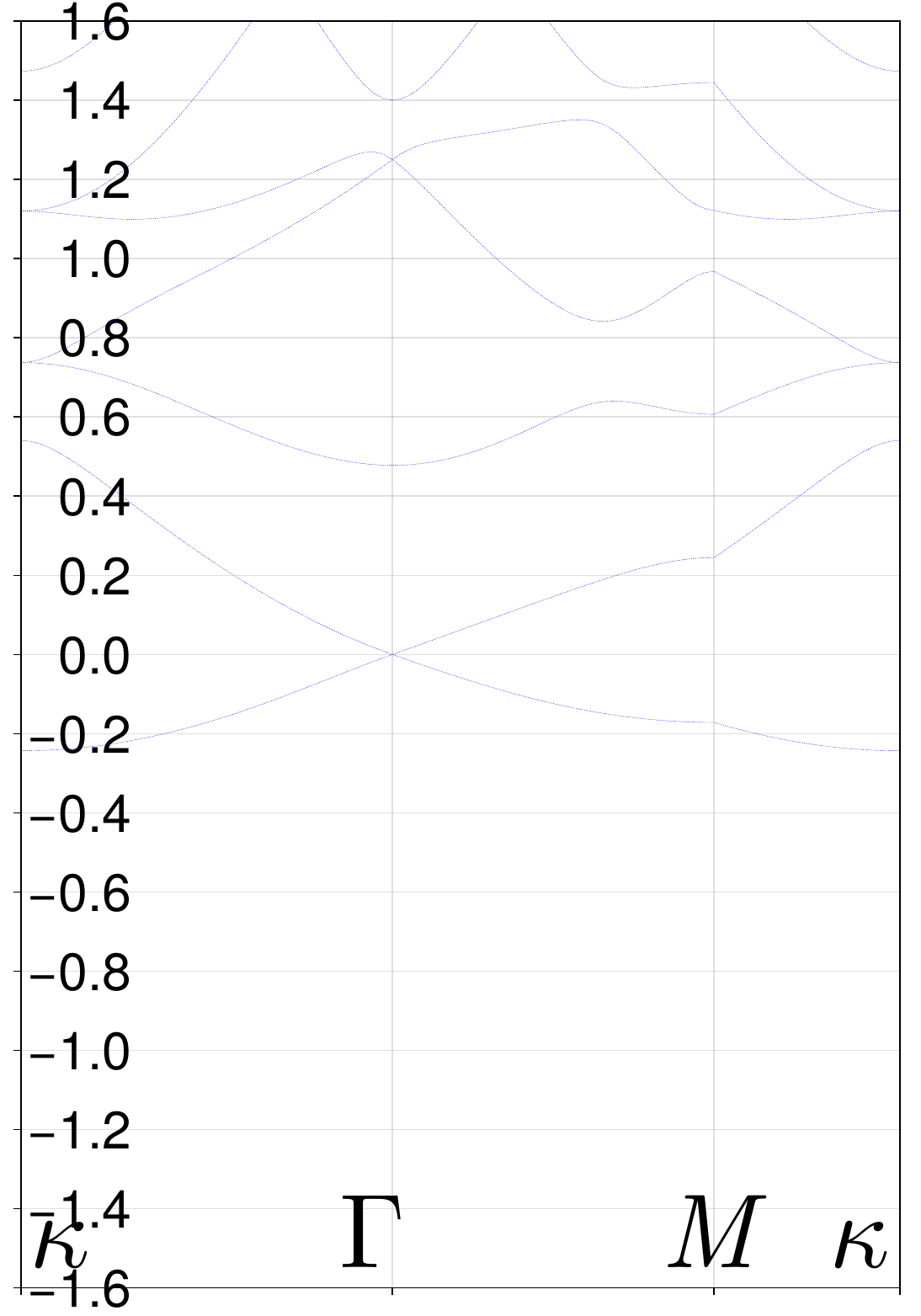}
\caption{Band diagram when $V=0$ and $\ep=1$.}\label{fig:zero}
\end{center}
\end{figure}

\subsection{The cutoff $\nu$}%
\label{sub:The cutoff}

We introduce an integer $\nu \in \N$. For $k \in \Z^2$ we define plane waves $p_k(x) := e^{i x \cdot (k a^*)}$ and the set of $\Omega$-periodic functions being composed of momenta lower than $\nu$ (in $\ell^\infty$ norm), more precisely
\begin{align*}
\cC^{\infty}_{\text{per},\nu} := \Span \{ p_k \;|\; k \in \Z^2, \forall i \in \{1,2\}, \ab{k_i} \le \nu \}.
\end{align*}
We actually do not use $\bbH\elk$ defined in~\eqref{eq:effective_op_general} but $P_\nu \bbH\elk P_\nu$ where $P_\nu$ is the orthogonal projection onto $\C^M \otimes \cC^{\infty}_{\text{per},\nu}$. This is necessary because increasing $M$ adds undesired bands which pollute the spectrum, as we see on Figure~\ref{fig:increase_nu}. The band of interest is the one coming from the Dirac cone. This first series of band diagrams represents the effective (in black) and exact (in blue) bands for $\lambda = 0$, $\ep = \f{1}{7}$, with $\cF_0$, so the effective operator is $v\fer \sigma \cdot (-i\na) + \f 12 \ep (-\Delta)$. We see that as we increase $\nu$, additional spectral ``pollution'' arises, 
 but their number stabilizes after some value of $\nu$, for instance in the case of Figure~\ref{fig:increase_nu}, taking $\nu \ge 6$ does not change with respect to $\nu = 6$. The same phenomenon happens for other families $\cF$, and at any value of $\ep$.

Written differently, at $\ep$ fixed, when we increase $\nu$ we observe the spectral pollution as in Figure~\ref{fig:increase_nu}, due to the presence of the Laplace operator $-\Delta$. Moreover, at $\nu$ fixed, the spectral pollution disappears and the artificial bands are ``expelled'' as we decrease $\ep$, as we can see on Figure~\ref{fig:increase_invep_fixed_nu}.

In all the rest of the simulations, we will hence use $\nu = 2$, for any situation, it is a good compromise to obtain the sought-after bands, and small spectral pollution.

\begin{figure}[h!]
\begin{center}
\newgeometry{left=0.5cm,right=0.5cm}  
\hspace{-6cm}
\begin{subfigure}[b]{0.15\linewidth}\includegraphics[width=\linewidth,trim={0.35cm 0cm 0.34cm 0cm},clip]{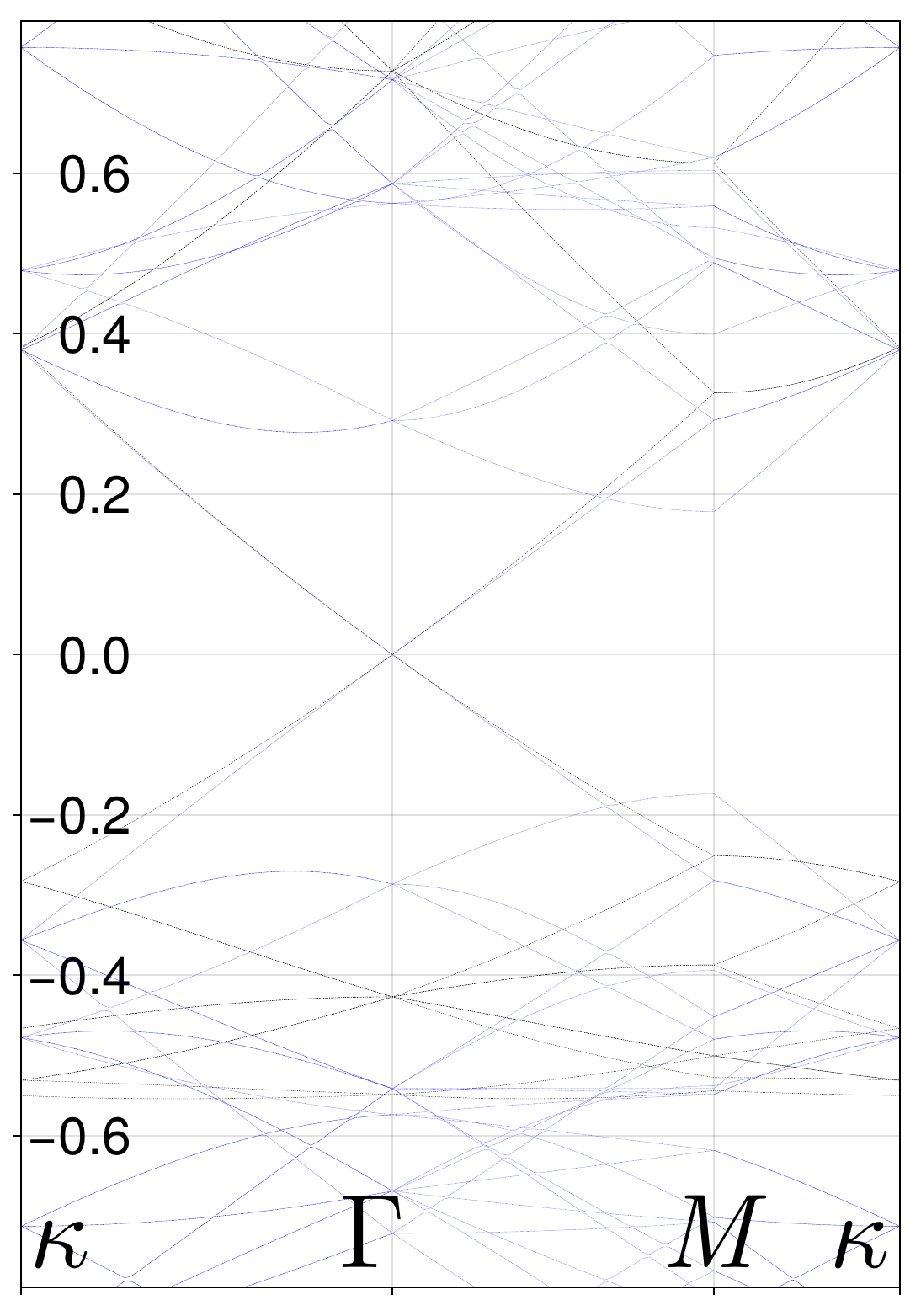}\caption{$\nu = 1$}\end{subfigure}
\begin{subfigure}[b]{0.15\linewidth}\includegraphics[width=\linewidth,trim={0.35cm 0cm 0.34cm 0cm},clip]{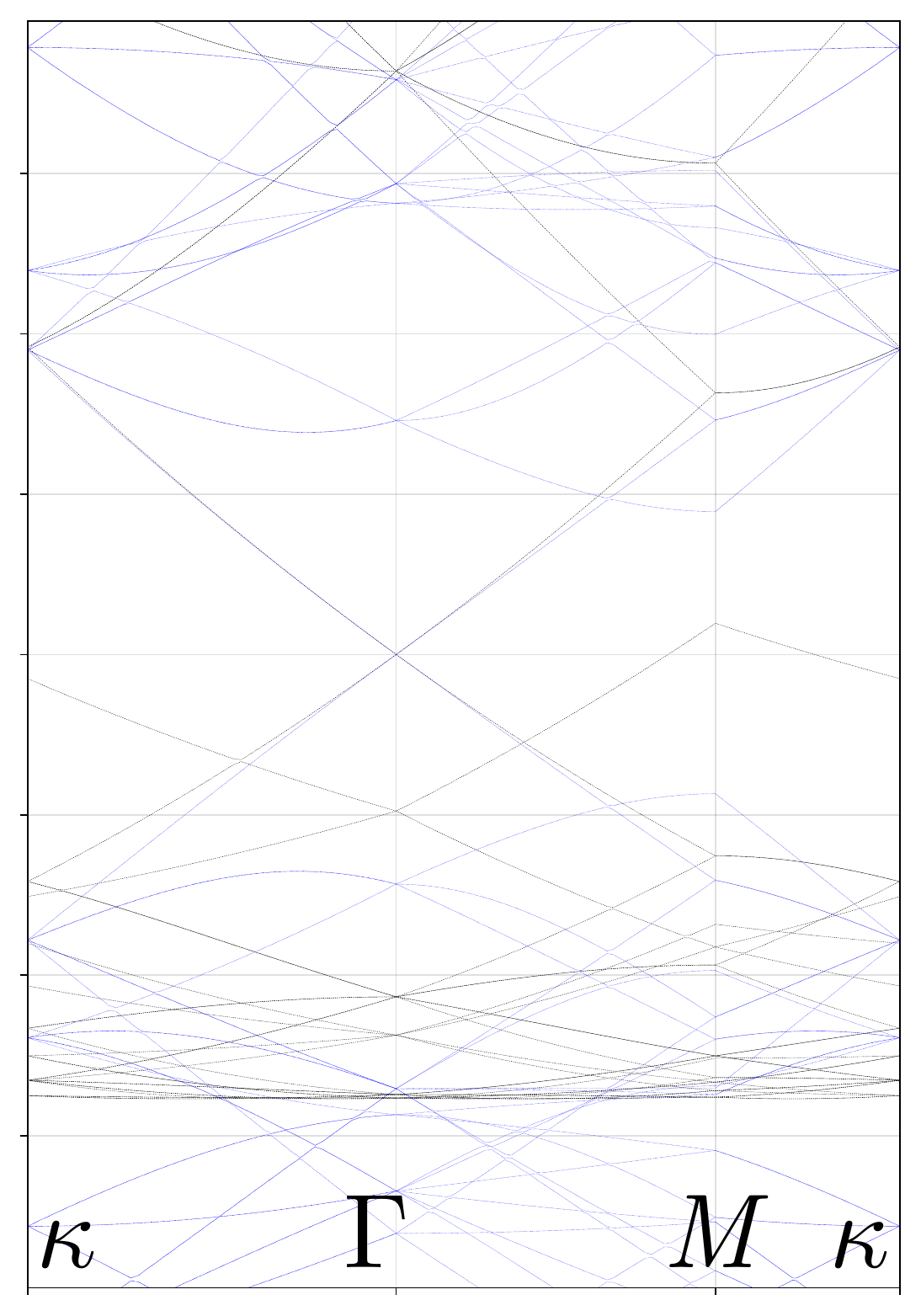}\caption{$\nu = 2$}\end{subfigure}
\begin{subfigure}[b]{0.15\linewidth}\includegraphics[width=\linewidth,trim={0.35cm 0cm 0.34cm 0cm},clip]{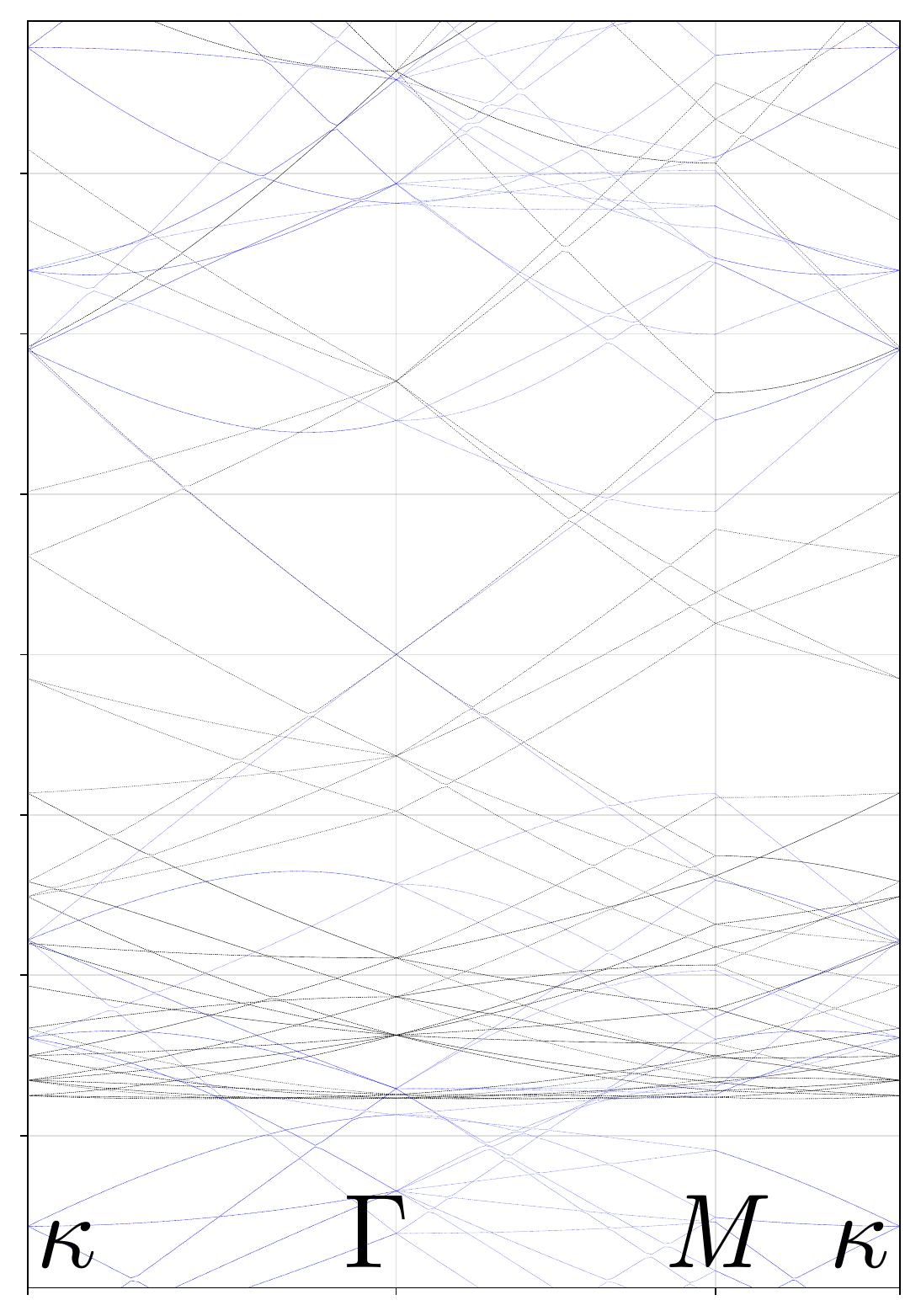}\caption{$\nu = 3$}\end{subfigure}
\begin{subfigure}[b]{0.15\linewidth}\includegraphics[width=\linewidth,trim={0.35cm 0cm 0.34cm 0cm},clip]{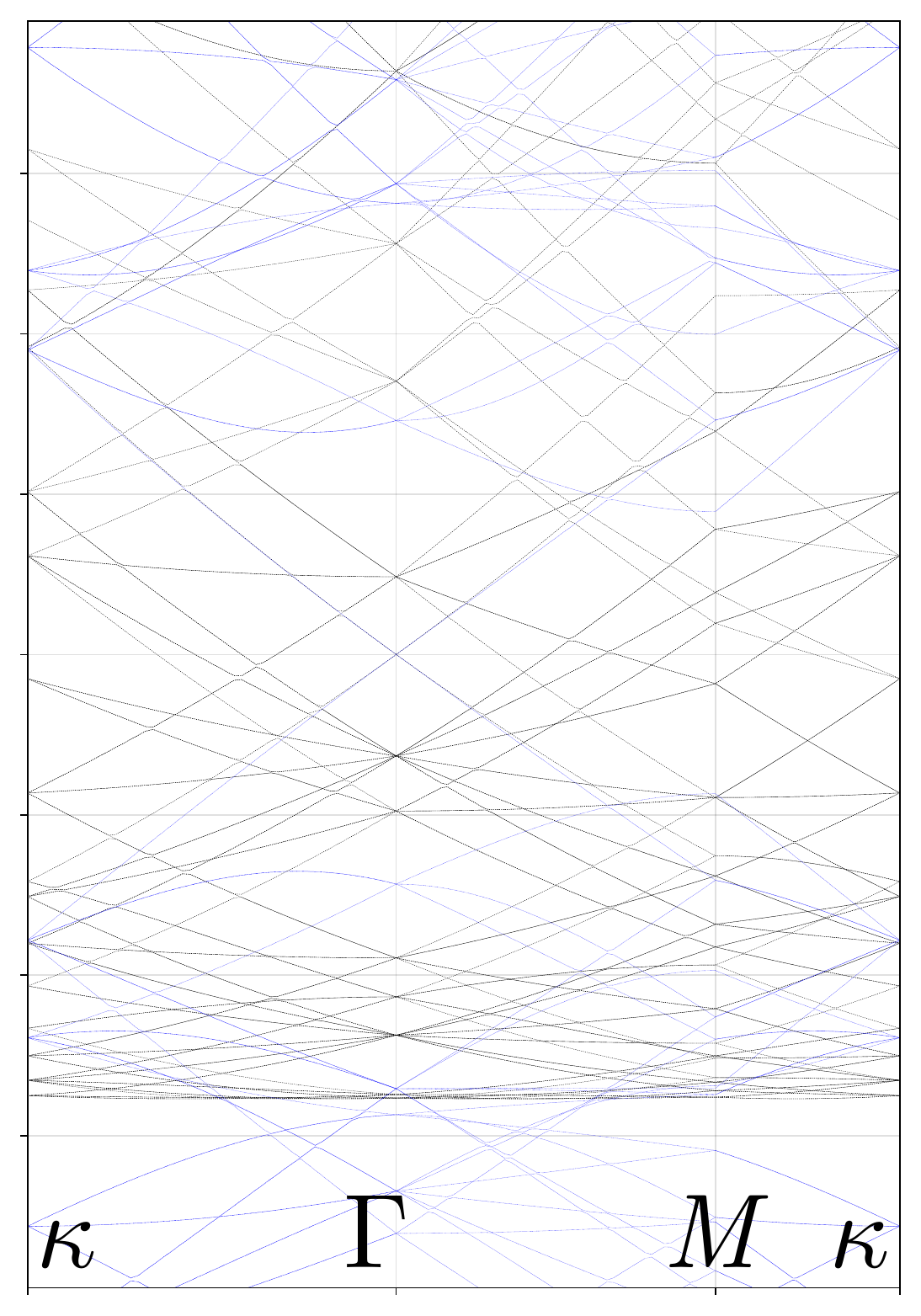}\caption{$\nu = 4$}\end{subfigure}
\begin{subfigure}[b]{0.15\linewidth}\includegraphics[width=\linewidth,trim={0.35cm 0cm 0.34cm 0cm},clip]{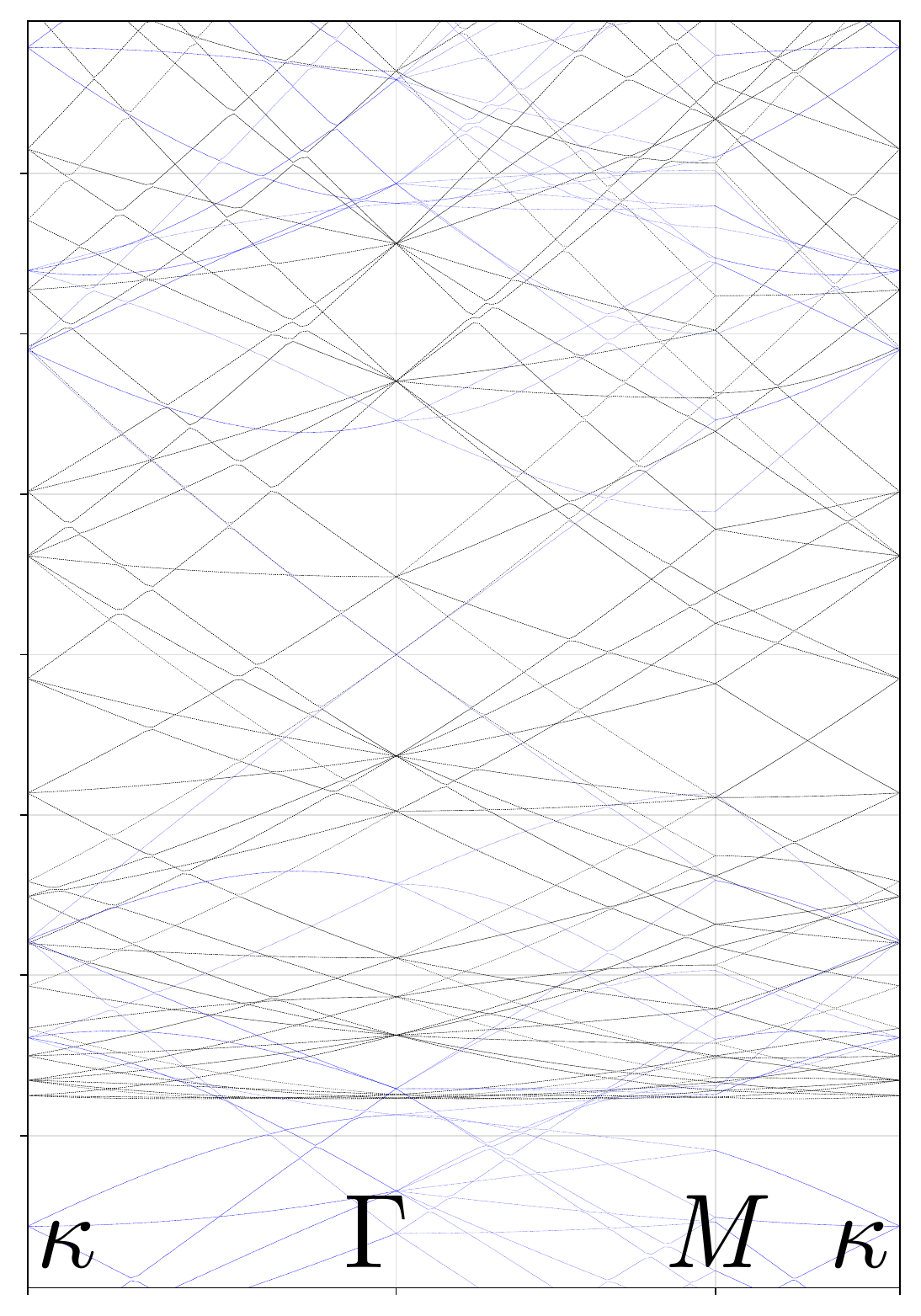}\caption{$\nu = 6$}\end{subfigure}
\begin{subfigure}[b]{0.15\linewidth}\includegraphics[width=\linewidth,trim={0.35cm 0cm 0.34cm 0cm},clip]{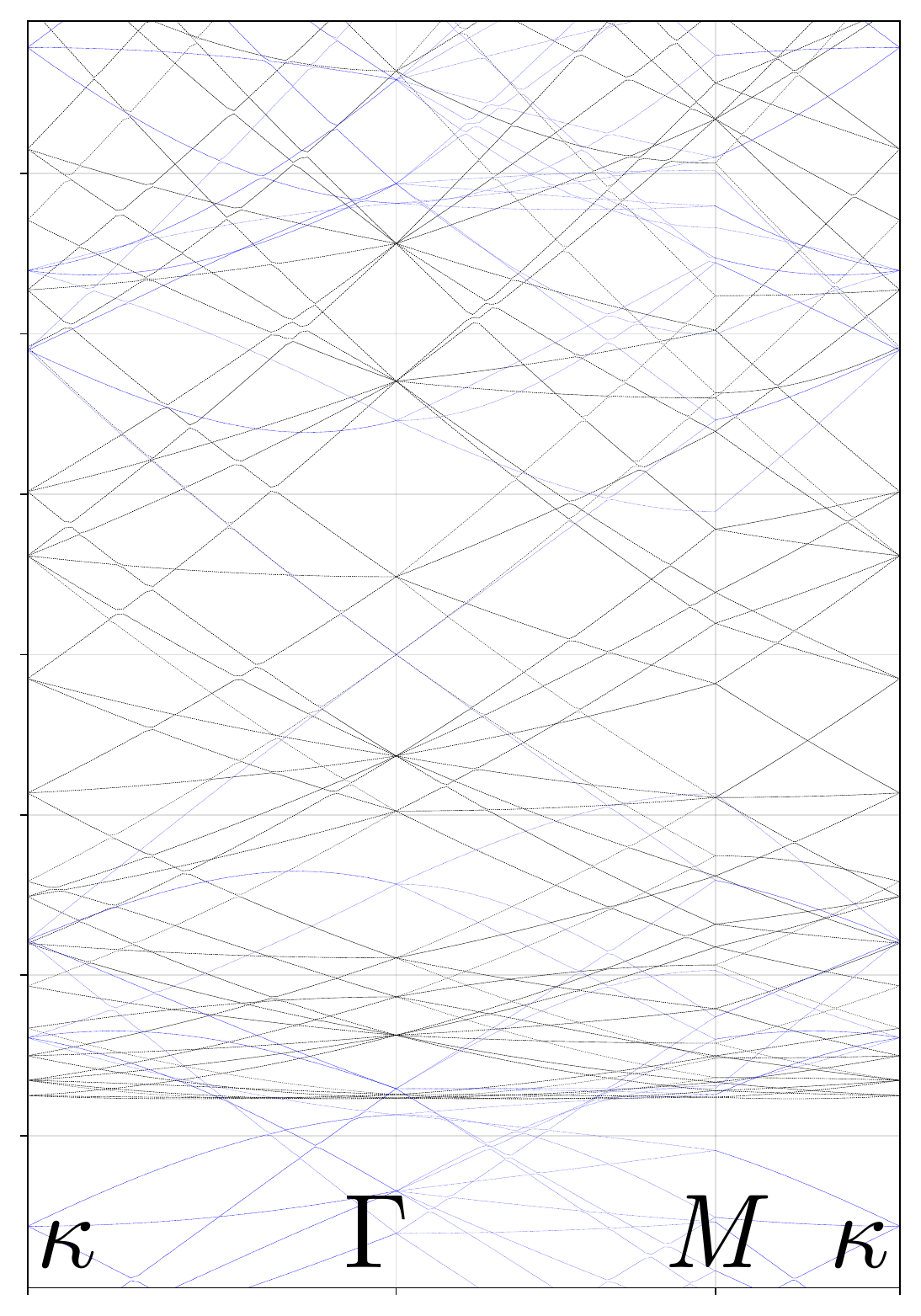}\caption{$\nu = 8$}\end{subfigure} 
\restoregeometry
\caption{Letting $\nu$ vary at fixed $\ep = \f{1}{7}$, and $\cF = \cF_0$, $V = 0$.}\label{fig:increase_nu}
\end{center}
\end{figure}

\begin{figure}[h!]
\begin{center}
\newgeometry{left=0.5cm,right=0.5cm}  
\hspace{-6cm}
\begin{subfigure}[b]{0.15\linewidth}\includegraphics[width=\linewidth,trim={0.35cm 0cm 0.34cm 0cm},clip]{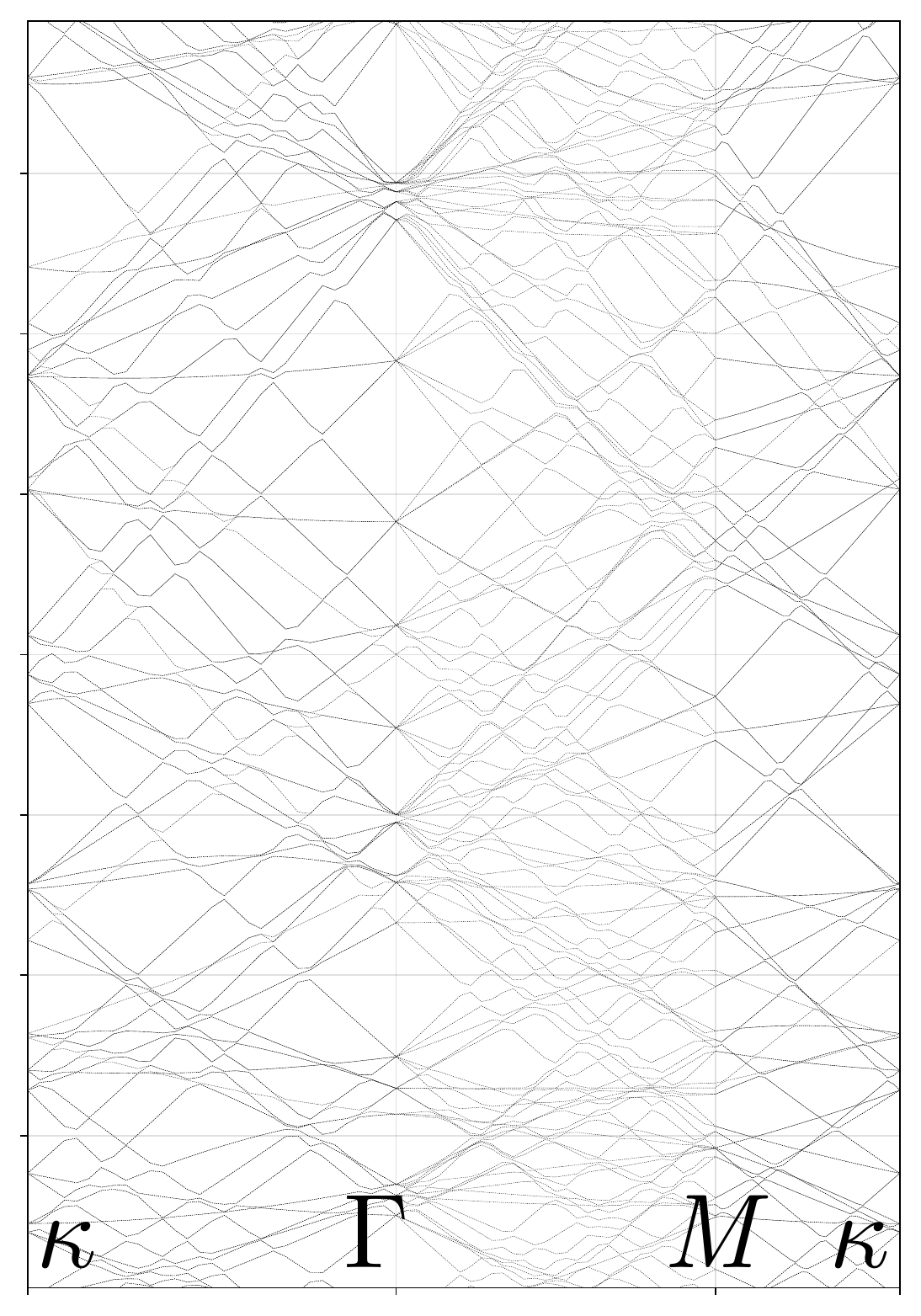}\caption{$\f{1}{\ep} = 7$}\end{subfigure}
\begin{subfigure}[b]{0.15\linewidth}\includegraphics[width=\linewidth,trim={0.35cm 0cm 0.34cm 0cm},clip]{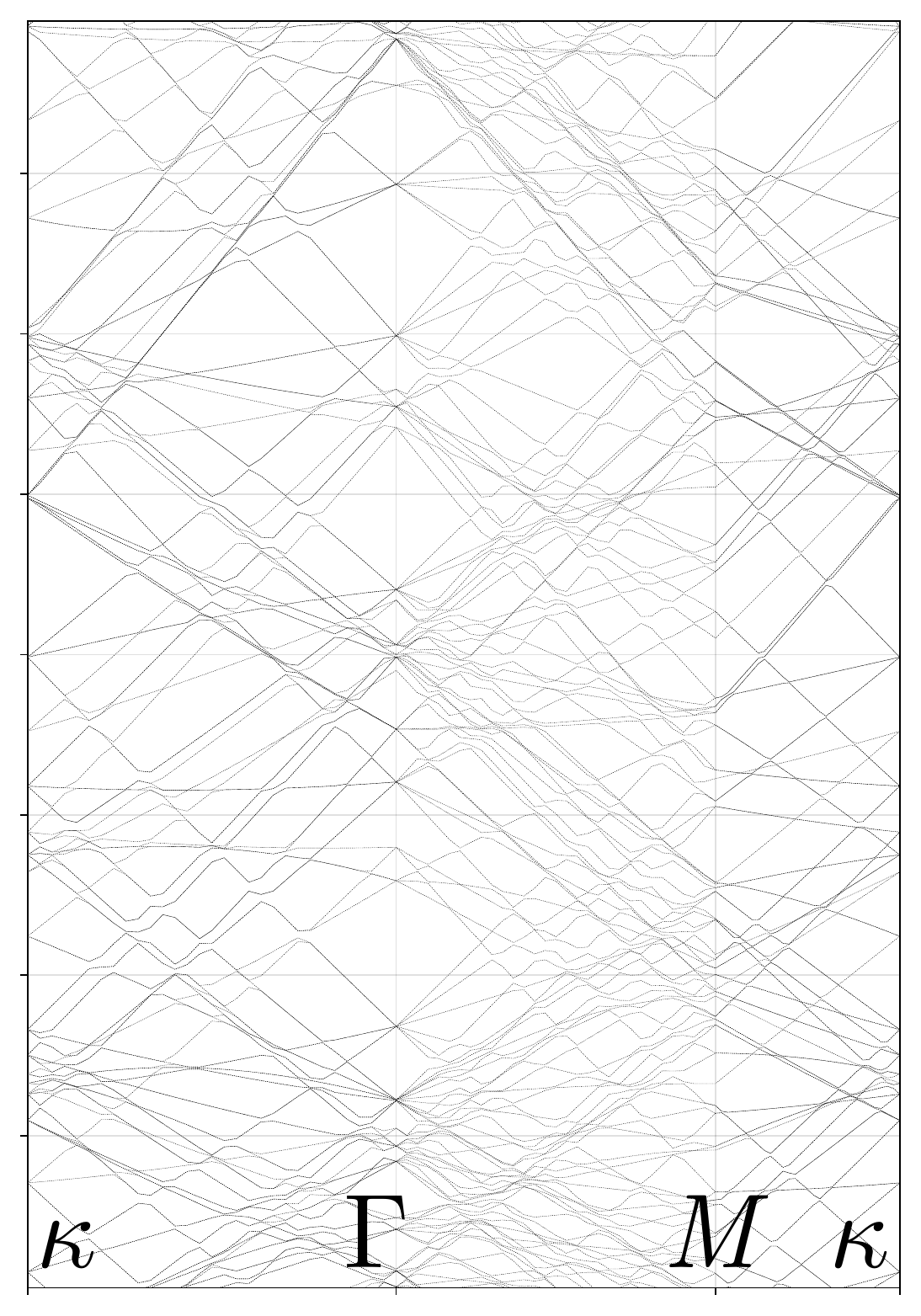}\caption{$\f{1}{\ep} = 10$}\end{subfigure}
\begin{subfigure}[b]{0.15\linewidth}\includegraphics[width=\linewidth,trim={0.35cm 0cm 0.34cm 0cm},clip]{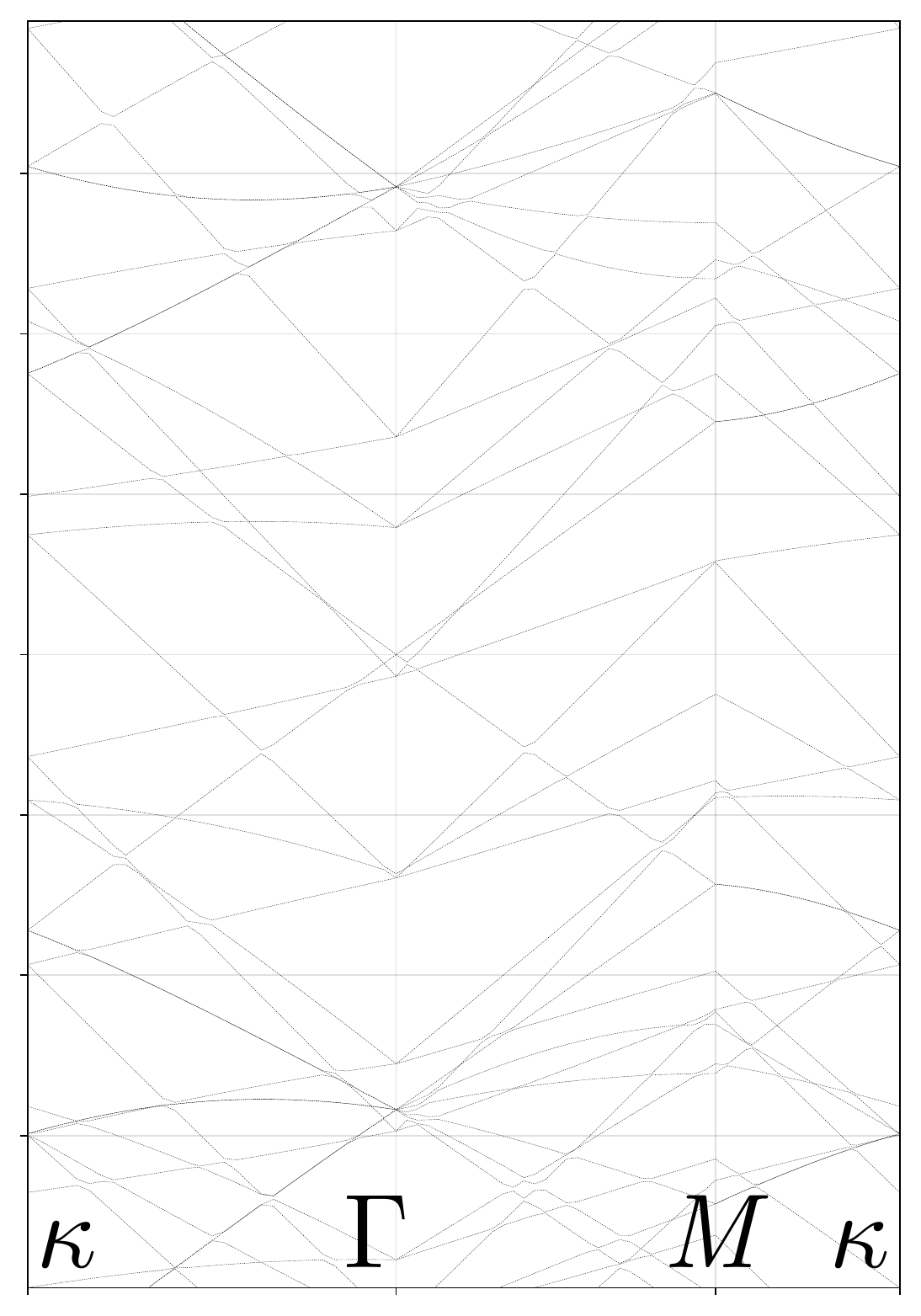}\caption{$\f{1}{\ep} = 19$}\end{subfigure}
\begin{subfigure}[b]{0.15\linewidth}\includegraphics[width=\linewidth,trim={0.35cm 0cm 0.34cm 0cm},clip]{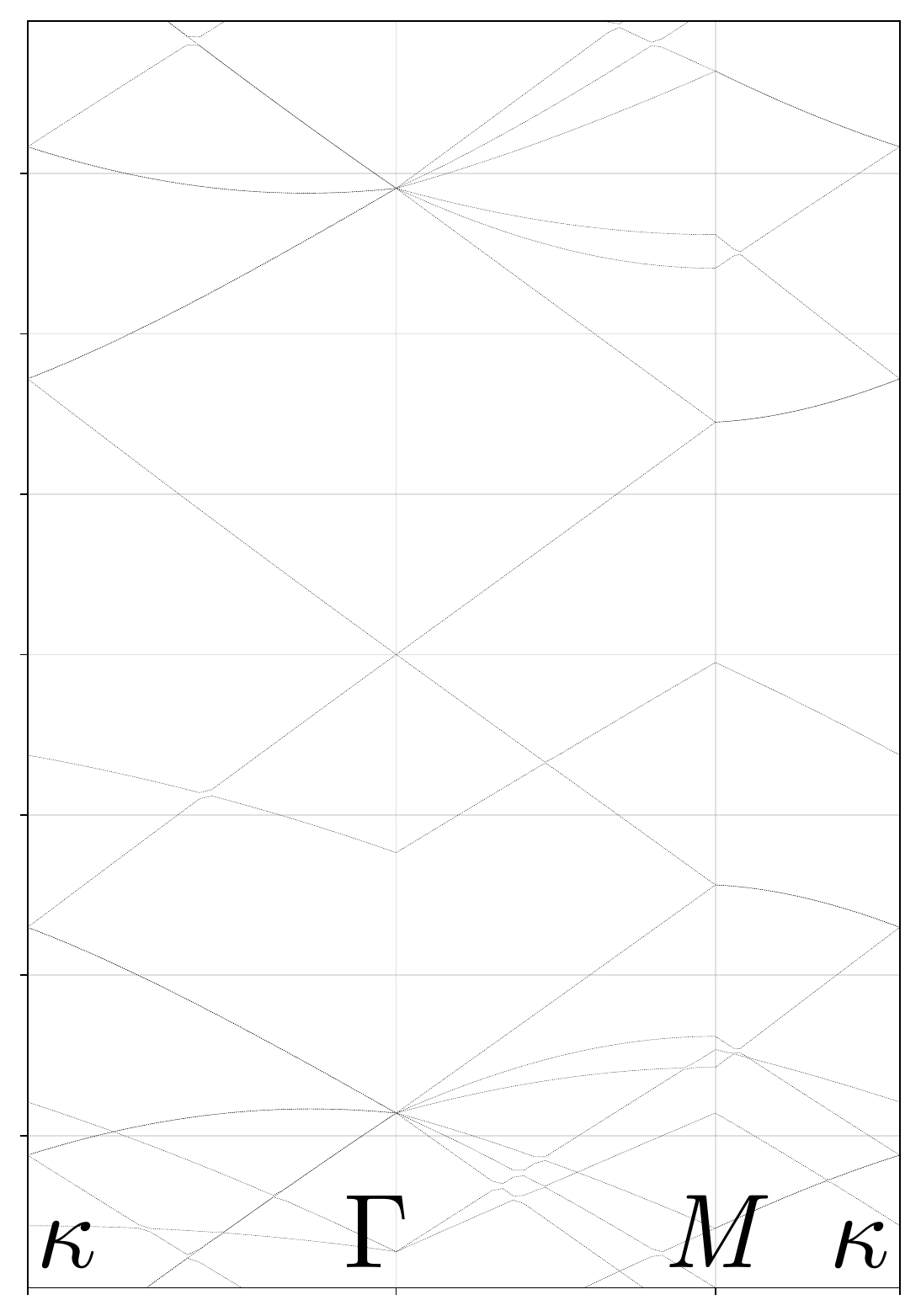}\caption{$\f{1}{\ep} = 31$}\end{subfigure}
\begin{subfigure}[b]{0.15\linewidth}\includegraphics[width=\linewidth,trim={0.35cm 0cm 0.34cm 0cm},clip]{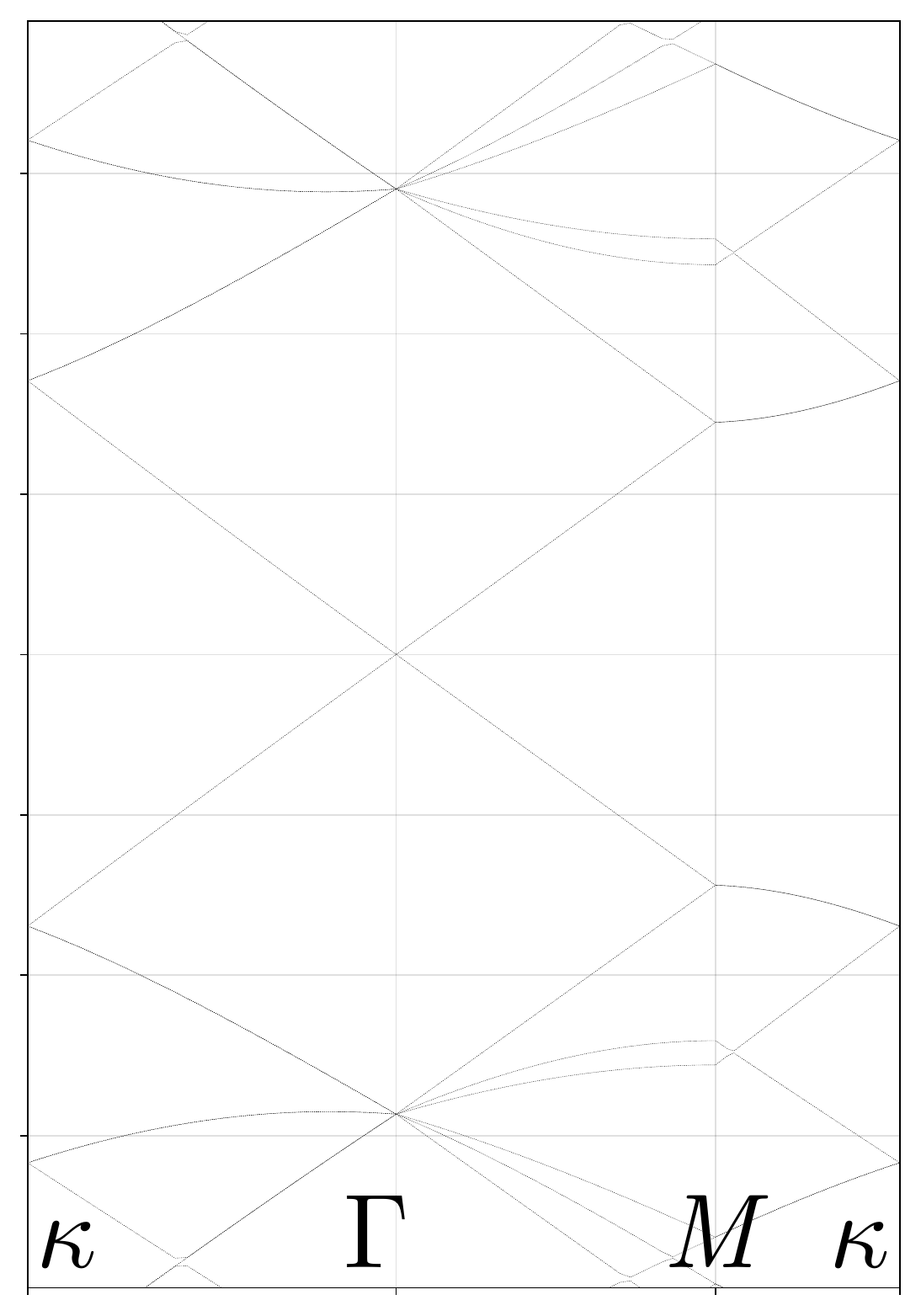}\caption{$\f{1}{\ep} = 40$}\end{subfigure}
\begin{subfigure}[b]{0.15\linewidth}\includegraphics[width=\linewidth,trim={0.35cm 0cm 0.34cm 0cm},clip]{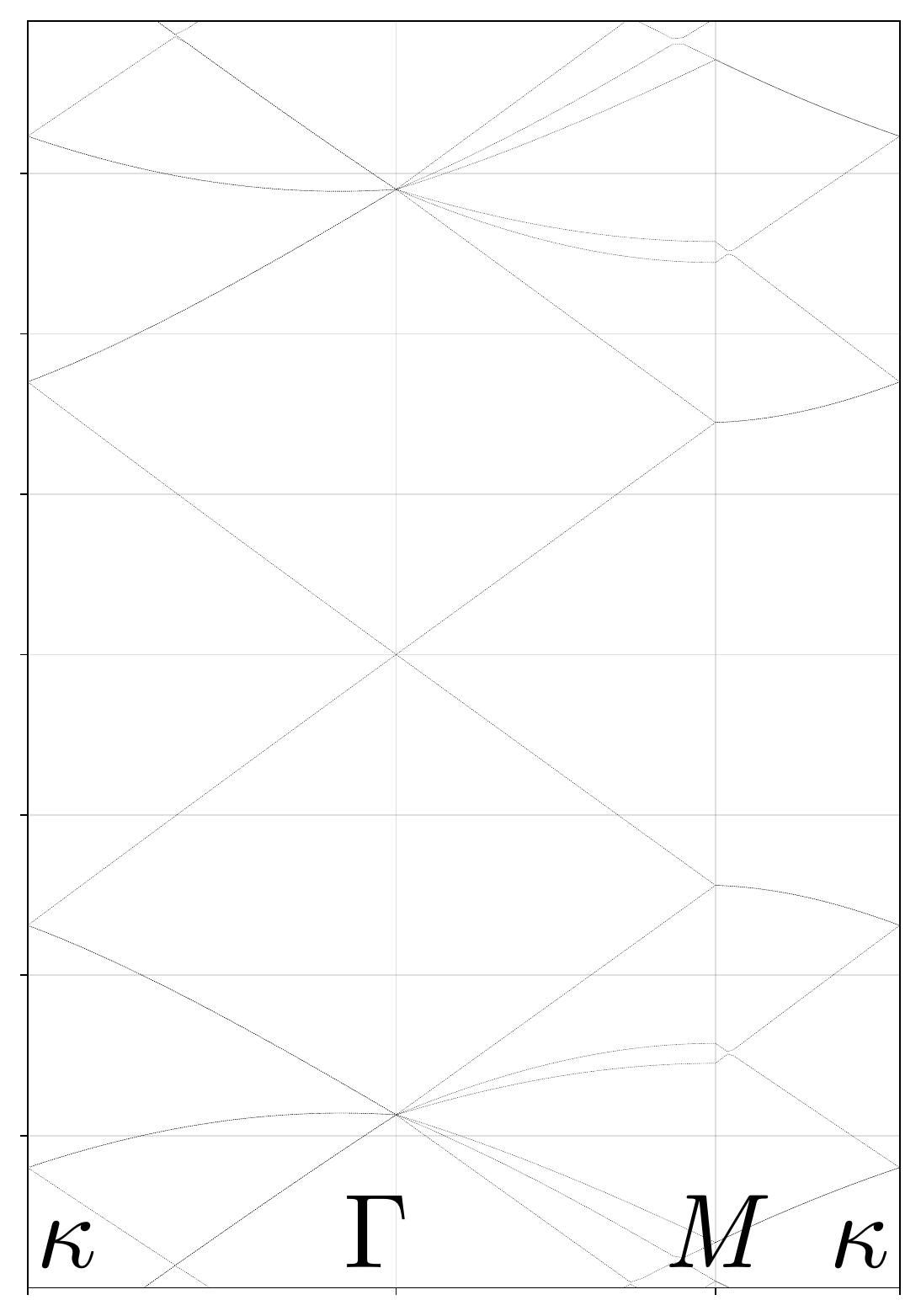}\caption{$\f{1}{\ep} = 49$}\end{subfigure}
\restoregeometry
\caption{Letting $\ep$ decrease at fixed but large $\nu = 21$, and where only bands of the effective model are plotted. We chose $\cF_{1}$, $\nu = 21$, $V = 0$.}\label{fig:increase_invep_fixed_nu}
\end{center}
\end{figure}

\subsection{Varying $\ell$ and $k$-dep}%
\label{sub:Varying ell}

We take $\nu = 2$, $\ep = \f{1}{7}$. On Figure~\ref{fig:increase_ell}, we take $V = 0$, we let $\ell$ vary in $\{0,1,2\}$, and for $\ell \in \{1,2\}$, we provide the band diagrams of the $k$-dependent families $\cF_{\ell,k}$ and the ones of the $k$-independent families $\cF_{\ell}$. We also provide the result for the ``6 states'' family $\cF^6$ defined in Section~\ref{sub:Variational space from excited states}. It is natural to compare families having the same number of states, so we can focus on comparing $\cF_{1}$, $\cF_{2,k}$ and $\cF^6$, which all have $6$ states. We see that the other families than $\cF_0$ reproduce the main branch better than $\cF_0$.

\begin{figure}[h!]
\begin{center}
\newgeometry{left=0.5cm,right=0.5cm}  
\hspace{-6cm}
\begin{subfigure}[b]{0.15\linewidth}\includegraphics[width=\linewidth,trim={0.35cm 0cm 0.34cm 0cm},clip]{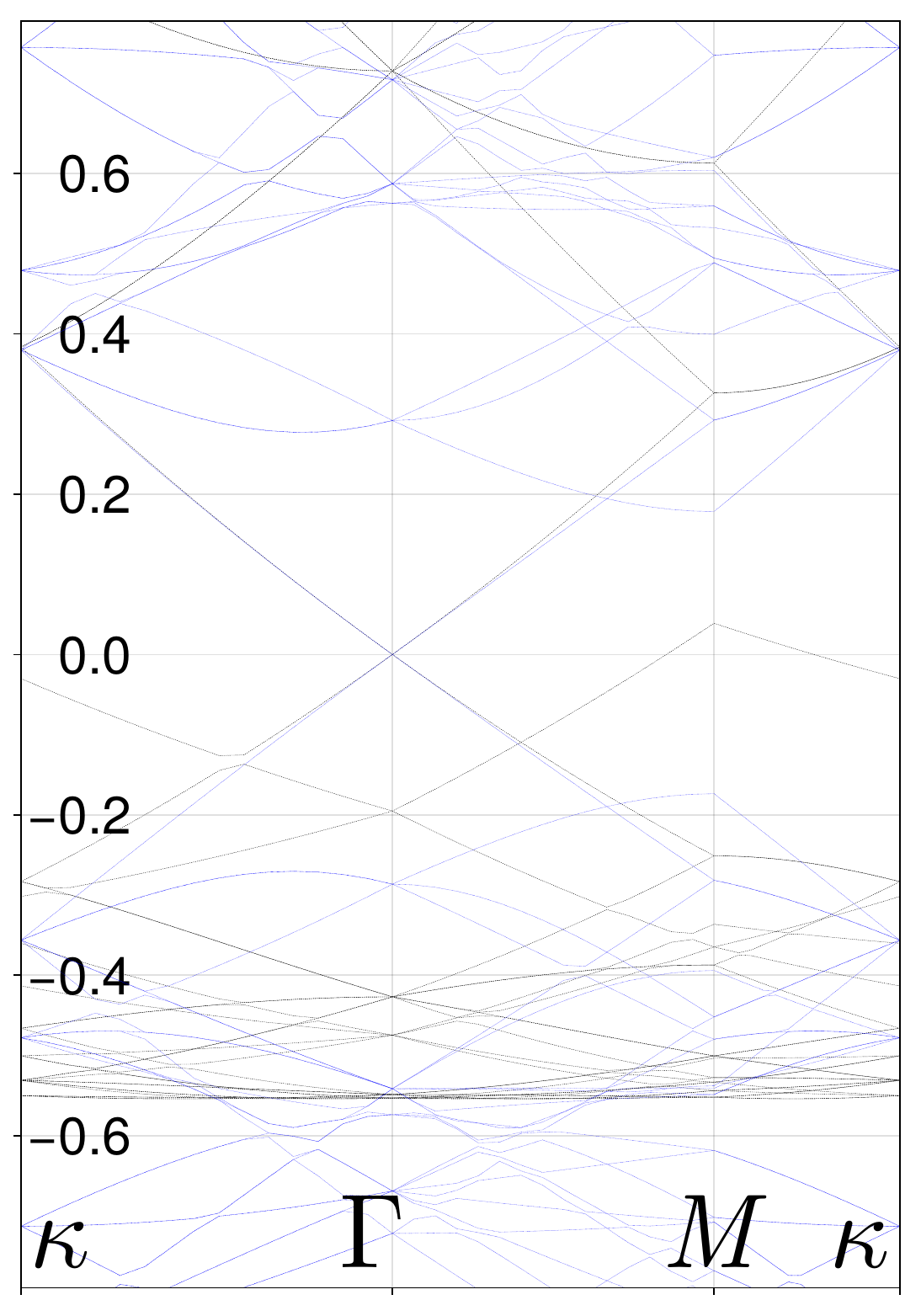}\caption{$\cF = \cF_0$}\end{subfigure}
\begin{subfigure}[b]{0.15\linewidth}\includegraphics[width=\linewidth,trim={0.35cm 0cm 0.34cm 0cm},clip]{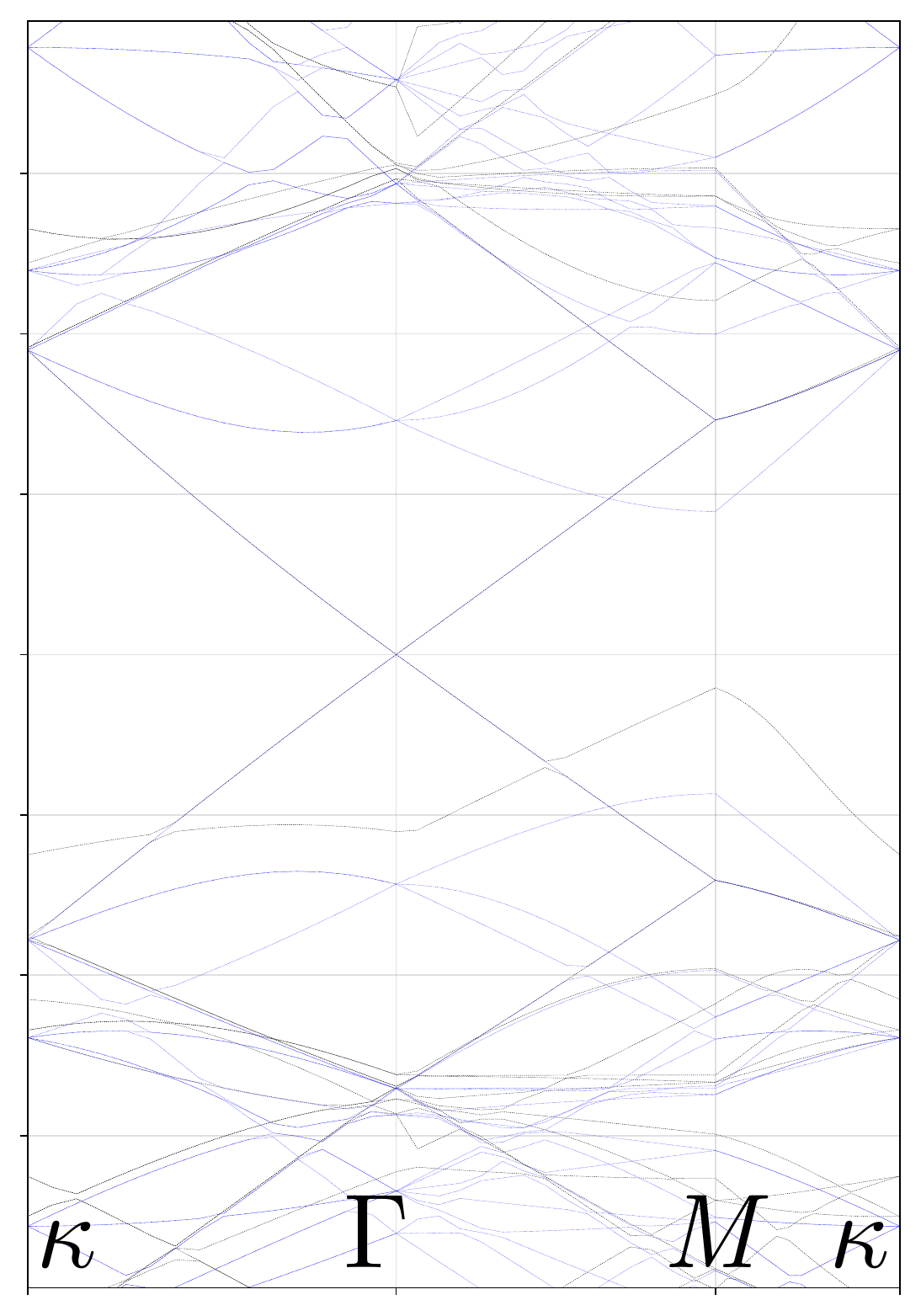}\caption{$\cF = \cF_{1,k}$}\end{subfigure}
\begin{subfigure}[b]{0.15\linewidth}\includegraphics[width=\linewidth,trim={0.35cm 0cm 0.34cm 0cm},clip]{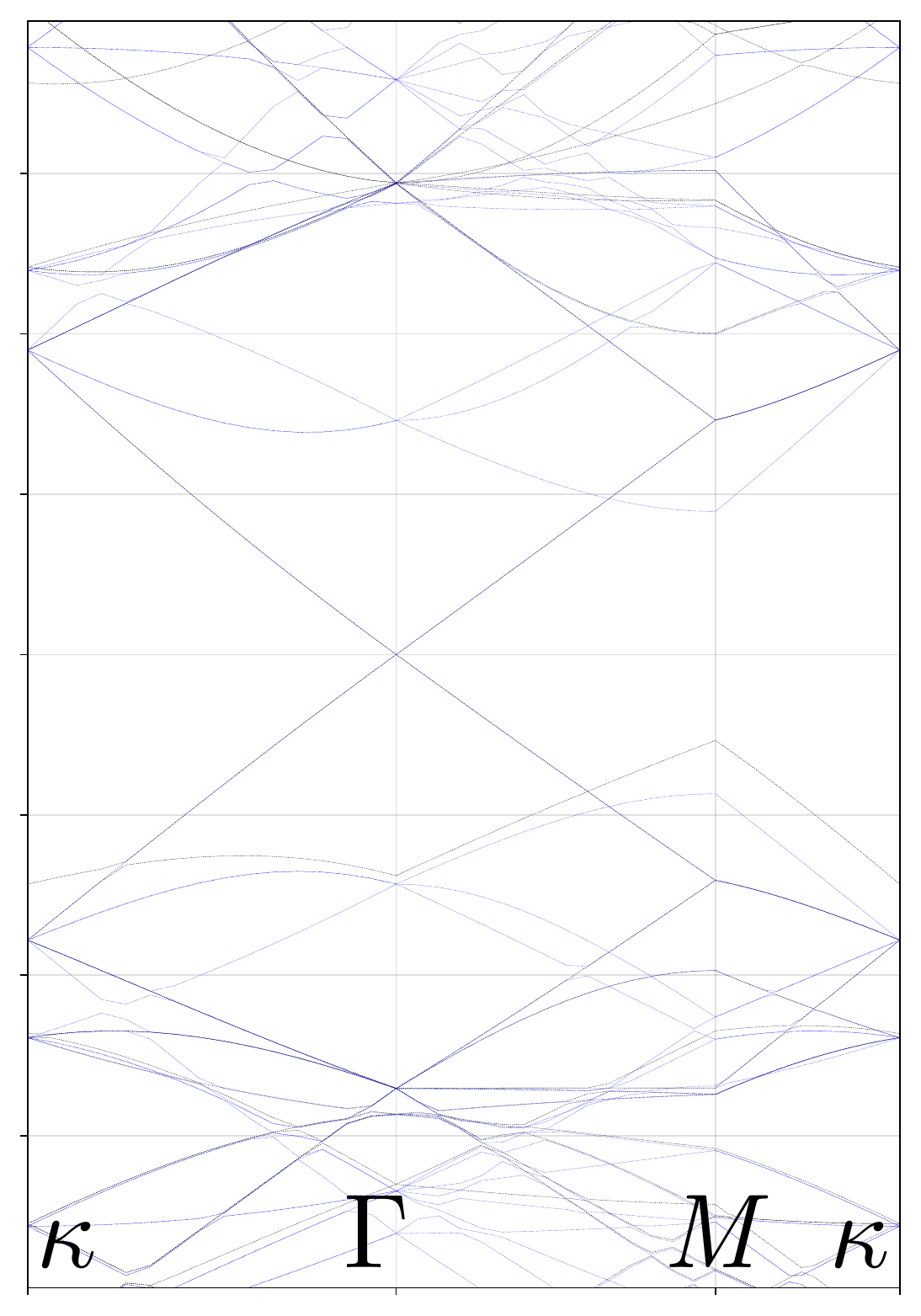}\caption{$\cF = \cF_{1}$}\end{subfigure}
\begin{subfigure}[b]{0.15\linewidth}\includegraphics[width=\linewidth,trim={0.35cm 0cm 0.34cm 0cm},clip]{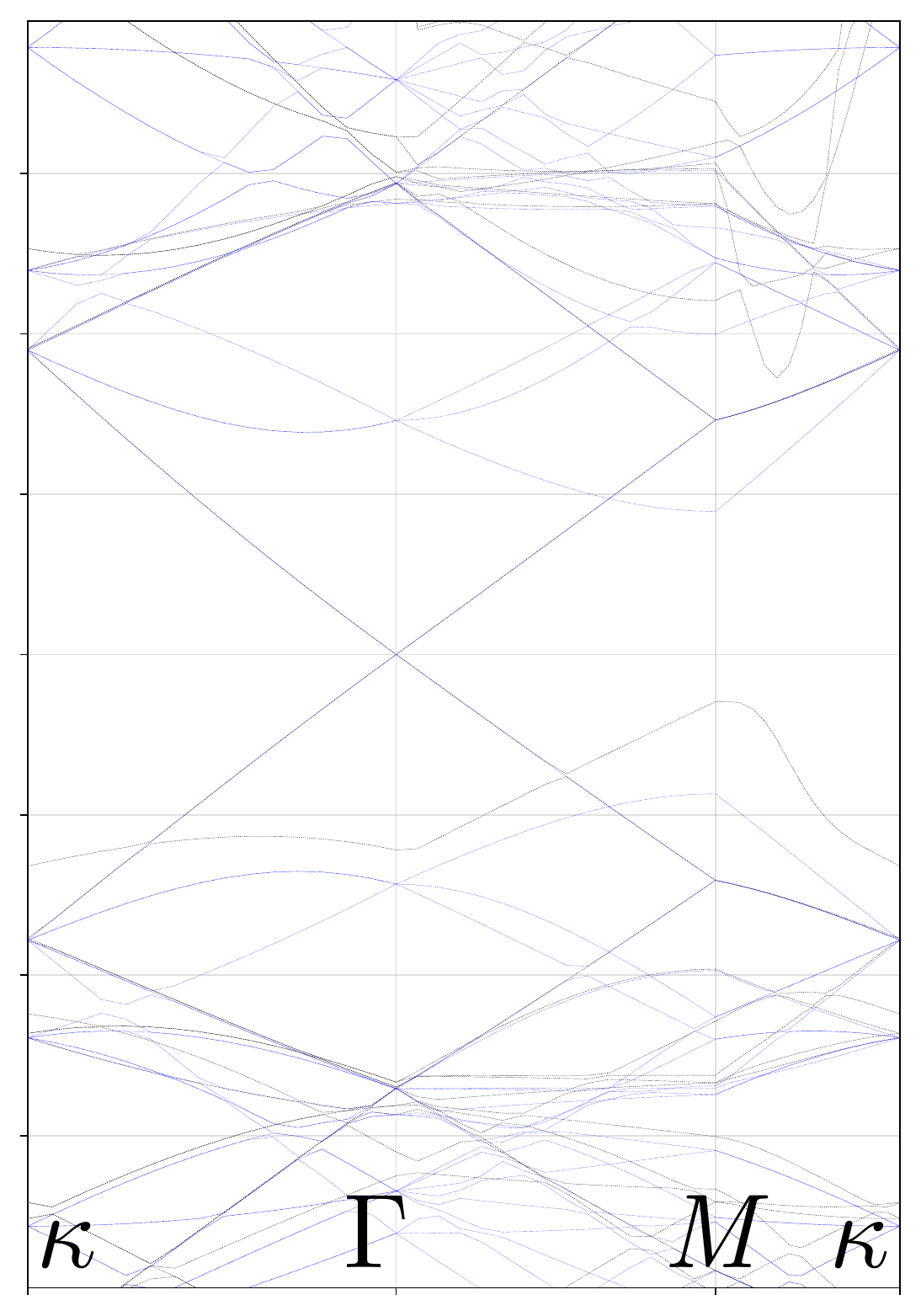}\caption{$\cF = \cF_{2,k}$}\end{subfigure}
\begin{subfigure}[b]{0.15\linewidth}\includegraphics[width=\linewidth,trim={0.35cm 0cm 0.34cm 0cm},clip]{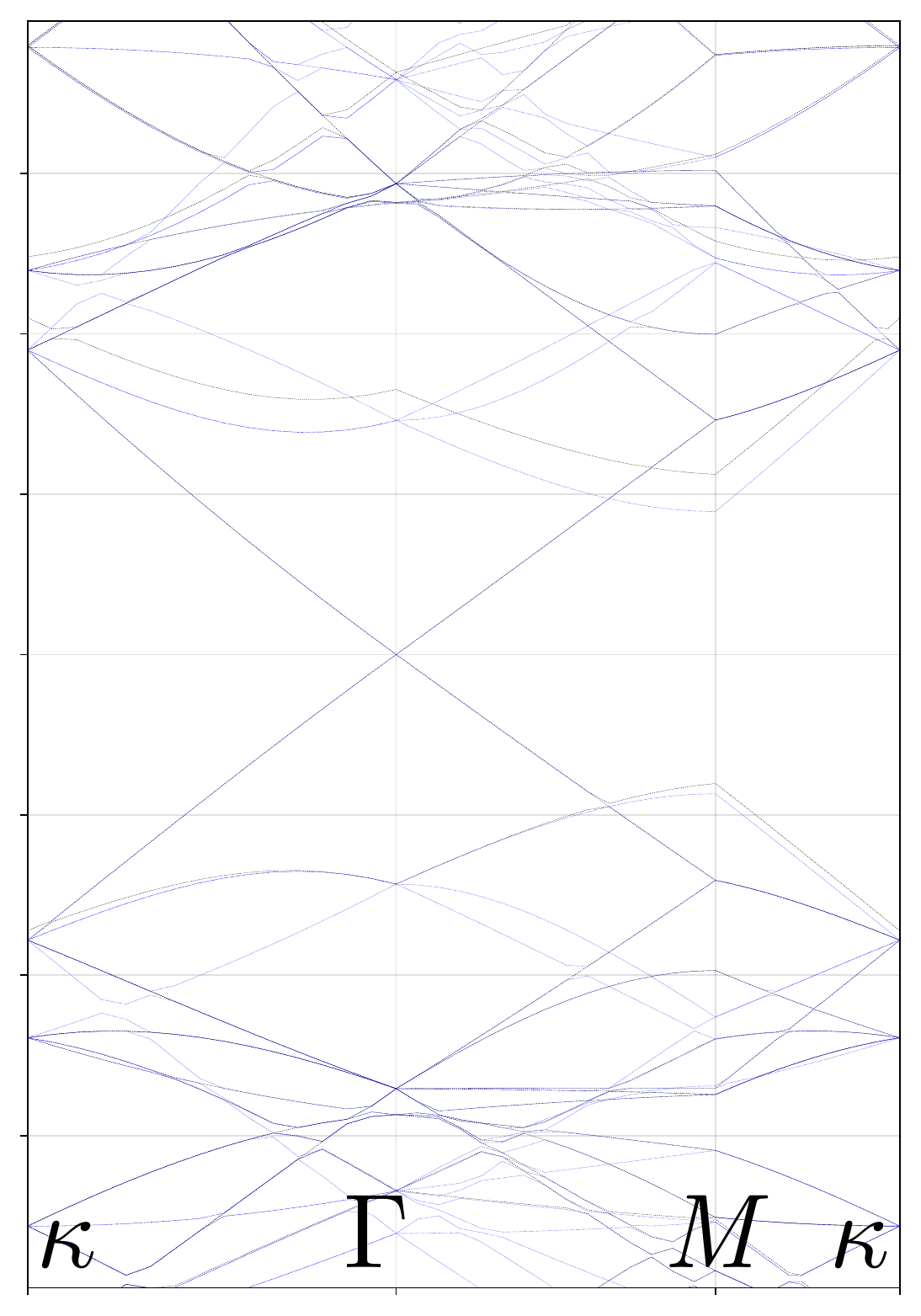}\caption{$\cF = \cF_{2}$}\end{subfigure}
\begin{subfigure}[b]{0.15\linewidth}\includegraphics[width=\linewidth,trim={0.35cm 0cm 0.34cm 0cm},clip]{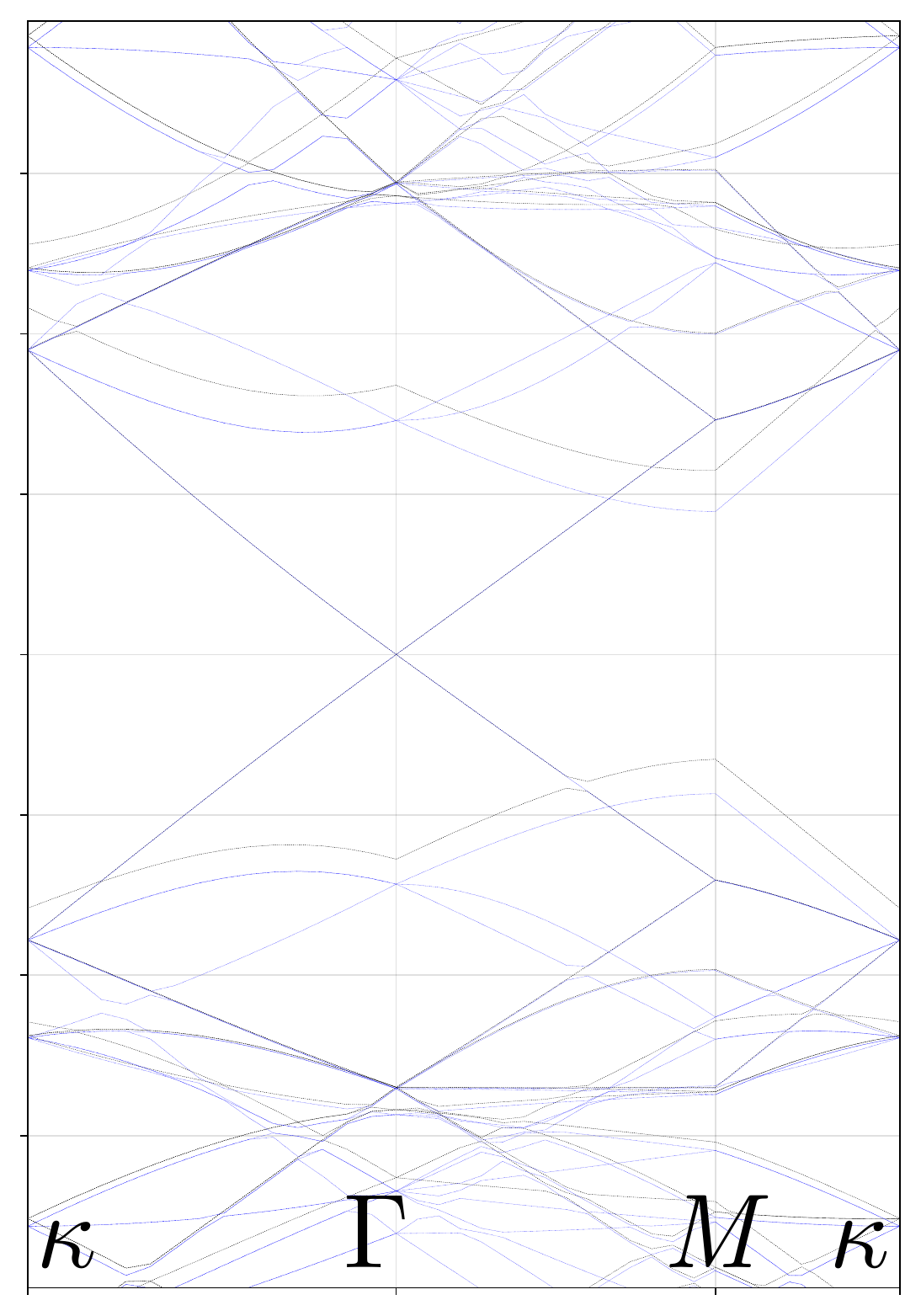}\caption{$\cF = \cF^6$}\end{subfigure} 
\restoregeometry
\caption{Letting $\ell$ vary, with $\ep = \f{1}{7}$, $\nu = 2$ and $V = 0$.}\label{fig:increase_ell}
\end{center}
\end{figure}

To better see the usefulness of the method, we also provide in Figure~\ref{fig:increase_ell_v_non_zero} the same band diagrams but with a potential $V = V_{\text{ng}}$ which does not have honeycomb symmetry. Let us define $\widetilde{m}^1 := \pa{1,0}$, $\widetilde{m}^2 := \pa{0,1}$, $\widetilde{m}^3 := \pa{0,2}$. We used
\begin{align}\label{eq:V_non_graphene} 
V_{\text{ng}}(x) := 2 \lambda \sum_{i=1}^{3} \cos ( (\widetilde{m}^i a^*) \cdot x ).
\end{align}
Here again we see that the effective operator reproduces relatively well the exact bands.

\begin{figure}[h!]
\begin{center}
\newgeometry{left=0.5cm,right=0.5cm}  
\hspace{-6cm}
\begin{subfigure}[b]{0.15\linewidth}\includegraphics[width=\linewidth,trim={0.35cm 0cm 0.34cm 0cm},clip]{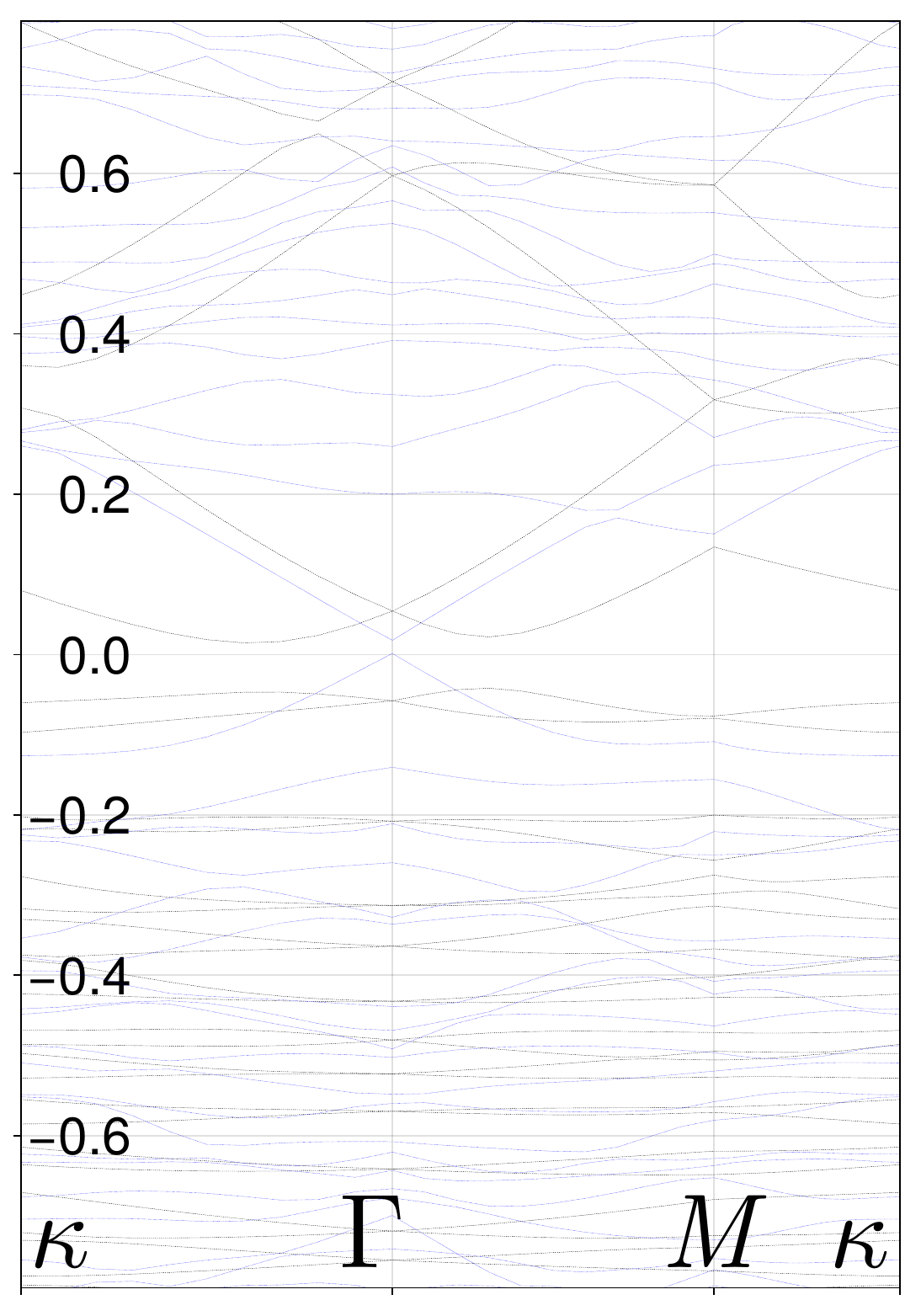}\caption{$\cF = \cF_0$}\end{subfigure}
\begin{subfigure}[b]{0.15\linewidth}\includegraphics[width=\linewidth,trim={0.35cm 0cm 0.34cm 0cm},clip]{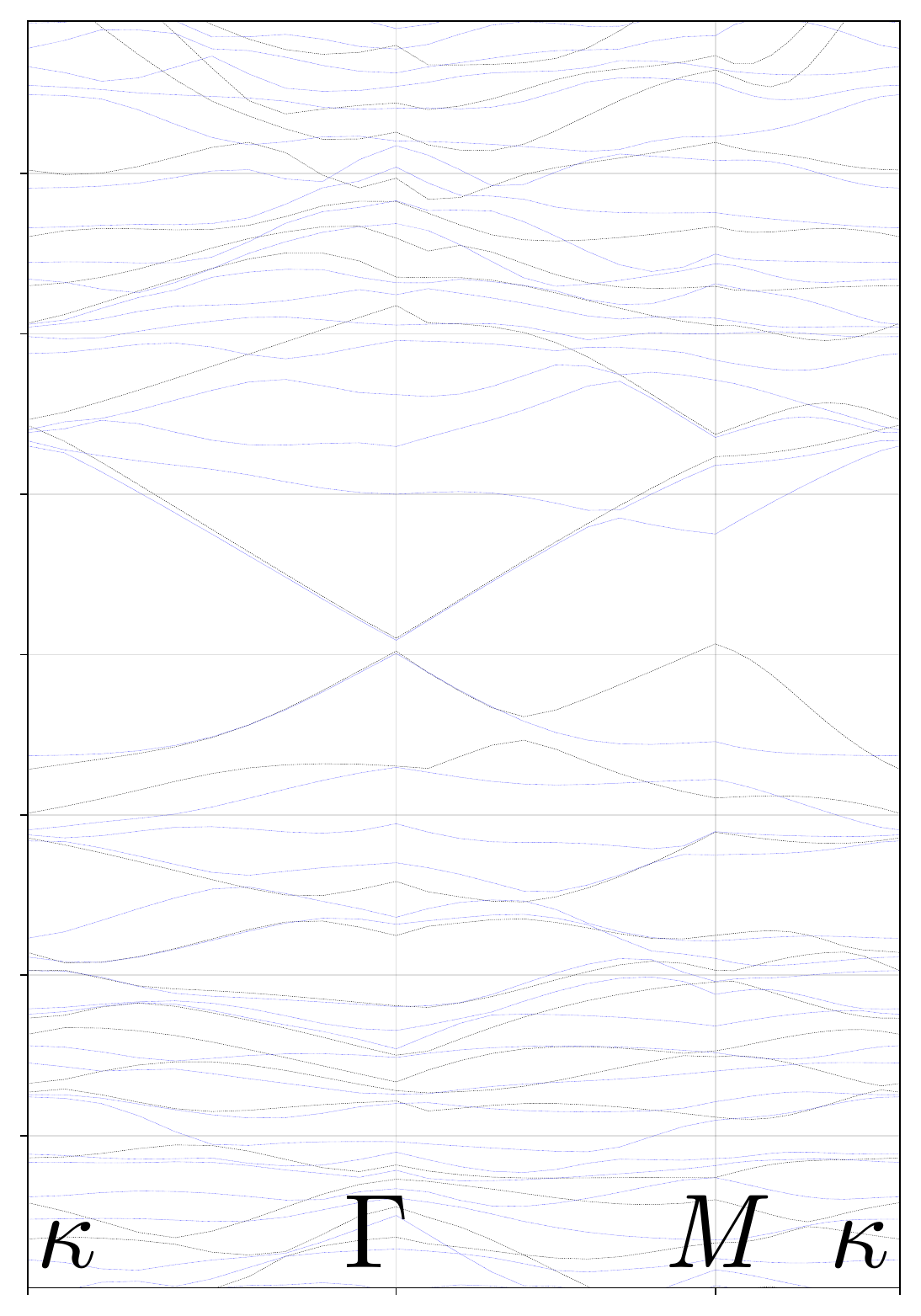}\caption{$\cF = \cF_{1,k}$}\end{subfigure}
\begin{subfigure}[b]{0.15\linewidth}\includegraphics[width=\linewidth,trim={0.35cm 0cm 0.34cm 0cm},clip]{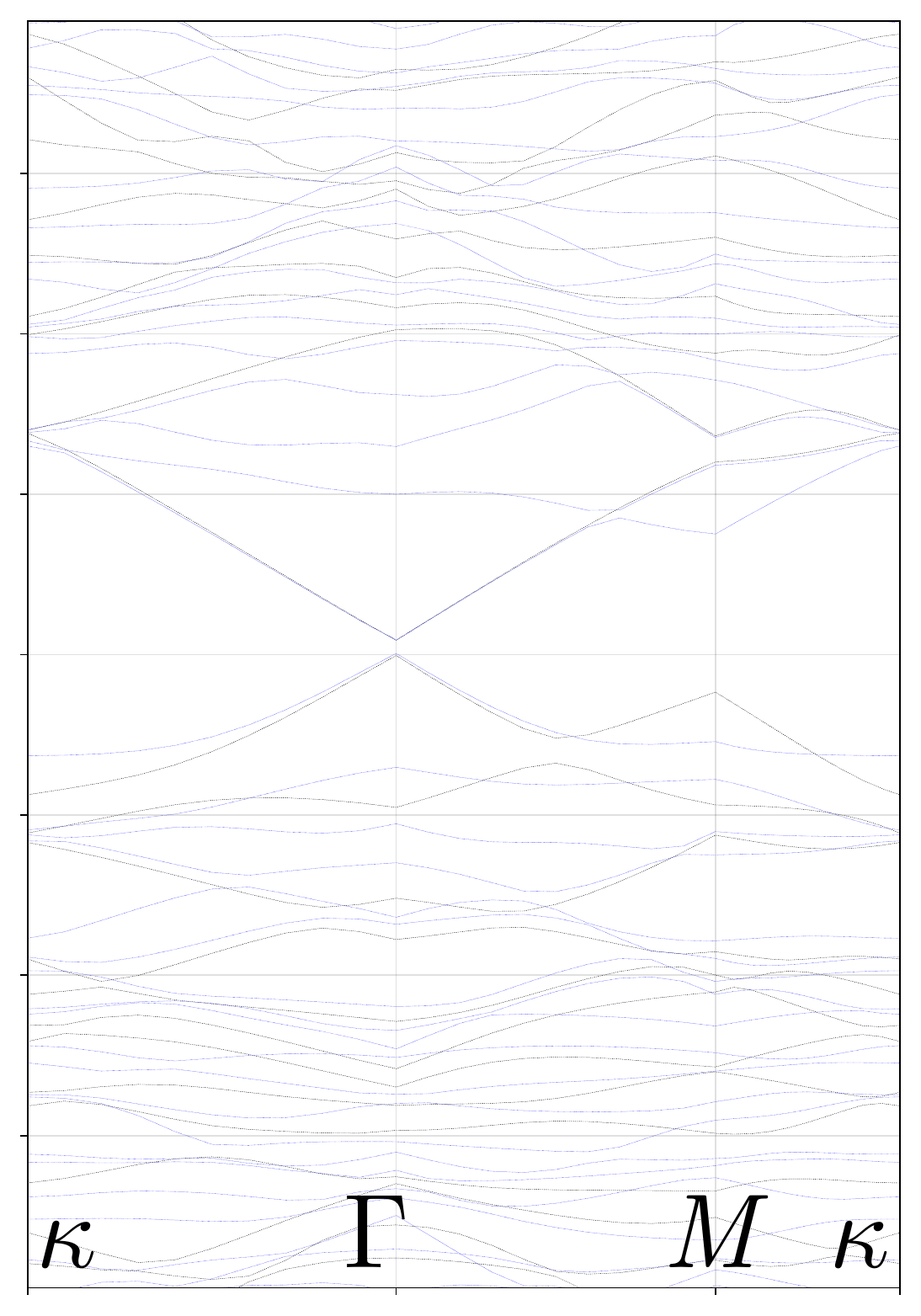}\caption{$\cF = \cF_{1}$}\end{subfigure}
\begin{subfigure}[b]{0.15\linewidth}\includegraphics[width=\linewidth,trim={0.35cm 0cm 0.34cm 0cm},clip]{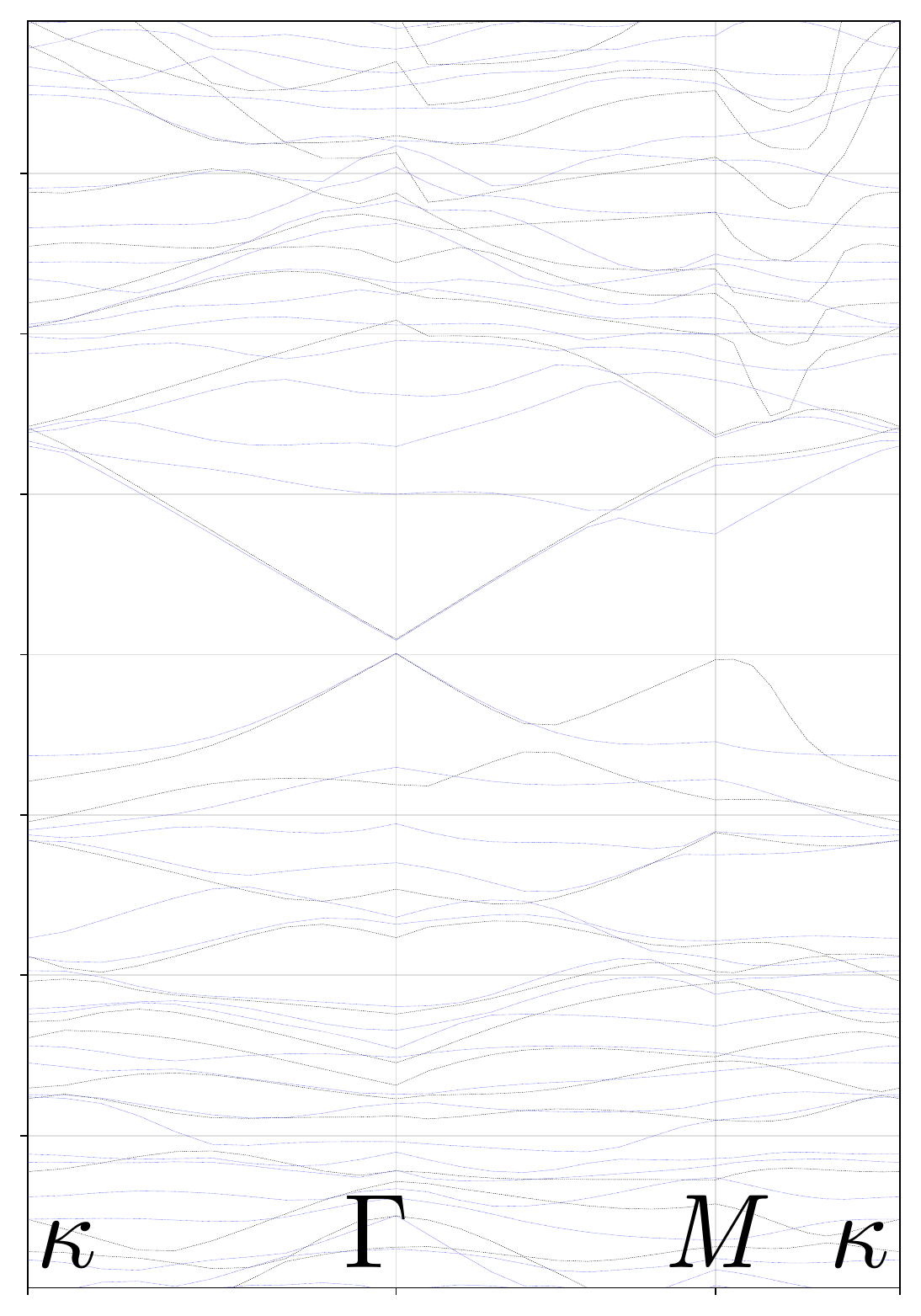}\caption{$\cF = \cF_{2,k}$}\end{subfigure}
\begin{subfigure}[b]{0.15\linewidth}\includegraphics[width=\linewidth,trim={0.35cm 0cm 0.34cm 0cm},clip]{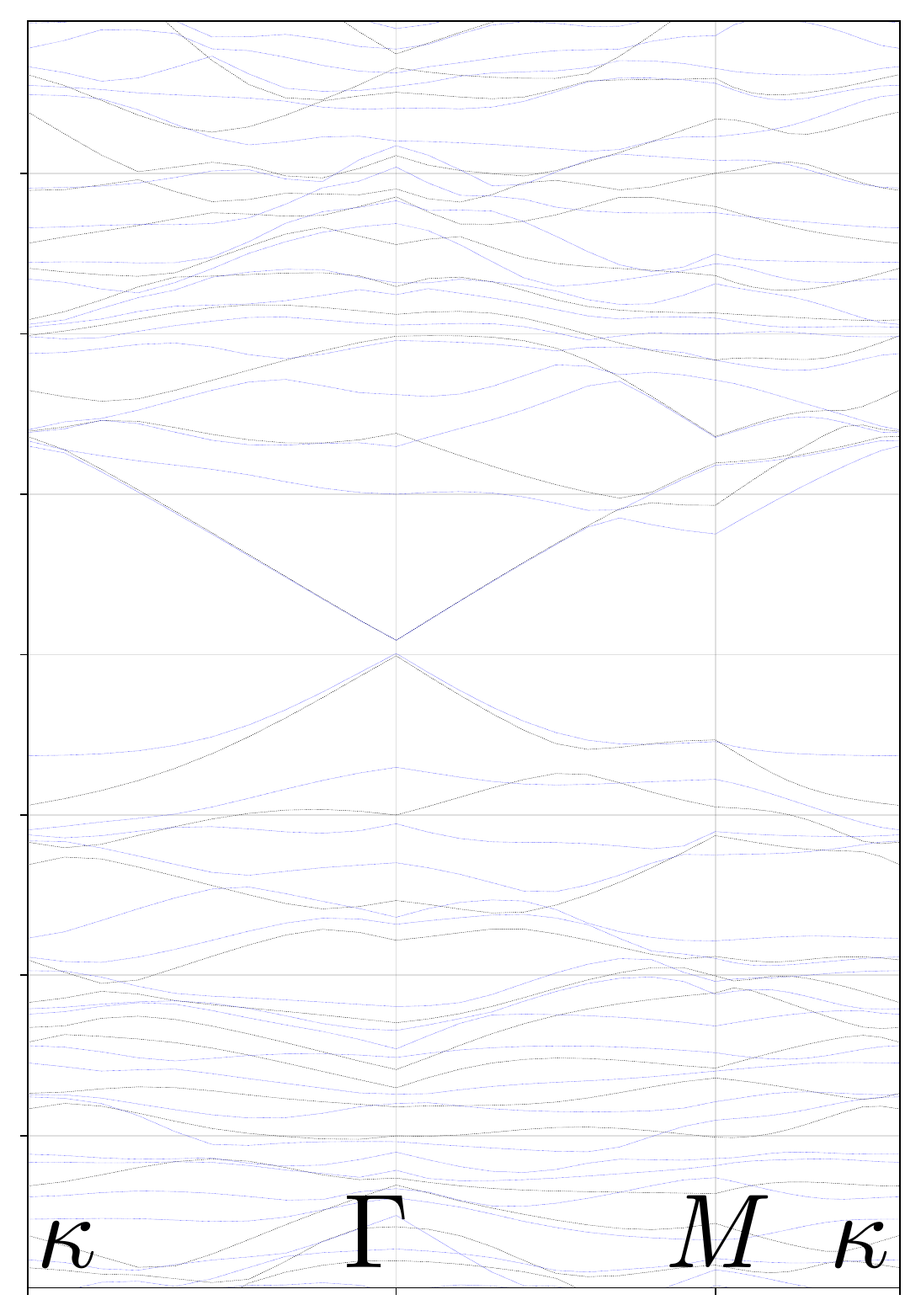}\caption{$\cF = \cF_{2}$}\end{subfigure}
\begin{subfigure}[b]{0.15\linewidth}\includegraphics[width=\linewidth,trim={0.35cm 0cm 0.34cm 0cm},clip]{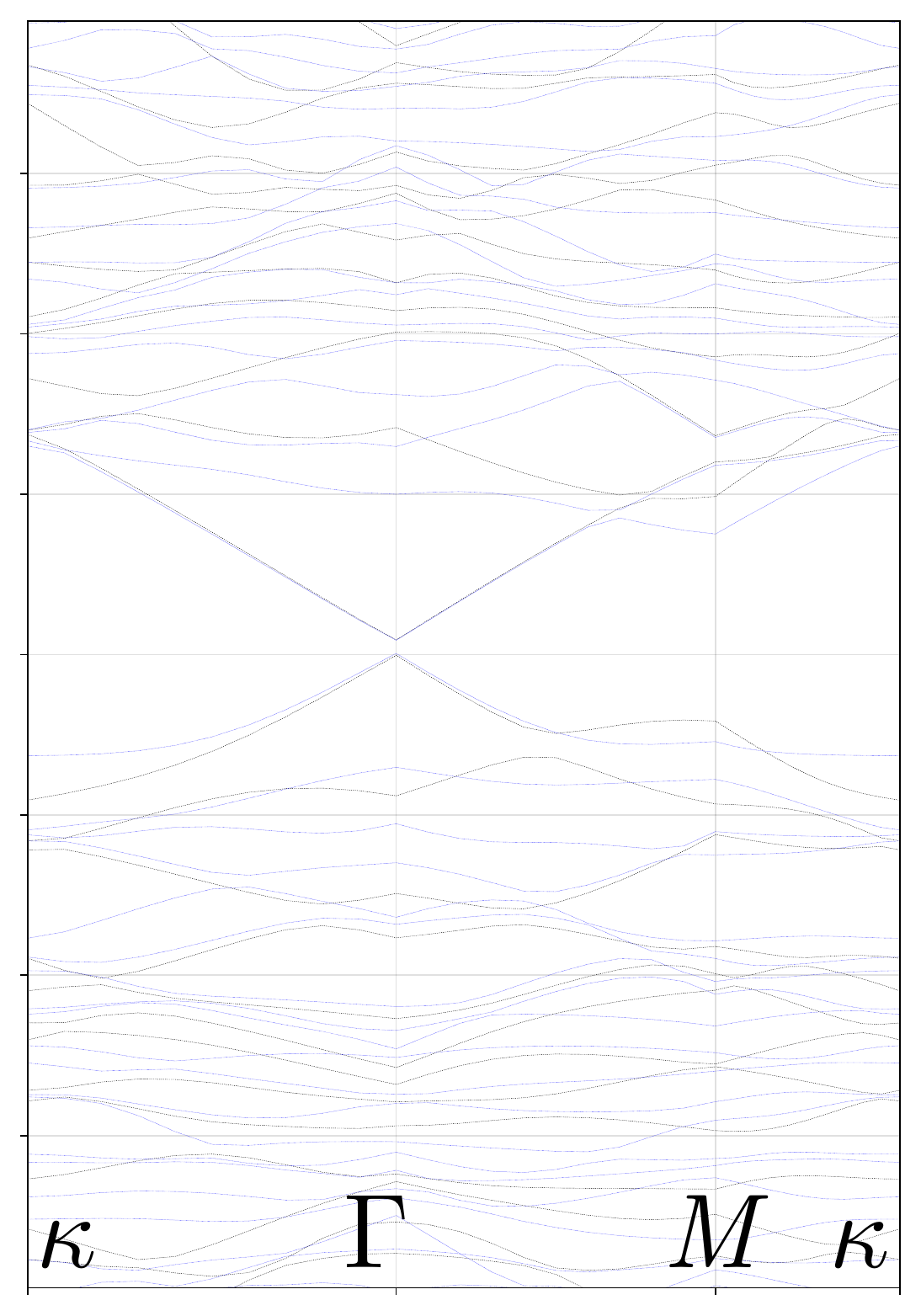}\caption{$\cF = \cF^6$}\end{subfigure} 
\restoregeometry
\caption{Letting $\ell$ vary, with $\ep = \f{1}{7}$, $\nu = 2$ and a non-zero $V = V_{\text{ng}}$ (with $\lambda = 5$) which does not have honeycomb symmetry, presented in~\eqref{eq:V_non_graphene}.}\label{fig:increase_ell_v_non_zero}
\end{center}
\end{figure}

\subsection{Varying $\ep$}%
\label{sub:Varying ep}

Now let us take $V = V_{\text{honeycomb}}$ defined in~\eqref{eq:def_lambda_V_honeycomb}, with $\lambda = 6$, $\cF_1$ and $\nu = 2$, and let $\ep$ vary. On Figure~\ref{fig:increase_ep_V0}, we display the band diagrams. The bands are well reproduced and similarly as in Figure~\ref{fig:increase_invep_fixed_nu}, the artificial bands are expelled as $\ep$ decreases.

\begin{figure}[h!]
\begin{center}
\newgeometry{left=0.5cm,right=0.5cm}  
\hspace{-6cm}
\begin{subfigure}[b]{0.15\linewidth}\includegraphics[width=\linewidth,trim={0.35cm 0cm 0.34cm 0cm},clip]{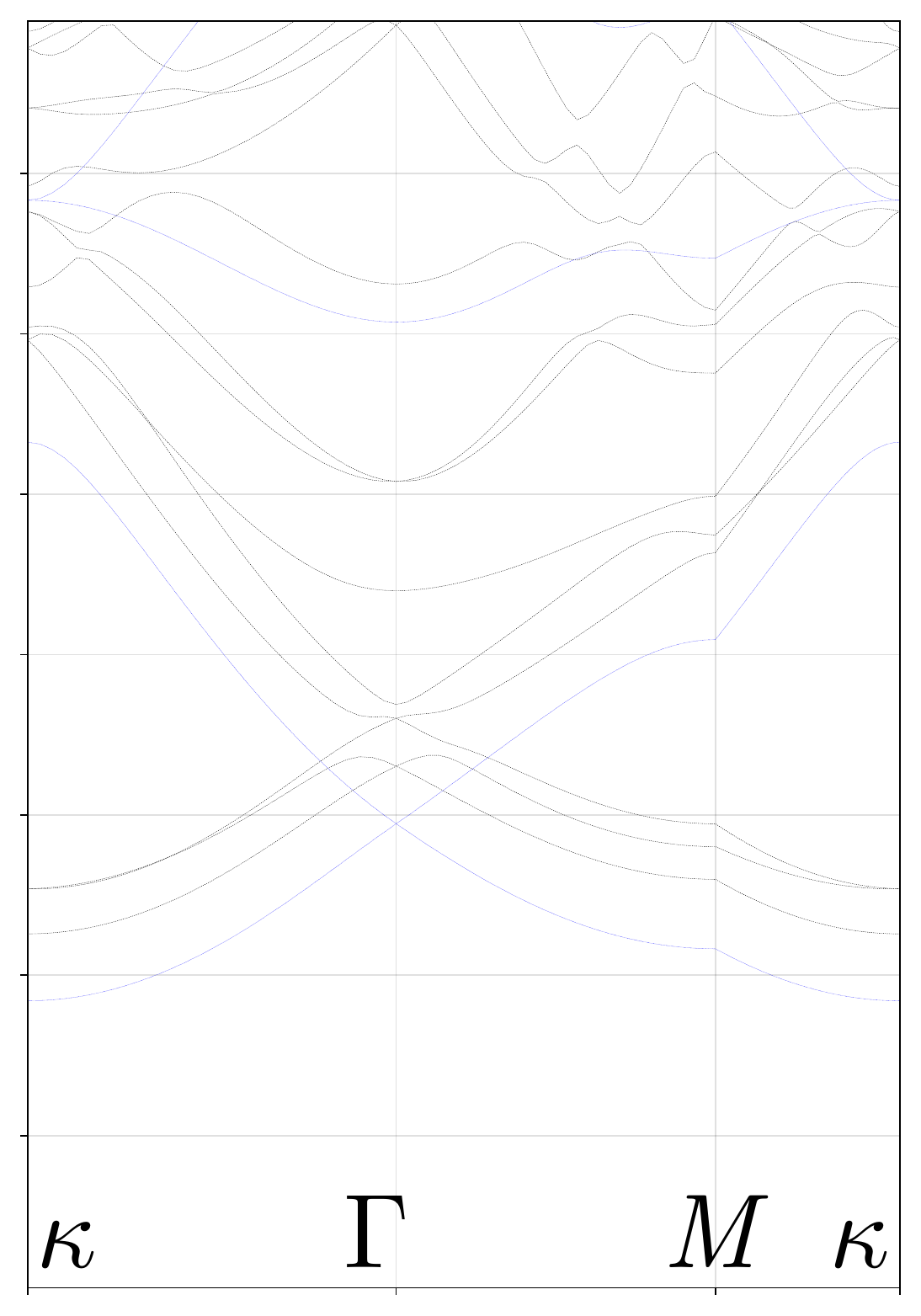}\caption{$\f{1}{\ep} = 1$}\end{subfigure}
\begin{subfigure}[b]{0.15\linewidth}\includegraphics[width=\linewidth,trim={0.35cm 0cm 0.34cm 0cm},clip]{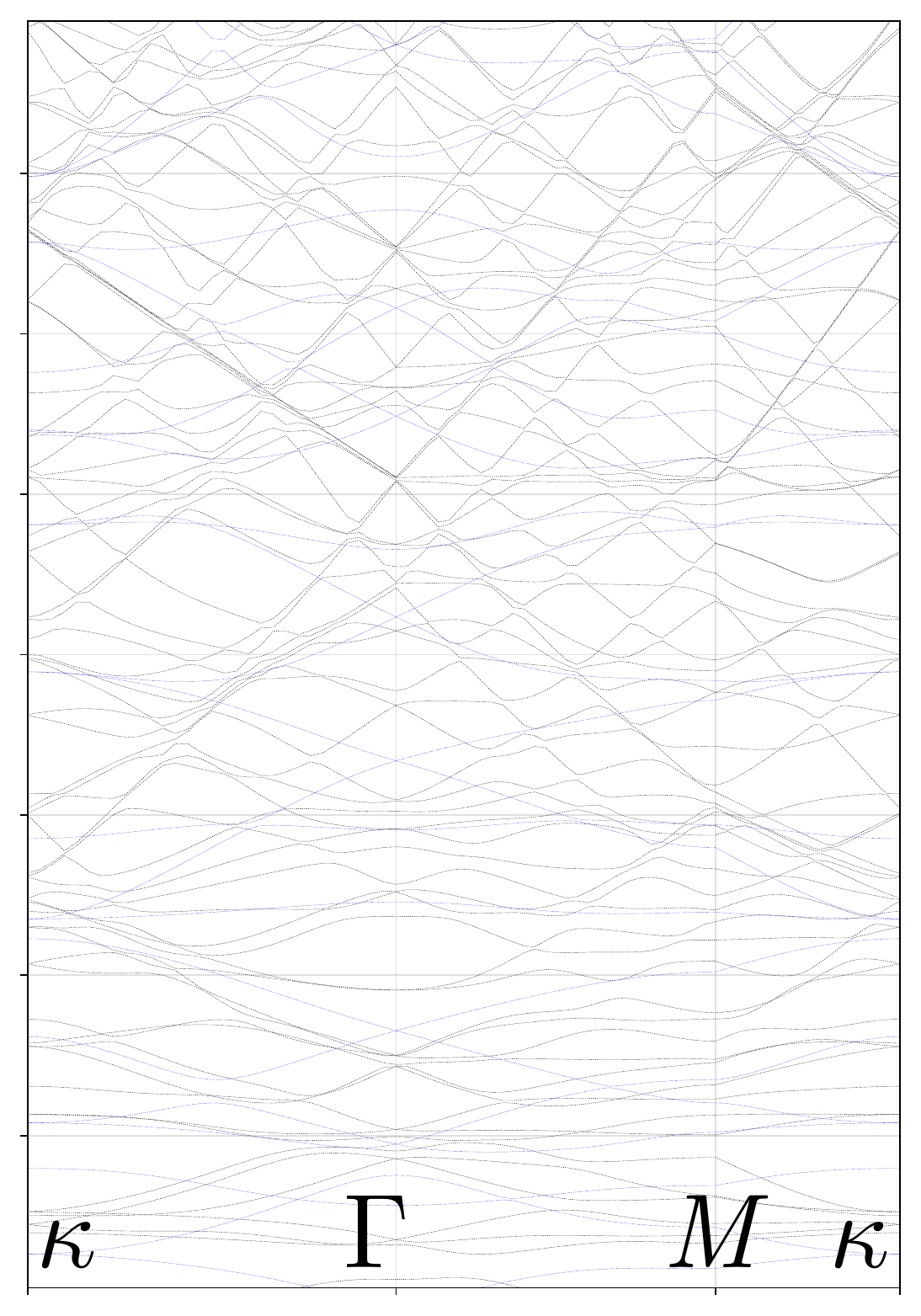}\caption{$\f{1}{\ep} = 2$}\end{subfigure}
\begin{subfigure}[b]{0.15\linewidth}\includegraphics[width=\linewidth,trim={0.35cm 0cm 0.34cm 0cm},clip]{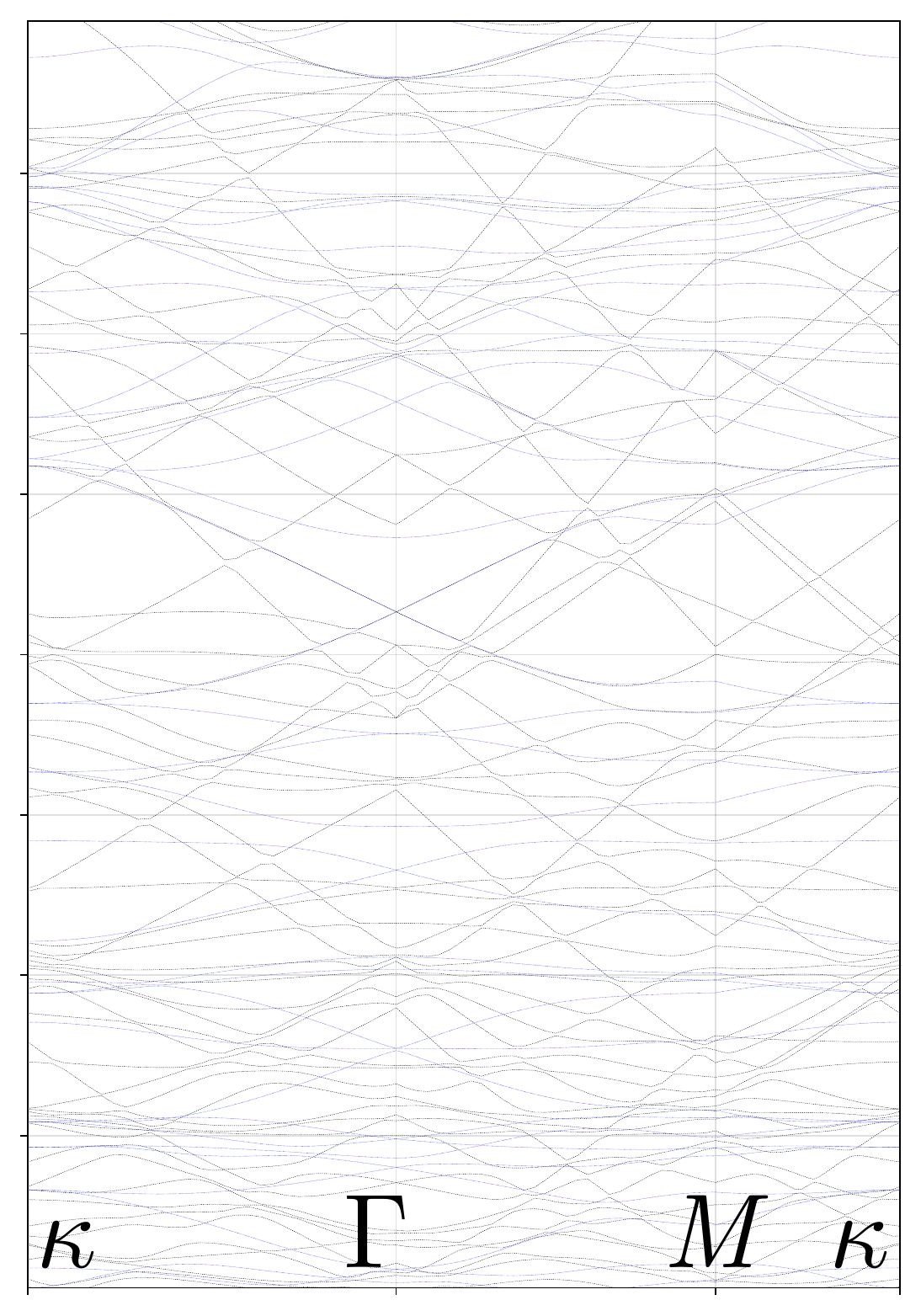}\caption{$\f{1}{\ep} = 7$}\end{subfigure}
\begin{subfigure}[b]{0.15\linewidth}\includegraphics[width=\linewidth,trim={0.35cm 0cm 0.34cm 0cm},clip]{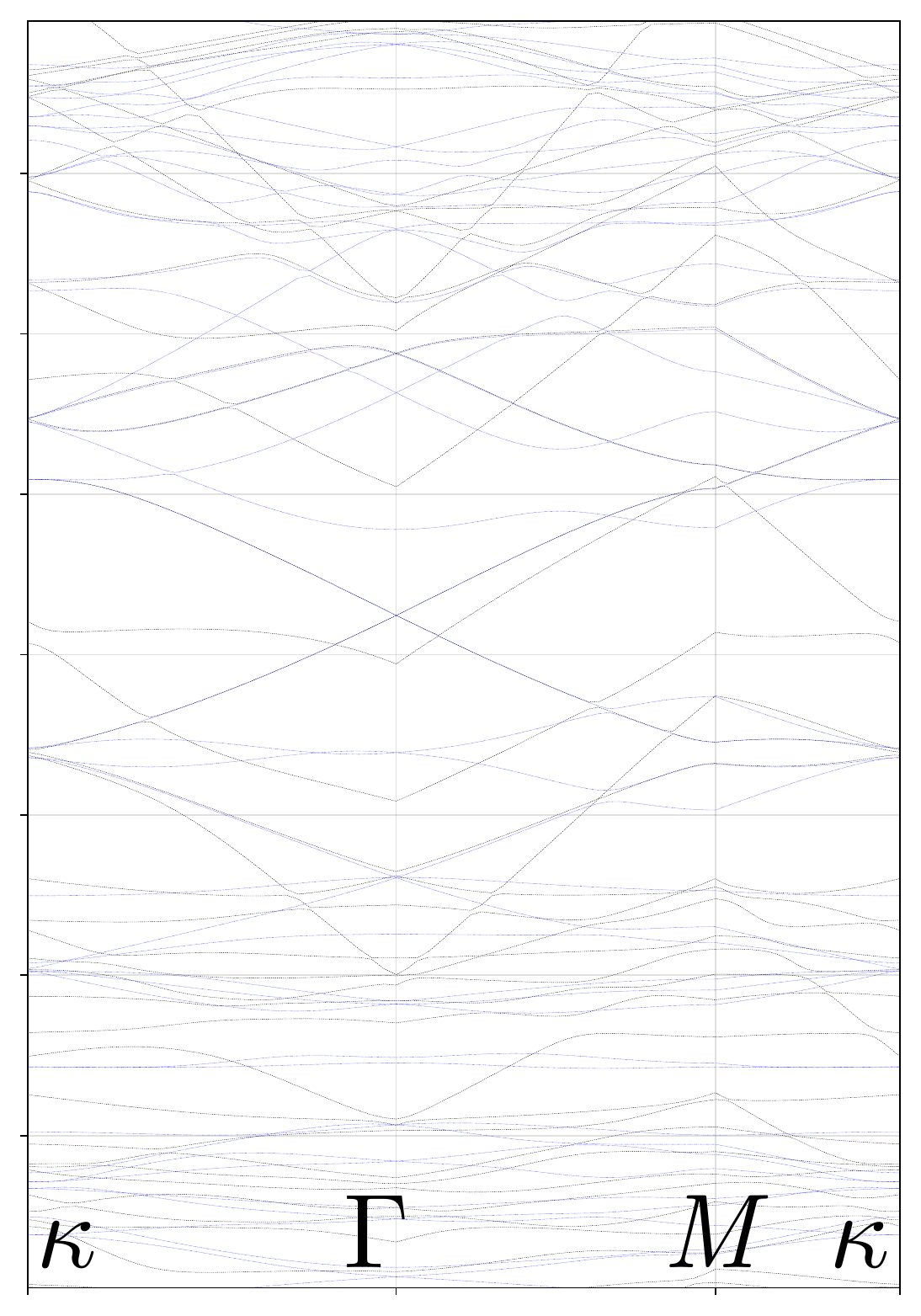}\caption{$\f{1}{\ep} = 11$}\end{subfigure}
\begin{subfigure}[b]{0.15\linewidth}\includegraphics[width=\linewidth,trim={0.35cm 0cm 0.34cm 0cm},clip]{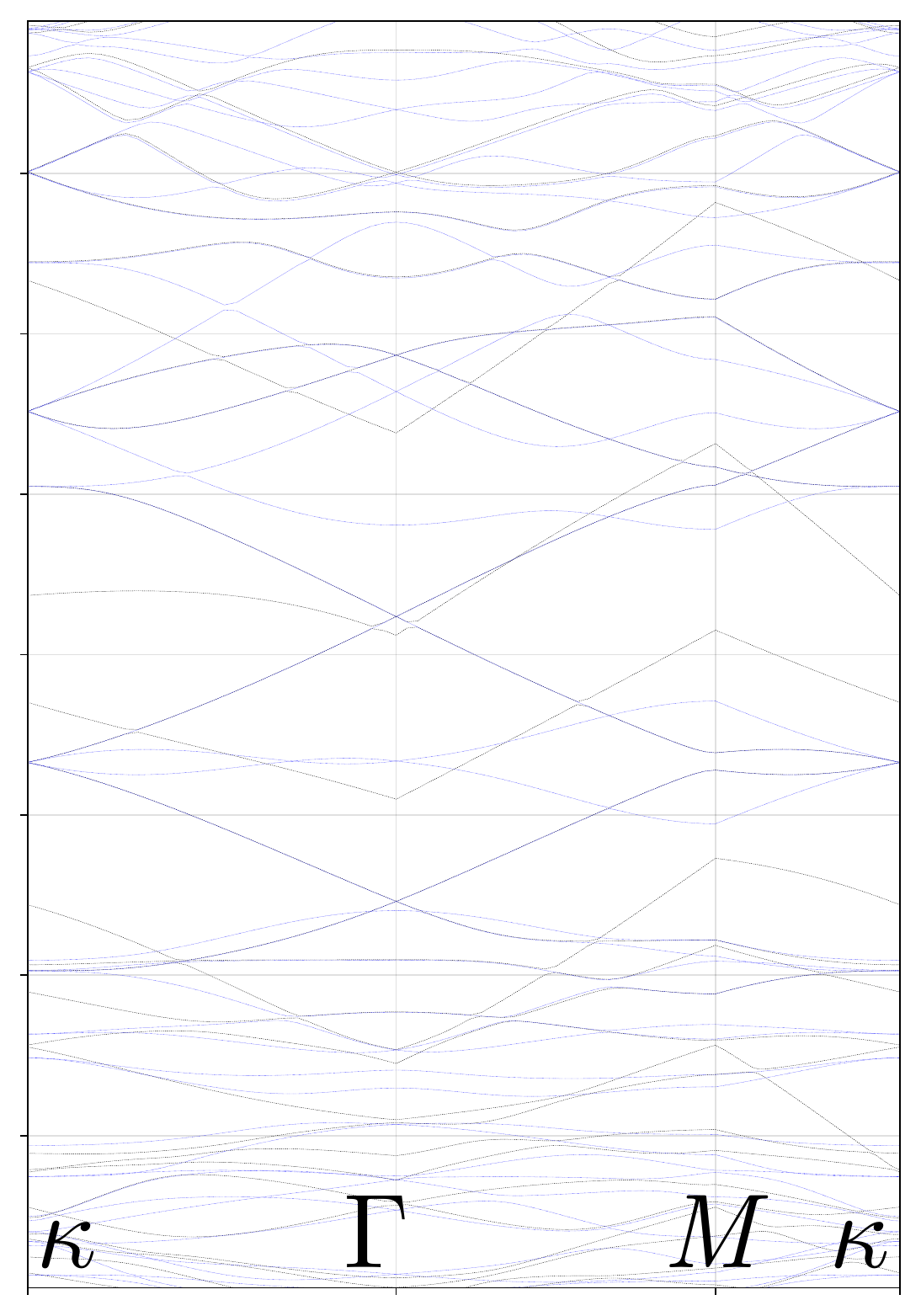}\caption{$\f{1}{\ep} = 14$}\end{subfigure}
\begin{subfigure}[b]{0.15\linewidth}\includegraphics[width=\linewidth,trim={0.35cm 0cm 0.34cm 0cm},clip]{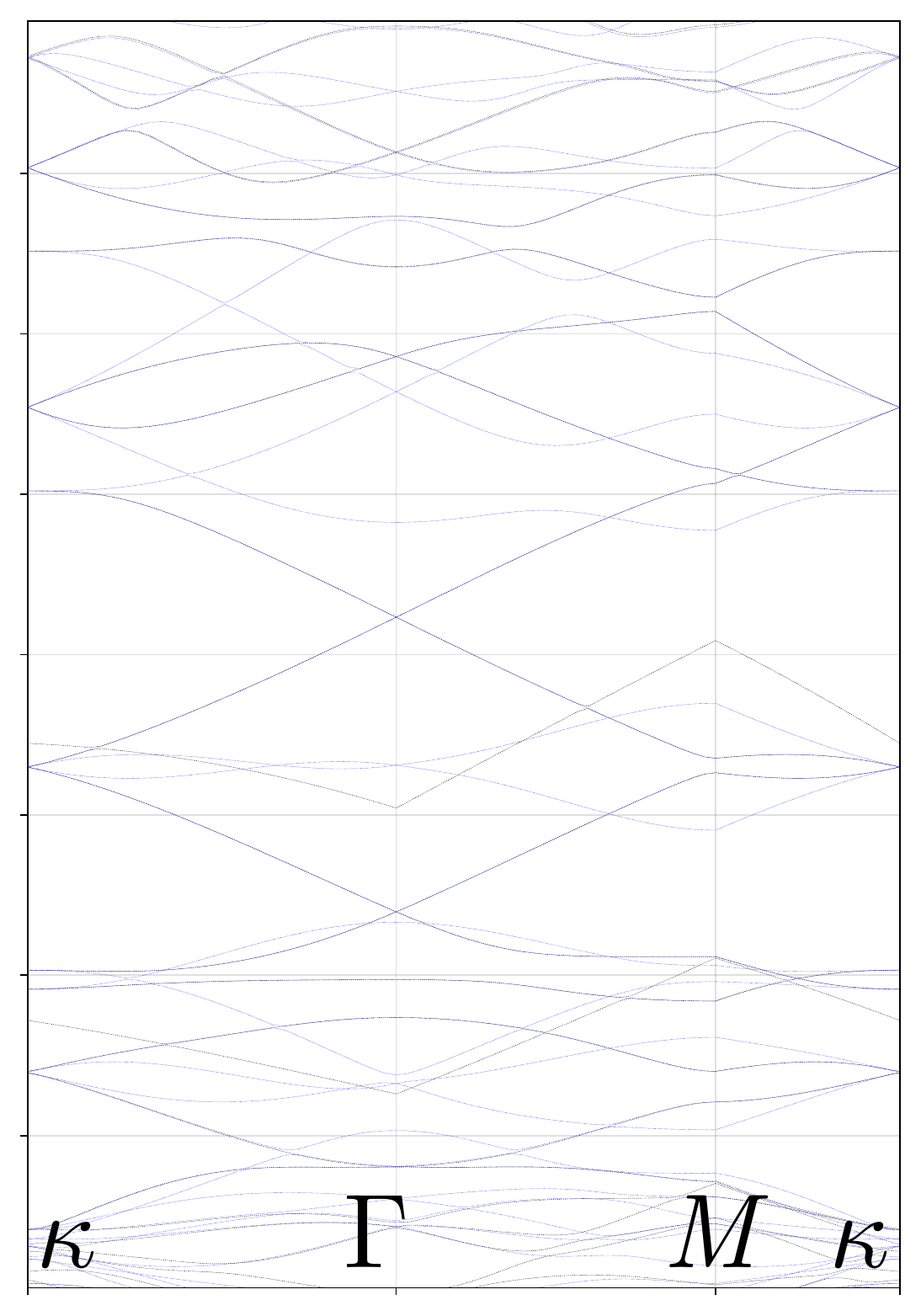}\caption{$\f{1}{\ep} = 19$}\end{subfigure} 
\restoregeometry
\caption{Letting $\ep$ vary, with $\cF_{1}$, $\nu = 2$, $V = V_{\text{honeycomb}}$ and $\lambda = 6$.}\label{fig:increase_ep_V0}
\end{center}
\end{figure}

\subsection{Varying $\lambda$}
\label{sub:Varying v}

On Figure~\ref{fig:increase_v}, we take $\ep = \f{1}{7}$, $\cF = \cF_1$, $V = V_{\text{honeycomb}}$, and we let $\lambda$ vary, see~\eqref{eq:def_lambda_V_honeycomb}. As expected, we observe that the branches coming from the Dirac cone progressively become inaccurate.

\begin{figure}[h!]
\begin{center}
\newgeometry{left=0.5cm,right=0.5cm}  
\hspace{-6cm}
\begin{subfigure}[b]{0.15\linewidth}\includegraphics[width=\linewidth,trim={0.35cm 0cm 0.34cm 0cm},clip]{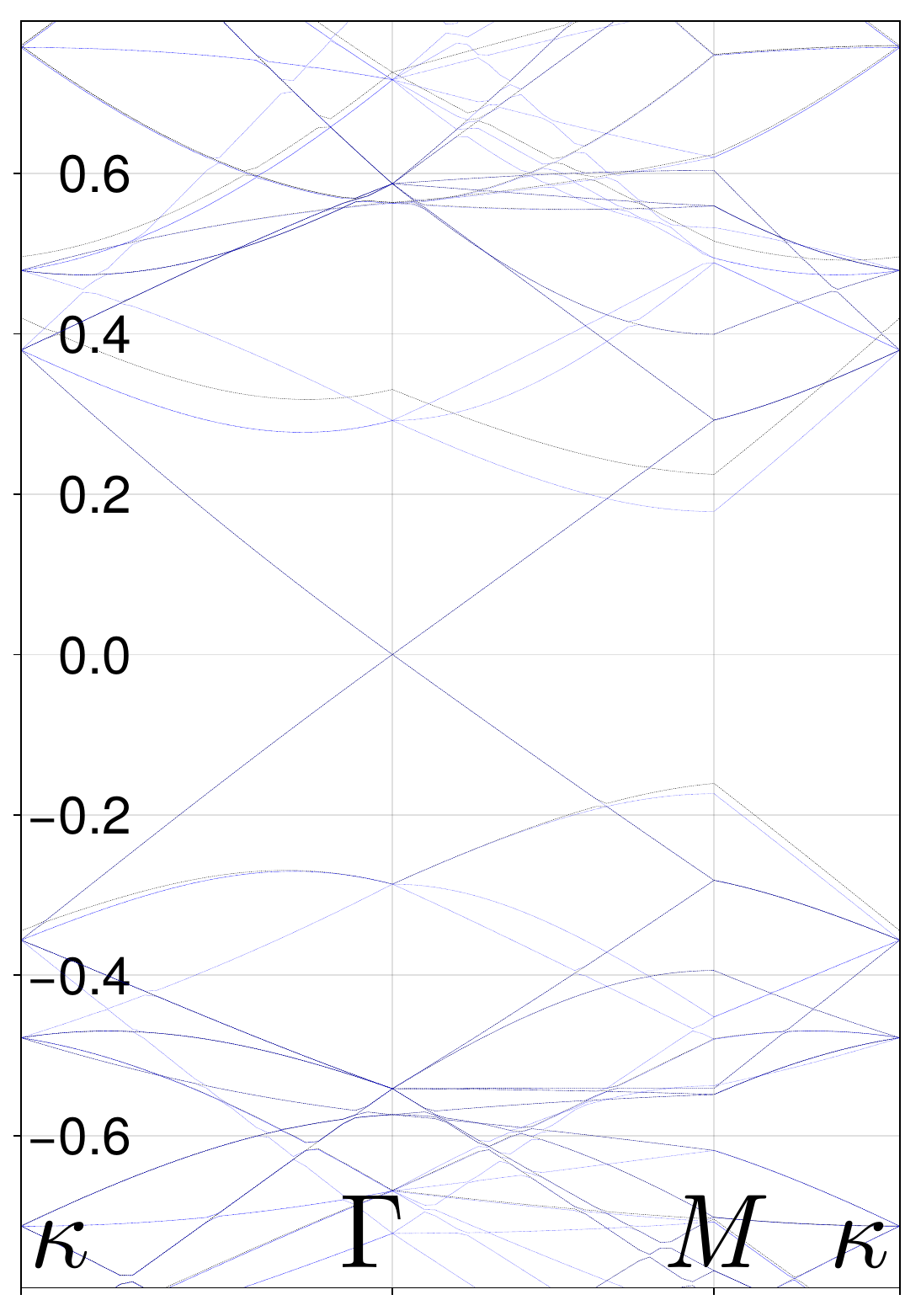}\caption{$\lambda = 0$}\end{subfigure}
\begin{subfigure}[b]{0.15\linewidth}\includegraphics[width=\linewidth,trim={0.35cm 0cm 0.34cm 0cm},clip]{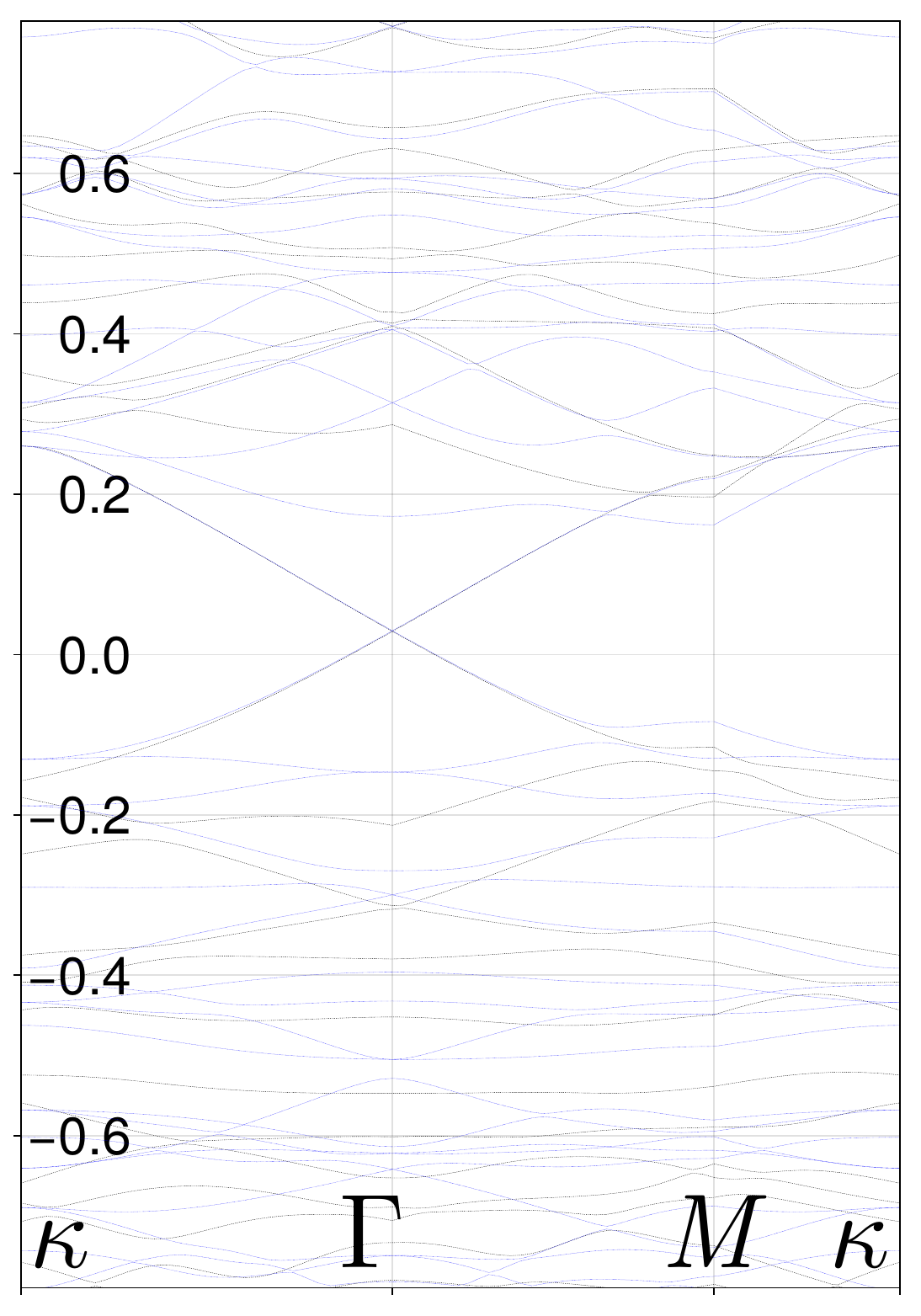}\caption{$\lambda = 5$}\end{subfigure}
\begin{subfigure}[b]{0.15\linewidth}\includegraphics[width=\linewidth,trim={0.35cm 0cm 0.34cm 0cm},clip]{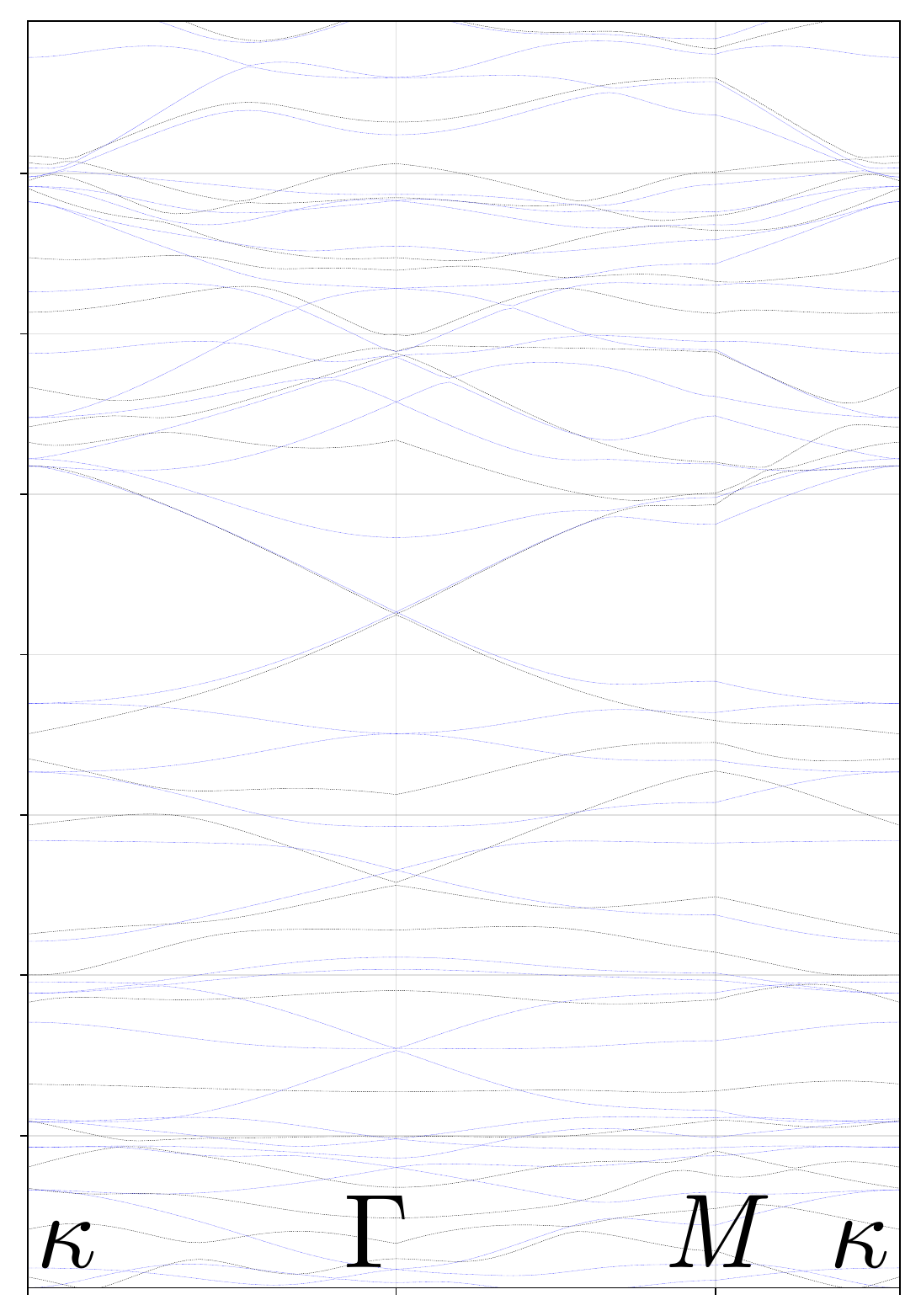}\caption{$\lambda = 6$}\end{subfigure}
\begin{subfigure}[b]{0.15\linewidth}\includegraphics[width=\linewidth,trim={0.35cm 0cm 0.34cm 0cm},clip]{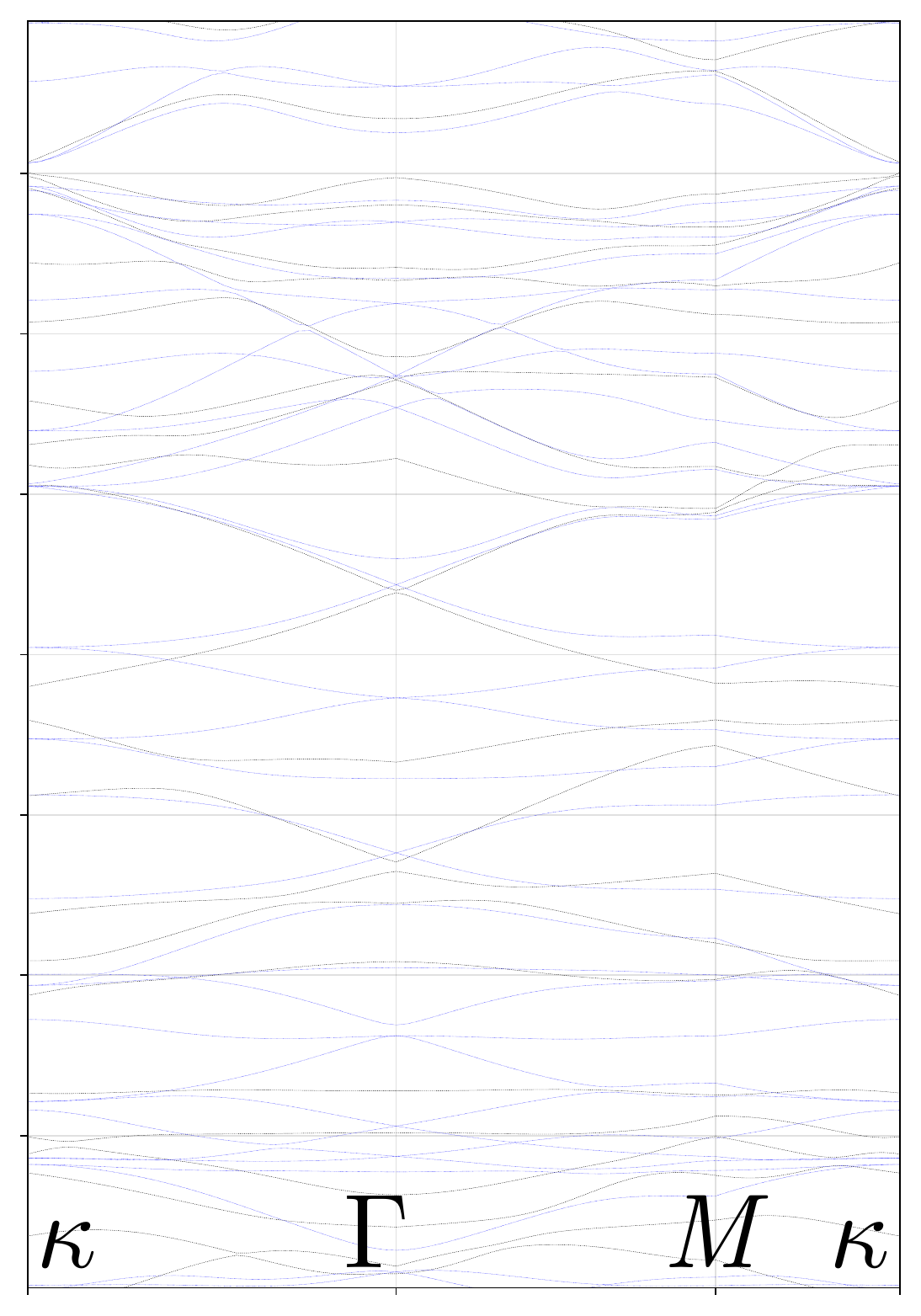}\caption{$\lambda = 7$}\end{subfigure}
\begin{subfigure}[b]{0.15\linewidth}\includegraphics[width=\linewidth,trim={0.35cm 0cm 0.34cm 0cm},clip]{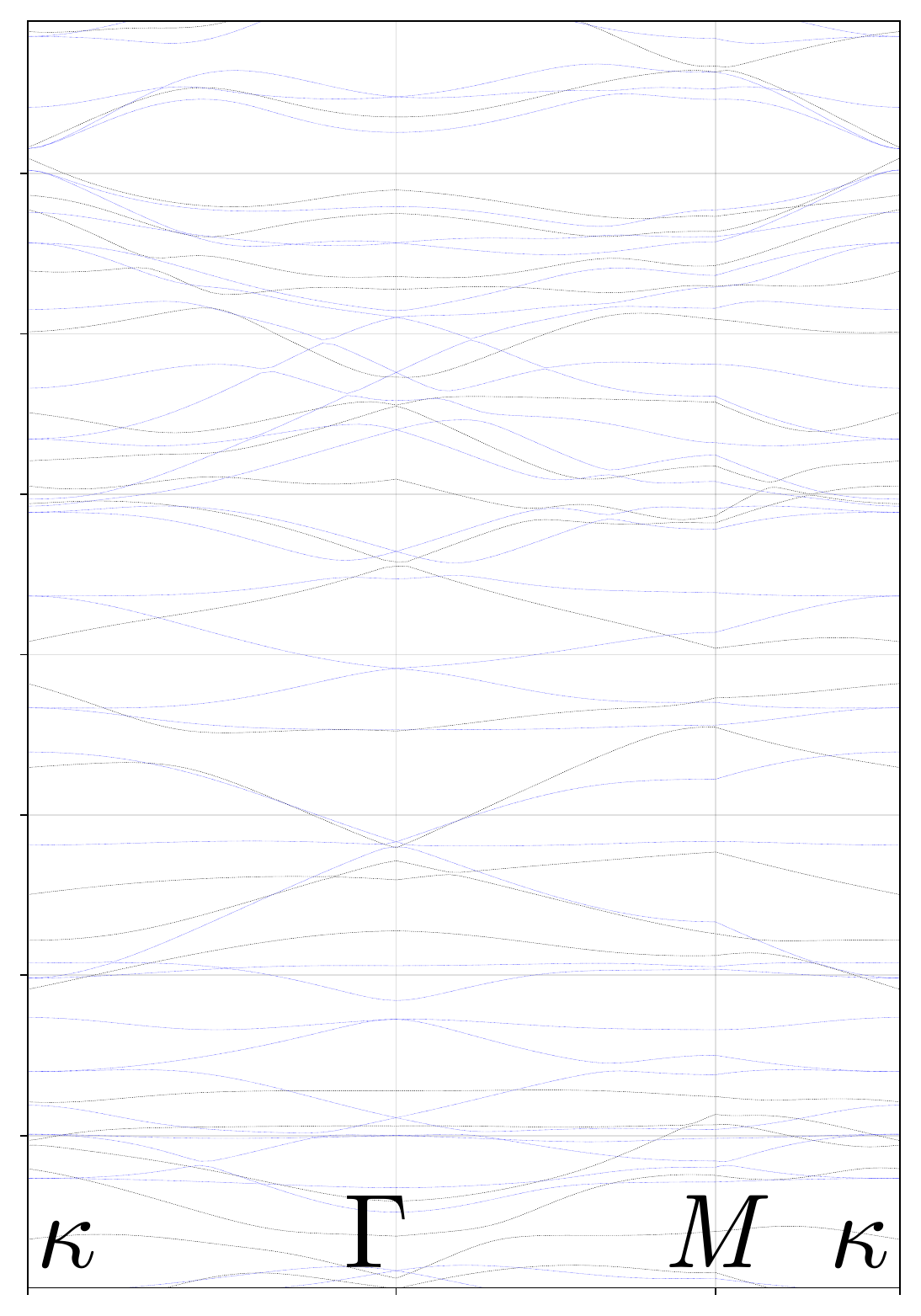}\caption{$\lambda = 8$}\end{subfigure}
\begin{subfigure}[b]{0.15\linewidth}\includegraphics[width=\linewidth,trim={0.35cm 0cm 0.34cm 0cm},clip]{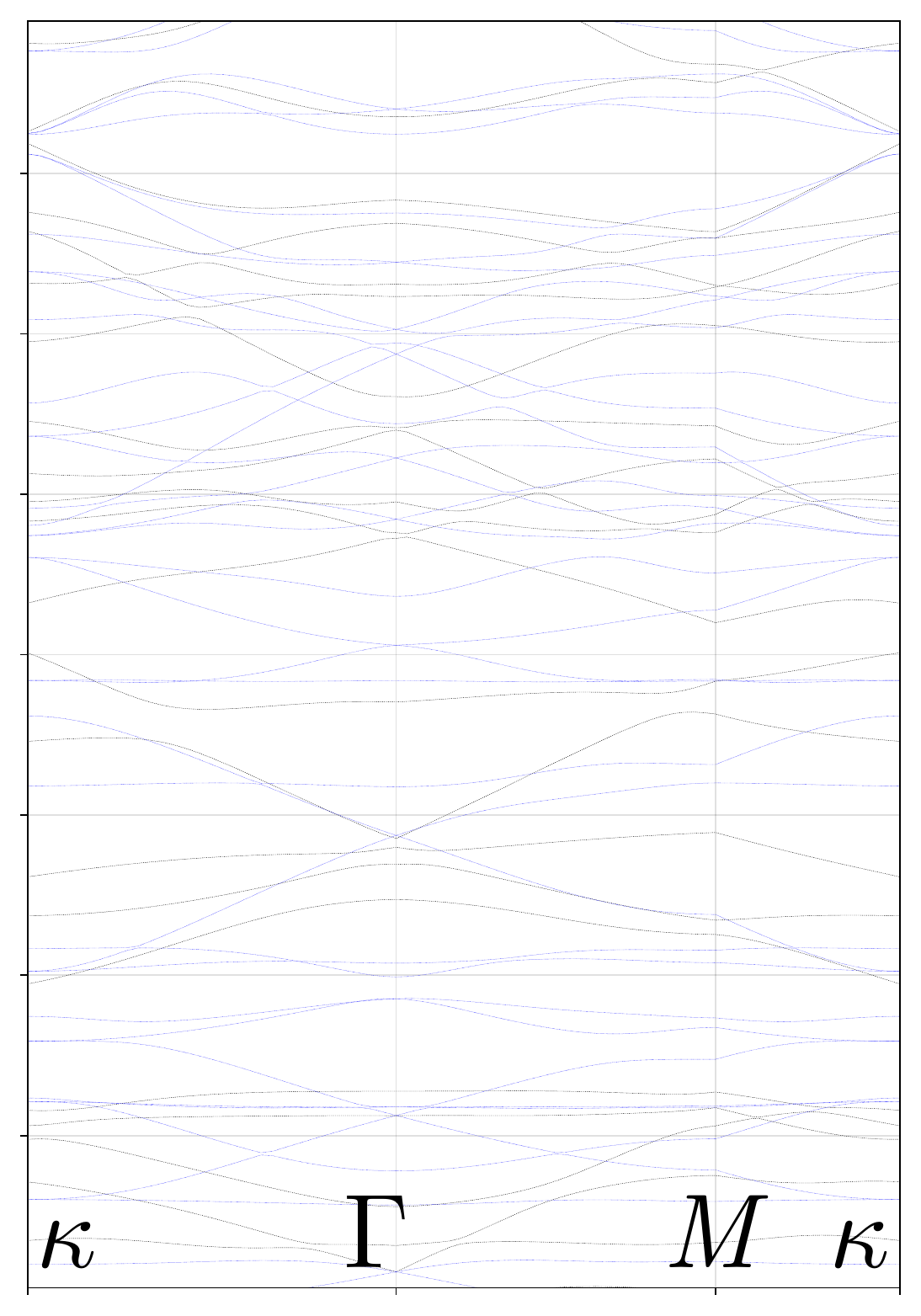}\caption{$\lambda = 9$}\end{subfigure} 
\restoregeometry
\caption{Letting $\lambda$ vary, which is the coupling constant of $V$, with $V = V_{\text{honeycomb}}$, $\ep = \f{1}{7}$, $\nu = 2$ and $\cF = \cF_1$.}\label{fig:increase_v}
\end{center}
\end{figure}

\subsection{Asymptotics}%
\label{sub:Asymptotics}

We will use the relative distance
\begin{align*}
	\d(\p,\phi) := \f{\nor{\p - \phi}{L^2\per(\Omega)}}{\f{1}{2} \pa{\nor{\p}{L^2\per(\Omega)}  + \nor{\phi}{L^2\per(\Omega)}  }}.
\end{align*}
Since the eigenvectors $\w_1$ and $\w_2$ defined in~\eqref{eq:w1w2} are exact when $k=0$ and $V = 0$, we can consider that the two sources of error are the fact that $k \neq 0$ and $\lambda \neq 0$. Hence we want to better estimate the errors in $k$ and $\lambda$, independently from each other. Thus on Figure~\ref{fig:Vk_varies}, we represent 
\begin{align}\label{eq:dist_eigenvectors} 
	\d\pa{\Phi^\ep_k , \cJ \alpha^\ep_k},
\end{align}
in the two following cases.
\begin{itemize}
\item On the left, we take $\lambda =0$, and $k = \mu K$ for instance, and we let $\mu \ge 0$ vary. Since $\lambda = 0$ then $V = 0$ and we only see the error coming from the fact that we are not at $k=0$. As expected from variational pertubation theory, the error is proportional to $\ab{\mu}^{\ell +1}$ for models of order $\ell$ (we precise that $\cF^6$ is a model with order $\ell = 0$). Both for $\mu$ small and larger, the effective operator coming from perturbation theory (which is $\cF_1$) is slightly more precise than the one coming from adding excited states in the variational space (which is $\cF^6$). We can compare the results of those two operators because the corresponding families have the same number of microscopic states.
\item On the right, we take $k=0$ and let $\lambda$ vary, so we only see the error coming from the fact that we are not at $V=0$. Here, we see that for $\lambda$ small enough, the effective operator built from first order pertubative vectors ($\cF_1$) is slightly better than the effective operator coming from excited states ($\cF^6$). However, this accuracy order is reversed for large $V$'s.
\end{itemize}

\begin{figure}[h!]
\begin{center}
\includegraphics[height=5cm,trim={0cm 0cm 0cm 0cm},clip]{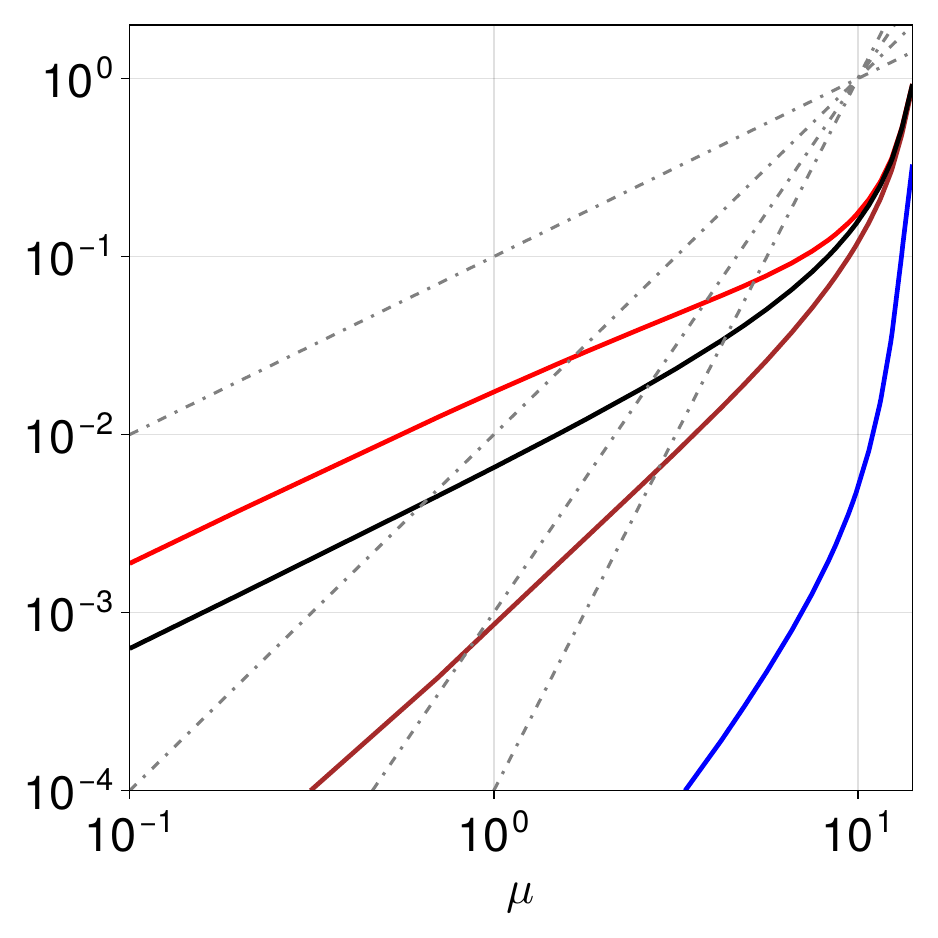} 
\includegraphics[height=5cm,trim={0cm 0cm 0cm 0cm},clip]{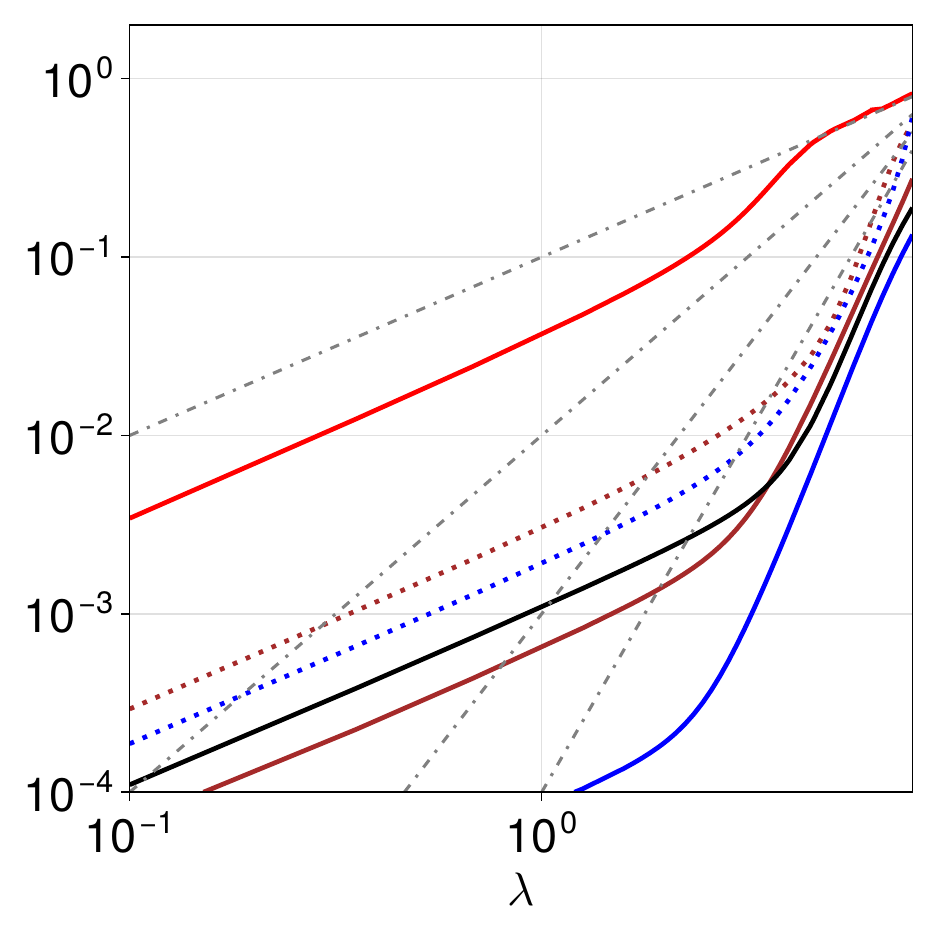} 
\includegraphics[height=5cm,trim={2.9cm 0cm 2.9cm 2cm},clip]{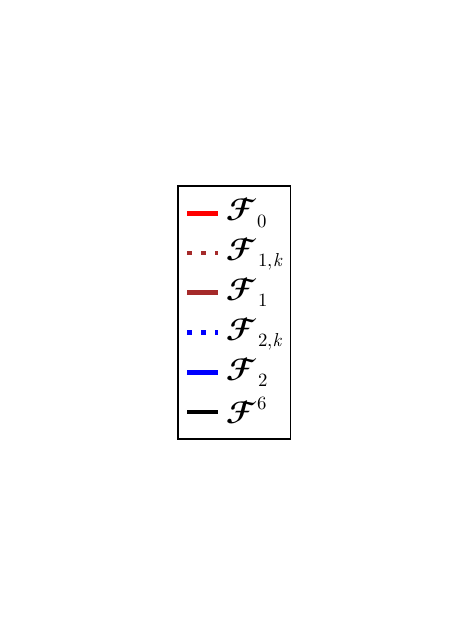}
\end{center}
\caption{We take $V = V_{\text{honeycomb}}$, $\nu = 2$, $\f{1}{\ep} = 7$ and display the relative distance on eigenvectors~\eqref{eq:dist_eigenvectors} (between the exact and effective models). On the left, we take $\lambda = 0$ and let $k = \mu K$ vary on a one-dimensional segment starting from $k = 0$, we do not provide $\cF_{\ell,k}$ and the curve of $\cF_2$ is below the other ones and we do not see it. On the right we take $ k =0$ and we let $\lambda$ (the intensity of $V$) vary. The grey lines are reference curves $(x/10)^{a}$ for $a \in \{1,2,3,4\}$ to enable asymptotic behavior comparisons.} \label{fig:Vk_varies}
\end{figure}

\subsection{Conclusion}%
\label{sub:Conclusion}

The goal of this document was to obtain accurate effective operators for the Bloch eigenmodes of the two-scales exact Schrödinger operator~\eqref{eq:exact_ep_inv_omega}. The derivation of effective operators by using variational perturbation theory led to several models, depending on the perturbation order and on the momentum dependence. 

Here are the main conclusions brought by the simulations.
\begin{itemize}
\item The precision given by the massless Dirac operator can be significantly improved to approach the eigenmodes of the exact Bloch operator, as we saw on Figures~\ref{fig:increase_ell} and~\ref{fig:increase_ell_v_non_zero};
\item The cut-off $\nu$ needs to be well-chosen to avoid spectral pollution of the effective models. We can in general choose $\nu = 2$, as explained in Section~\ref{sub:The cutoff};
\item We are not retaining the $k$-dependent families $\cF_{1,k}$ and $\cF_{2,k}$ because of non-smooth behaviors and lack of accuracy seen in Figures~\ref{fig:increase_ell},~\ref{fig:increase_ell_v_non_zero} and~\ref{fig:Vk_varies};
\item As seen in Section~\ref{sub:Asymptotics}, the effective operator $\cF_1$ built with perturbative (with respect to the momentum) vectors at first order has most of the time better precision than the effective operator $\cF^6$ coming from adding eigenstates at $K$, and hence we recover the same kind of conclusions as in~\cite[Section 6.3]{GarSta24}.
\end{itemize}
We believe that the use of the proposed effective operators could be of interest in some applications to model low-energy Dirac fermions, when precision is needed.

\section{Proof of Proposition~\ref{prop:weak_conv}}%
\label{sec:proof of prop}

In this section, we derive the effective operator $\bbH\elk$ written in~\eqref{eq:effective_op_general} from the exact operator $H\elk$ expressed in~\eqref{eq:exact_op}, by using the same method as in~\cite{CanGarGon23b}.

We define $a := \icol{a_1 \\ a_2}$ and for $m \in \Z^2$, we will use the notation $m a := m_1 a_1 + m_2 a_2$. Our Fourier transform will be the operator $\bbF : L^2\per(\Omega) \rightarrow \ell^2(\Z^2)$ such that for any $f \in L^2\per(\Omega)$,
\begin{align}\label{eq:def_fft}
\pa{\bbF f}_m = \widehat{f}_m =  \f{1}{\sao} \int_{\Omega} f(y) e^{-i m a^* y} \d y, \qquad f(x) = \sum_{m \in \Z^2} \widehat{f}_m \f{e^{ima^* x}}{\sao}.
\end{align}

The following result is classical and explains the decoupling between macroscopic and microscopic scales
\begin{lemma}
	Take $g \in \cC^{\infty}\per(\Omega)$ and $f \in L^2\per(\Omega)$. For any $N \in \N \cup \{0\}$ there is $C_N > 0$ such that uniformly in $\ep \in ]0,1[ \cap (1/\N)$, we have
\begin{align}\label{eq:oscillating} 
	\ab{\int_\Omega g(x) f \xep \d x - \ab{\Omega}^{-1} \int_\Omega g \int_\Omega f} \le c_N \ep^N.
\end{align}
\end{lemma}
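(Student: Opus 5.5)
The plan is to expand both functions in the Fourier basis of~\eqref{eq:def_fft} and exploit the fact that, for $\ep = 1/n$ with $n \in \N$, the function $x \mapsto f\xep = f(nx)$ only contains the frequencies $n m a^*$, $m \in \Z^2$. Write
\begin{align*}
f\xep = \sum_{m \in \Z^2} \widehat{f}_m \f{e^{i n m a^* x}}{\sao},
\end{align*}
which converges in $L^2\per(\Omega)$ because $f \mapsto f(n\,\cdot)$ is an isometry of $L^2\per(\Omega)$, by the change of variables $y = nx$ and the periodicity of $f$.

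By Parseval's identity in $L^2\per(\Omega)$,
\begin{align*}
\int_\Omega g(x) f\xep \d x = \sum_{m \in \Z^2} \widehat{f}_m \, \widehat{g}_{-nm},
\end{align*}
up to the normalisation convention. The term $m = 0$ gives $\widehat{f}_0 \widehat{g}_0 = \ab{\Omega}^{-1} \int_\Omega f \int_\Omega g$, which is the mean-value term in~\eqref{eq:oscillating}. It therefore remains to bound the sum over $m \neq 0$.

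The key step is the rapid decay of the Fourier coefficients of the smooth function $g$. Integrating by parts $2N'$ times in the definition of $\widehat{g}_p$, I would use
\begin{align*}
\ab{\widehat{g}_p} \le C_{N'} \ab{p a^*}^{-2N'} \nor{\Delta^{N'} g}{L^1(\Omega)}
\end{align*}
for $p \neq 0$. Since $\ab{n m a^*} \ge c\, n \ab{m}$ with $c > 0$ depending only on the lattice, Cauchy--Schwarz gives
\begin{align*}
\sum_{m \neq 0} \ab{\widehat{f}_m}\ab{\widehat{g}_{-nm}} \le \nor{f}{L^2} \Big(\sum_{m\neq 0} \ab{\widehat{g}_{-nm}}^2\Big)^{1/2} \le C_{N'} \nor{f}{L^2} \, n^{-2N'} \Big(\sum_{m \neq 0} \ab{m}^{-4N'}\Big)^{1/2}.
\end{align*}
For $N' \ge 1$ the last sum is finite in dimension two, so choosing $2N' \ge N$ yields the bound $C_N \ep^N$. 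The constant depends on $\nor{f}{L^2}$ and on finitely many derivatives of $g$, but not on $\ep$.

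There is no real obstacle; the only point needing care is the requirement $\ep \in 1/\N$. It ensures that $f(\cdot/\ep)$ is $\Omega$-periodic and that the frequencies of $f(\cdot/\ep)$ form the sublattice $n\bbL^*$ of $\bbL^*$. Without it, the cross terms would no longer be coefficients of $g$ at lattice points, and the mean-value formula would fail.
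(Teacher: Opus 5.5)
Your proposal is correct and follows essentially the same route as the paper. Both expand in the Fourier basis~\eqref{eq:def_fft} and use that $f\xep$ only carries frequencies in $\ep^{-1}\bbL^*$, so that the integral reduces to $\sum_{m} \widehat{f}_m \widehat{g}_{-\ep^{-1} m}$. Both then isolate the $m=0$ mean-value term and bound the remainder by Cauchy--Schwarz against $\nor{f}{L^2\per(\Omega)}$, using the rapid decay of $\widehat{g}$. Your version is slightly more careful than the paper's: you justify the $L^2$ convergence of the rescaled series via the isometry $f \mapsto f(n\,\cdot)$, and you take the decay exponent large enough ($4N' > 2$) that $\sum_{m \neq 0} \ab{m}^{-4N'}$ is finite, a point the paper leaves implicit.
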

\begin{proof}
	Since $f \in L^2\per(\Omega)$ then $f \in L^1\per(\Omega)$ because $\Omega$ is bounded. We recall that $\ep^{-1} \in \N$. We compute
\begin{align*}
	&\int_\Omega g(x) f \xep \d x = \ab{\Omega}^{-1} \hspace{-0.2cm}\sum_{n,m \in \Z^2} \widehat{g}_m \widehat{f}_n \int_\Omega e^{i(m+ \f{n}{\ep})a^*x} \d x =\hspace{-0.2cm} \sum_{n,m \in \Z^2} \widehat{g}_m \widehat{f}_n \delta_{m + \ep^{-1}n} \\
				     &\qquad \qquad =  \sum_{n \in \Z^2} \widehat{g}_{-\ep^{-1} n} \widehat{f}_n = \ab{\Omega}^{-1} \int_\Omega g \int_\Omega f + \sum_{n \in \Z^2 \backslash \{(0,0)\}} \widehat{g}_{-\ep^{-1} n} \widehat{f}_n.
\end{align*}
Moreover, 
\begin{align*}
	\ab{\sum_{n \in \Z^2 \backslash \{(0,0)\}} \widehat{g}_{-\ep^{-1} n} \widehat{f}_n}^2 \le \pa{\sum_{n \in \Z^2 \backslash \{(0,0)\}} \ab{\widehat{g}_{-\ep^{-1} n}}^2} \nor{f}{L^2\per(\Omega)}^2.
\end{align*}
Since $g$ is smooth, for any $N \in \N$, there exists $c_N > 0$ such that uniformly in $m \in \Z^2$, $\ab{\widehat{g}_m} \le \f{c_N}{1+\ab{m}^N}$, thus if $m \neq 0$, $\ab{\widehat{g}_{-\ep^{-1} m}} \le \ep^N \f{c_N}{\ep^N+\ab{m}^N}\le \ep^N c_N \ab{m}^{-N} $. 
\end{proof}
We will denote by $O(\ep^\infty)$ any quantity $X$ such that for any $N \in \N$, there exists $c_N > 0$ such that for any $\ep \in ]0,1[ \cap (1/\N)$, $\ab{X} \le c_N \ep^N$.


For any functions $f$ and $g$ and any $p,q \in \R^2$, we have
\begin{align}\label{eq:deriv_one} 
	(-i\na_{p+q}) (fg) = f (-i\na_p) g + g (-i\na_q) f.
\end{align}
From~\eqref{eq:deriv_one} and $\na_{\ep^{-1} K} Q = \ep^{-1} Q \na_K$ we deduce that
\begin{multline*}
(-i\na_{\ep^{-1}K + k} + A) (g(x) \vp \xep) \\
= \vp \xep ((-i\na_k + A) g) (x) + \ep^{-1}g(x) \pa{-i\na_K \vp} \xep,
\end{multline*}
and thus
\begin{multline*}
(-i\na_{\ep^{-1}K + k} + A)^2 (g(x) \vp \xep) = \vp\xep ((-i\na_k + A)^2 g) (x) \\
+ 2 \ep^{-1} ((-i\na_k + A) g)(x) \cdot (-i\na_K  \vp) \xep + \ep^{-2} g(x) \pa{(-i\na_K )^2 \vp} \xep.
\end{multline*}
This enables to make the computation
\begin{multline*}
	H_k^{\ep} \pa{g(x) \vp \xep} = g(x) \pa{\pa{h_K - E\fer}\vp} \xep \\
	+ \vp \xep \pa{\tfrac 12 \ep^2 (-i\na_k + A)^2 + \ep V} g(x) \\
	+ \ep \bpa{\pa{-i\na_k +  A }g}(x) \cdot (-i\na_K \vp) \xep.
\end{multline*}
We proceed by taking $\Omega$-periodic functions $f, g, \xi, \vp$ and computing
\begin{align*}
& \ep^{-1} \ab{\Omega} \ps{f Q \xi, H_k^{\ep} g Q \vp}_{L^2\per(\Omega)} \\
& \qquad = \ep^{-1}\ab{\Omega}\int_{\Omega} \pa{\overline{f}g}(x) \pa{\overline{\xi}\pa{h_K - E\fer} \vp}\xep \d x\\
& \qquad \qquad + \ab{\Omega}\int_{\Omega} \pa{\overline{f}((-i\na_k + A) g)}(x) \cdot\pa{\overline{\xi} (-i\na_K \vp)}\xep \d x  \\
& \qquad \qquad + \ab{\Omega}\int_{\Omega}\pa{\overline{f} \pa{\tfrac 12 \ep (-i\na_k + A)^2 + V} g}(x) \pa{\overline{\xi }\vp}\xep  \d x \\
& \qquad  \underset{\substack{\eqref{eq:oscillating}}}{=} \ep^{-1}\ps{f, g} \ps{ \xi, \pa{h_K - E\fer} \vp} + \ps{f, (-i\na_k + A) g} \cdot \ps{\xi, -i\na_K \vp} \\
& \qquad\qquad  +  \ps{f , \pa{\tfrac 12 \ep (-i\na_k + A)^2 + V} g} \ps{\xi, \vp} + O(\ep^\infty).
\end{align*}
Then we take $\xi = \psi_a$, $f = \alpha_a$ and $\vp = \psi_b$, $g = \beta_b$, which concludes the proof of Proposition~\ref{prop:weak_conv}.

\section{Computation of matrix coefficients\\using graphene symmetries}%
\label{sec:Computation of matrix coefficients using graphene symmetries}

In this section, our goal is to compute scalar products of the kind
\begin{multline*}
\ps{Y_1 w_a, Y_2 (-i\partial_{K,j}) w_b}, \qquad \ps{Y_1 (-i\partial_{K,n}) w_a, Y_2 (-i\partial_{K,j}) w_b}, \\
\ps{Y_1 (-i\partial_{K,q}) w_a, Y_2 (-i\partial_{K,j}) Y_3 (-i\partial_{K,m}) w_b},
\end{multline*}
with respectively one, two or three derivatives, for particular operators $Y_j$'s of interest involved in graphene. This will provide the tools to compute all the matrices involved in~\eqref{eq:def_matrices}, by using symmetries. See~\cite{WinZul10} for a systematic study of this kind of quantities.

\subsection{One derivative}%
\label{sub:One derivative}

The case of one derivative is well-known~\cite{FefWei12}, ~\cite[(S3)]{CanGarGon23b}. If $R_{\f{2\pi}3} m = y m$ with $m \in \C^2$, then
\begin{align}\label{eq:eigenvec_R} 
m = \left\{
\begin{array}{ll}
t \icol{1 \\ -\eta i}  & \mbox{if $y = \omega^\eta$ and $\eta \in \{-1,1\}$, for some $t \in \C$,} \\
0 & \mbox{otherwise},
\end{array}
\right.
\end{align}
and this leads to the computation of $\ps{\w_a, (-i\na_K) \w_b}$ for instance, see~\eqref{eq:vF}. Here we recall the computation when only one derivative is involved. As in~\cite{CanGarGon23b}, we define
\begin{align*}
\pa{\mathfrak{R}f}(x_1,x_2) := f(x_1,-x_2).
\end{align*}
We will use the spaces of pseudo-periodic functions
\begin{align*}
	L_k^2(\Omega) := \acs{f \in H^1_{\text{loc}}(\R^2) \;\Big\rvert\; f(x + ma) = e^{ik \cdot (ma)} f(x), \forall m \in \Z^2}.
\end{align*}
We set the conjugation $S_j := e^{iKx} Y_j e^{-iKx}$ for $j \in \{1,2\}$, as operators of $L_K^2(\Omega)$.

\begin{lemma}[One derivative]
Take two bounded operators $Y_1$ and $Y_2$ of $L^2\per(\Omega)$, such that $S_1$ and $S_2$ commute with $\cR_{\f{2\pi}{3} }$ and with $\cC \cP$.
\begin{itemize}
	\item There exist $v \in \C$ such that for any $j,a \in \{1,2\}$, $\ps{Y_1 w_a, Y_2 (-i\partial_{K,j}) w_a} =0$ and
\begin{align*}
\ps{Y_1 w_1, Y_2 (-i\na_{K}) w_2} = v \icol{1 \\ -i}, \qquad\qquad  \ps{Y_1 w_2, Y_2 (-i\na_{K}) w_1} = \overline{v} \icol{1 \\ i}.
\end{align*}
\item Take $\iota_a \in \{-1,1\}$ for any $a \in \{1,2\}$ and assume that for any $a \in \{1,2\}$, $S_a \mathfrak{R} = \iota_a \mathfrak{R} S_a$. Then $v \in \R$ if $\iota_1 \iota_2 = 1$ and $v \in i\R$ if $\iota_1 \iota_2 = -1$.
\end{itemize}
\end{lemma}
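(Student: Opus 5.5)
The plan is to move everything to the pseudo-periodic functions $\phi_a = e^{iKx} w_a$, where the symmetries act, and to use rotation covariance of the gradient. Since $e^{iKx}(-i\na_K) w_b = (-i\na)\phi_b$ and $e^{iKx} Y_j w_a = S_j \phi_a$, we can write
\begin{align*}
m_{ab} := \ps{Y_1 w_a, Y_2 (-i\na_K) w_b} = \int_\Omega \overline{S_1\phi_a}\, S_2 (-i\na)\phi_b \in \C^2 .
\end{align*}
Rotations satisfy $\cR_\theta \na = R_\theta \na \cR_\theta$ (in the sense that $\na(\cR_\theta f) = R_\theta \cR_\theta \na f$). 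Since $\cR_{\f{2\pi}3}$ is unitary on $L^2_K(\Omega)$ ($K$ is invariant modulo $\bbL^*$) and commutes with $S_1,S_2$, inserting $\cR_{\f{2\pi}3}^*\cR_{\f{2\pi}3}$ and using~\eqref{eq:syms_phi} gives
\begin{align*}
m_{ab} = \overline{\omega^a}\,\omega^b\, R_{\f{2\pi}3}^{\pm 1} m_{ab}.
\end{align*}
The sign of the exponent will be fixed by the convention $(\cR_\theta f)(x) = f(R_{-\theta}x)$. So $m_{ab}$ is an eigenvector of $R_{\f{2\pi}3}$ with eigenvalue $\omega^{\mp(b-a)}$.

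We then apply~\eqref{eq:eigenvec_R}. For $a=b$ the eigenvalue is $1$, so $m_{aa}=0$. For $(a,b)=(1,2)$ and $(2,1)$ the eigenvalues are $\omega^{\pm1}$, giving $m_{12} = v\icol{1\\ -\eta i}$ and $m_{21} = v'\icol{1\\ \eta i}$. The sign $\eta$ has to be checked against the convention; I expect it to match the normalization~\eqref{eq:vF}, i.e. the $(1,-i)$ form for $m_{12}$.

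To relate $v'$ to $\overline v$, I would use $\cC\cP$. It is antiunitary, it maps $L^2_K$ to itself, and it satisfies $\cC\cP(-i\na) = (-i\na)\cC\cP$, since both $\cC$ and $\cP$ flip the sign of $-i\na$. As in~\cite{FefWei12}, $\phi_2 = \cC\cP\phi_1$ up to a phase, which can be normalized to $1$. Antiunitarity and commutation of $S_j$ with $\cC\cP$ give
\begin{align*}
m_{21} = \overline{\ps{S_1\cC\cP\phi_2, S_2(-i\na)\cC\cP\phi_1}}
= \overline{\ps{S_1\phi_1, S_2(-i\na)\phi_2}}
= \overline{m_{12}},
\end{align*}
which yields $\overline v\icol{1\\ i}$.

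For the second point, I would use the reflection $\fR$. It is unitary and preserves $L^2_K$; here $K$ lies on the horizontal axis, so $\fR K = K$. It acts on gradients by $\na\fR = \operatorname{diag}(1,-1)\fR\na$, and as in~\cite{CanGarGon23b} it swaps $\phi_1$ and $\phi_2$ up to a phase, say $\fR\phi_1 = \phi_2$. Inserting $\fR$ and using $S_a\fR = \iota_a\fR S_a$ gives
\begin{align*}
m_{12} = \iota_1\iota_2\,\operatorname{diag}(1,-1)\, m_{21}.
\end{align*}
Comparing first components, using $m_{21} = \overline{v}\icol{1\\ i}$, gives $v = \iota_1\iota_2\overline v$. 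This means $v\in\R$ when $\iota_1\iota_2 = 1$ and $v\in i\R$ when $\iota_1\iota_2 = -1$.

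The main obstacle is the bookkeeping of the phase conventions. The relations $\cC\cP\phi_1 = \phi_2$ and $\fR\phi_1 = \phi_2$ hold only up to phases, and these phases must be shown compatible, or normalized simultaneously. Otherwise a stray phase spoils the reality conclusion. I would fix the phases using the normalization $\ps{w_1,(-i\na_K)w_2} = v\fer\icol{1\\ -i}$ with $v\fer>0$, which is the case $Y_1 = Y_2 = 1$. That normalization forces the two phases to agree.
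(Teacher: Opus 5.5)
Your proposal is correct and follows the paper's proof step by step. Rotation covariance forces $R_{\f{2\pi}3} m_{ab} = \omega^{b-a} m_{ab}$, which indeed yields the $\icol{1\\ -i}$ form for $m_{12}$ and settles the sign you left open. The antiunitary $\cC\cP$ gives $m_{21}=\overline{m_{12}}$, and $\fR$ gives $m_{12}=\iota_1\iota_2\sigma_3 m_{21}$, hence $v=\iota_1\iota_2\overline{v}$.

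The only difference is in the phases: the paper imports $\fR\phi_1=(-1)^\eta\phi_2$ and $\cP\cC\phi_a=\phi_{\underline a}$ from the literature, while you fix them yourself, and your normalization does work. The $\cC\cP$ phase cancels identically in $m_{21}=\overline{m_{12}}$, so it never needs to agree with anything. If $\fR\phi_1=e^{i\theta}\phi_2$, the case $Y_1=Y_2=1$ with $v\fer\in\R\setminus\{0\}$ forces $e^{2i\theta}=1$, which is exactly what the reality argument needs.
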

\begin{proof}
 We recall that 
 \begin{align}\label{eq:perm_partial_exp} 
\partial_{K,n} e^{-iKx} = e^{-iKx} \partial_n.
 \end{align}
 so
\begin{align}\label{eq:rel_Y_S} 
\ps{Y_1 w_a, Y_2 (-i\partial_{K,j}) w_b} = -i \ps{S_1 \phi_a, S_2 \partial_j \phi_b}.
\end{align}
We will prove the equivalent statements on those last quantities. We recall that
\begin{align}\label{eq:transform_cR} 
\cR_\theta \na = R_{-\theta} \na \cR_\theta.
\end{align}
	We define the vectors $m^{(ab)}\in \C^2$ by $m^{(ab)}_{j} := \ps{S_1\phi_a, S_2 \partial_j \phi_b}$. We have
	\begin{align*}
		m^{(ab)} &= \ps{ S_1 \phi_a, S_2 \na \phi_b} = \ps{\cR_{\f{2\pi}3} S_1 \phi_a, \cR_{\f{2\pi}3} S_2 \na \phi_b} = \ps{ S_1 \cR_{\f{2\pi}3} \phi_a,  S_2 \cR_{\f{2\pi}3} \na \phi_b} \\
			 &\underset{\substack{\eqref{eq:transform_cR}}}{=} \; R_{-\f{2\pi}{3}}\ps{ S_1 \cR_{\f{2\pi}3} \phi_a, \na  S_2 \cR_{\f{2\pi}3} \phi_b}  \underset{\substack{\eqref{eq:syms_phi}}}{=} \; \omega^{b-a} R_{-\f{2\pi}{3}} m^{(ab)}.
		\end{align*}
		Thus $R_{\f{2\pi}{3}} m^{(ab)} = \omega^{b-a} m^{(ab)}$. The only vector invariant under $R_{\f{2\pi}{3}}$ is $0$ so $m^{(11)} = m^{(22)} = 0$. Then $R_{\f{2\pi}{3}} m^{(12)} = \omega m^{(12)}$ implies $m^{(12)} \in \Ker \bpa{R_{\f{2\pi}{3}} - \omega} = \C \icol{1 \\ -i}$ so $m^{(12)} = \nu \icol{1 \\ -i}$ for some $\nu \in \C$. Similarly, $m^{(21)} = g \icol{1 \\ i}$ for some $g \in \C$. We define $\underline{1} := 2$ and $\underline{2} := 1$ and will use
\begin{align*}
\cP \cC  \na = - \na \cP \cC, \qquad \qquad \cP \cC \phi_a = \phi_{\underline{a}}.
 \end{align*}
 We have
\begin{align*}
	\nu &= m^{(12)}_1 = \ps{S_1\phi_1, S_2 \partial_1 \phi_2} = \ps{\cP \cC S_2 \partial_1 \phi_2 , \cP \cC S_1\phi_1} = \ps{S_2 \cP \cC \partial_1 \phi_2 , S_1\cP \cC \phi_1} \\
	&=-\ps{S_2  \partial_1 \cP \cC\phi_2 , S_1\cP \cC \phi_1} = -\ps{S_2 \partial_1 \phi_1, S_1 \phi_2} = - \overline{\ps{S_1 \phi_2 , S_2 \partial_1 \phi_1}} \\
	&= - \overline{m^{(21)}_1} = - \overline{g}.
\end{align*}
In~\cite[Appendix 1]{CanGarGon23b} it is proved that 
\begin{align}\label{eq:transf_phi_R} 
\mathfrak{R} \phi_1 = (-1)^\eta \phi_2
\end{align}
 with $\eta \in \{\pm 1\}$. We use $Y_a \mathfrak{R} = \iota_a \mathfrak{R} Y_a$ and moreover we have 
\begin{align}\label{eq:com_R_der} 
\mathfrak{R} \partial_n = (-1)^{n+1} \partial_n \mathfrak{R},
\end{align}
 also written $\mathfrak{R} \na = \sigma_3 \na \mathfrak{R}$. We compute
\begin{align*}
\nu &=m^{(12)}_1 = \ps{S_1  \phi_1, S_2 \partial_1 \phi_2}  =  \ps{\mathfrak{R} S_1  \phi_1, \mathfrak{R} S_2 \partial_1 \phi_2} =  \iota_1 \iota_2 \ps{ S_1 \mathfrak{R} \phi_1,  S_2 \mathfrak{R}\partial_1 \phi_2} \\
  &\underset{\substack{\eqref{eq:com_R_der}}}{=} \;   \iota_1 \iota_2 \ps{ S_1  \mathfrak{R}\phi_1,  S_2 \partial_1 \mathfrak{R}\phi_2} \underset{\substack{\eqref{eq:transf_phi_R}}}{=} \;  \iota_1 \iota_2 \ps{ S_1  \phi_{2},  S_2 \partial_1 \phi_{1}}=   \iota_1 \iota_2 m^{(2 1)}_{1} = - \iota_1 \iota_2\overline{\nu}.
\end{align*}
Hence $\nu \in i\R$ if $\iota_1 \iota_2 = 1$ and $\nu \in \R$ if $\iota_1 \iota_2 = -1$. We have $\ps{S_1 w_1, S_2 \na w_2} = \nu \icol{1 \\ -i}$ and $\ps{S_1 w_2, S_2 \na w_1} = -\overline{\nu} \icol{1 \\ i}$ so we deduce the conclusion using~\eqref{eq:rel_Y_S} and $v = -i\nu$.
\end{proof}

\subsection{Two derivatives}%
\label{sub:Two derivatives}

We define
\begin{align*}
D := \mat{1 & i \\ i & -1} = \sigma_3 + i \sigma_1.
\end{align*}

We start by showing the following result.
\begin{lemma}\label{lem:rotM}
	Take $M \in \C^{2\times 2}$ and $y \in \{1,\omega,\overline{\omega}\}$. Then $R_{\f{2\pi}3} M R_{-\f{2\pi}3} = y M$ if and only if
\begin{align*}
M = \left\{
\begin{array}{ll}
s \1_2 + r \sigma_2 & \mbox{if $y = 1$, for some } s,r \in \C \\
s D  & \mbox{if $y = \omega$, for some $s \in \C$} \\
s D^*  & \mbox{if $y = \overline{\omega}$, for some $s \in \C$}.
\end{array}
\right.
\end{align*}
\end{lemma}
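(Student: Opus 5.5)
The plan is to diagonalize the conjugation action $M \mapsto R_{\f{2\pi}3} M R_{-\f{2\pi}3}$ on $\C^{2\times 2}$ in the eigenbasis of $R_{\f{2\pi}3}$. By~\eqref{eq:eigenvec_R}, $R_{\f{2\pi}3}$ has eigenvectors $e_+ := \icol{1 \\ -i}$ with eigenvalue $\omega$ and $e_- := \icol{1 \\ i}$ with eigenvalue $\overline{\omega}$. Since $R_{\f{2\pi}3}$ is unitary, $R_{-\f{2\pi}3} = R_{\f{2\pi}3}^*$. The rank-one matrices $e_\mu e_\nu^*$, with $\mu,\nu \in \{+,-\}$, form a basis of $\C^{2\times 2}$, and each is an eigenvector of the conjugation:
\begin{align*}
R_{\f{2\pi}3}\, e_\mu e_\nu^* \, R_{\f{2\pi}3}^* = \lambda_\mu \overline{\lambda_\nu}\, e_\mu e_\nu^*,
\end{align*}
where $\lambda_+ = \omega$ and $\lambda_- = \overline{\omega}$.

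This gives the eigenvalues directly. The diagonal terms $e_+e_+^*$ and $e_-e_-^*$ have eigenvalue $1$. The term $e_+e_-^*$ has eigenvalue $\omega \cdot \omega = \omega^2 = \overline{\omega}$, and $e_-e_+^*$ has eigenvalue $\overline{\omega}^2 = \omega$. Because these three values are distinct, the eigenspaces for $y=1$, $y=\omega$ and $y=\overline{\omega}$ are exactly $\Span(e_+e_+^*, e_-e_-^*)$, $\C\, e_-e_+^*$ and $\C\, e_+e_-^*$ respectively. This yields both directions of the equivalence at once.

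It remains to identify these spans explicitly. One has $e_+e_+^* = \mat{1 & i \\ -i & 1} = \1 - \sigma_2$ and $e_-e_-^* = \1 + \sigma_2$, so their span is $\Span(\1_2, \sigma_2)$. Next, $e_-e_+^* = \mat{1 & i \\ i & -1} = D$, and $e_+e_-^* = \mat{1 & -i \\ -i & -1} = D^*$. This matches the statement, with $\omega$ paired with $D$.

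The only delicate point is bookkeeping the conjugation convention. Computing $\overline{\lambda_\nu}$ correctly determines whether $D$ or $D^*$ corresponds to $\omega$. I would double-check this with the single explicit product $R_{\f{2\pi}3} D R_{-\f{2\pi}3} = \omega D$, using $R_{\f{2\pi}3} = \f12\mat{-1 & -\sqrt3 \\ \sqrt3 & -1}$.
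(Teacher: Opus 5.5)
Your argument is correct, but it takes a different route from the paper.

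The paper writes $M=\mat{a & b\\ c & d}$ and expands $4\pa{R_{\f{2\pi}3}MR_{-\f{2\pi}3}-yM}$ entrywise into a linear system in $(a,b,c,d)$. It then solves this system case by case, adding and subtracting equations, and obtains $d=a$, $c=-b$ for $y=1$, and $a=-d$, $b=c$, $b=\pm ia$ for $y=\omega,\overline{\omega}$.

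You instead diagonalize the conjugation map $M\mapsto R_{\f{2\pi}3}MR_{\f{2\pi}3}^*$ in the basis $e_\mu e_\nu^*$. This basis is built from the eigenvectors of $R_{\f{2\pi}3}$ already recorded in~\eqref{eq:eigenvec_R}. Your eigenvalues $\lambda_\mu\overline{\lambda_\nu}$ are right. So are the identifications $e_\pm e_\pm^*=\1\mp\sigma_2$, $e_-e_+^*=D$, $e_+e_-^*=D^*$, and the pairing of $\omega$ with $D$; indeed $R_{\f{2\pi}3}DR_{-\f{2\pi}3}=\omega D$.

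What your route buys:
\begin{itemize}
\item Expanding $M=\sum c_{\mu\nu}e_\mu e_\nu^*$ gives both implications at once. The paper's elimination for $y=\omega,\overline{\omega}$ only displays the necessary conditions and leaves the easy converse to a substitution.
\item It explains structurally why the $y=1$ solution space is two-dimensional while the other two are lines.
\item It adapts immediately to other rotation angles $\theta$, where the eigenvalues become $1,1,e^{\pm 2i\theta}$.
\end{itemize}
The paper's computation, in exchange, is entirely hands-on and needs no change of basis.
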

\begin{proof}
	We write $M = \mat{a & b \\ c & d}$, and compute
	\begin{multline}\label{eq:sys_vanish}
		0 = 4\pa{R_{\f{2\pi}3} M R_{-\f{2\pi}3} - y M} \\
		= \mat{a\pa{1-4y} + 3d + \sqrt{3} \pa{c+b} & b\pa{1-4y} - 3c + \sqrt{3} \pa{d-a} \\
		c\pa{1-4y} - 3b + \sqrt{3} (d-a) & d\pa{1-4y} + 3a - \sqrt{3} (b+c)}.
	\end{multline}
When $y = 1$, the system is equivalent to
	\begin{align*}
\left\{
\begin{array}{l}
	c + b = \sqrt{3} (a-d)  \\
	\sqrt{3} (c+b) = a - d
\end{array}
\right.	
	\end{align*}
	itself equivalent to $d = a$ and $c = -b$. When $y = \omega = -\f 12 + i\f{\sqrt{3}}2$, \eqref{eq:sys_vanish} is equivalent to
	\begin{align*}
	\left\{
	\begin{array}{l}
		\sqrt{3} (a+d) + c + b - 2ia = 0 \\
		\sqrt{3} (b-c) + d - a - 2ib = 0 \\
		\sqrt{3} (c-b) + d - a - 2ic = 0 \\
		\sqrt{3} (d+a) - b - c - 2id = 0.
\end{array}
\right.
\end{align*}
Adding the first and last equations gives $a = -d$, subtracting the second one and the third one gives $b = c$, and finally the first equation gives $b = ia$.

When $y = \overline{\omega}$, we replace $i$ by $-i$ in the previous system, we obtain $a = -d$, $b = c$ and $b = -ia$.
\end{proof}

The last lemma enables to obtain the form of the matrix elements involving two derivatives, as shown by the next result. 
\begin{lemma}[Two derivatives]\label{lem:D}
Take two bounded operators $Y_1$ and $Y_2$ of $L^2\per(\Omega)$, such that $S_1$ and $S_2$ commute with $\cR_{\f{2\pi}{3} }$ and with $\cC \cP$.
\begin{itemize}
	\item There exist $t, s ,r \in \C$ such that for any $n,j \in \{1,2\}$,
\begin{align}\label{eq:double_der_u}
\ps{Y_1 (-i\partial_{K,n}) w_1, Y_2 (-i\partial_{K,j}) w_2} &= r D_{nj} \nonumber\\
\ps{Y_1 (-i\partial_{K,n}) w_2, Y_2 (-i\partial_{K,j}) w_1} &= \overline{r} D^*_{nj} \nonumber\\
\ps{Y_1 (-i\partial_{K,n}) w_1, Y_2 (-i\partial_{K,j}) w_1} &=t \delta_{nj} + s \pa{\sigma_2}_{nj} \\
\ps{Y_1 (-i\partial_{K,n}) w_2, Y_2 (-i\partial_{K,j}) w_2} &= \overline{t} \delta_{nj} - \overline{s} \pa{\sigma_2}_{nj}.\nonumber
\end{align}
\item Take $\iota_a \in \{-1,1\}$ for any $a \in \{1,2\}$ and assume that for any $a \in \{1,2\}$, $S_a \mathfrak{R} = \iota_a \mathfrak{R} S_a$. Then $t,s \in \R$. Moreover, $r \in \R$ if $\iota_1 \iota_2 = 1$ and $r \in i\R$ if $\iota_1 \iota_2 = -1$.
\end{itemize}
\end{lemma}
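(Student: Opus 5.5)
I will follow the same route as in the one-derivative lemma: move everything to the $\phi_a$, use the rotation symmetry to fix the matrix shape, then use $\cC\cP$ and $\mathfrak{R}$ to relate the coefficients.

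\textbf{Step 1: reduction to the $\phi_a$ and the rotation symmetry.} Using \eqref{eq:perm_partial_exp} as in \eqref{eq:rel_Y_S}, I rewrite
\begin{align*}
\ps{Y_1 (-i\partial_{K,n}) w_a, Y_2 (-i\partial_{K,j}) w_b} = \ps{S_1 \partial_n \phi_a, S_2 \partial_j \phi_b}.
\end{align*}
The two factors $(-i)$ give $\overline{(-i)}(-i) = 1$. I define the $2\times 2$ matrix $M^{(ab)}_{nj} := \ps{S_1 \partial_n \phi_a, S_2 \partial_j \phi_b}$. Inserting $\cR_{\f{2\pi}3}$ (unitary) on both sides and commuting it through $S_1,S_2$ gives a conjugation. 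Then \eqref{eq:transform_cR} moves the rotation through each gradient, and \eqref{eq:syms_phi} produces phases. The result is
\begin{align*}
M^{(ab)} = \omega^{b-a} R_{-\f{2\pi}3} M^{(ab)} R_{\f{2\pi}3}
\end{align*}
(the left factor is conjugated, but $R$ is real, so transposition gives $R^T = R^{-1}$). Equivalently, $R_{\f{2\pi}3} M^{(ab)} R_{-\f{2\pi}3} = \omega^{b-a} M^{(ab)}$.

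Lemma~\ref{lem:rotM} then gives:
\begin{itemize}
\item $M^{(11)} = t\1 + s\sigma_2$ and $M^{(22)} = t'\1 + s'\sigma_2$;
\item $M^{(12)} = r D$ (since $y = \omega$);
\item $M^{(21)} = r' D^*$ (since $y = \overline{\omega}$).
\end{itemize}
I expect the main obstacle to be the index bookkeeping in this step: checking that the pairing produces $R M R^{-1}$ rather than $R^{-1} M R$, and that the phase is $\omega^{b-a}$ rather than its conjugate. If the convention comes out reversed, $D$ and $D^*$ swap, so I will check it against the $y$-cases directly.

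\textbf{Step 2: the $\cC\cP$ symmetry.} $\cP\cC$ is antiunitary, so $\ps{f,g} = \ps{\cP\cC g, \cP\cC f}$. I also use $\cP\cC\partial = -\partial\cP\cC$ and $\cP\cC\phi_a = \phi_{\underline a}$. Two derivatives give two sign flips, which cancel, so
\begin{align*}
M^{(ab)}_{nj} = \ps{S_2\partial_j\phi_{\underline b}, S_1\partial_n\phi_{\underline a}}.
\end{align*}
This swaps the roles of $S_1$ and $S_2$, so it is not yet directly comparable. I then take the complex conjugate, $\overline{\ps{f,g}} = \ps{g,f}$, to get $M^{(ab)}_{nj} = \overline{\ps{S_1\partial_n\phi_{\underline a}, S_2\partial_j\phi_{\underline b}}} = \overline{M^{(\underline a\underline b)}_{nj}}$, entrywise. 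Hence:
\begin{itemize}
\item $t\1 + s\sigma_2 = \overline{t'}\1 + \overline{s'}\,\overline{\sigma_2} = \overline{t'}\1 - \overline{s'}\sigma_2$, because $\overline{\sigma_2} = -\sigma_2$; this gives $t' = \overline t$ and $s' = -\overline s$.
\item $r D = \overline{r'}\,\overline{D^*}$, and $\overline{D^*} = D^T = D$; this gives $r' = \overline r$.
\end{itemize}
These are exactly the four identities \eqref{eq:double_der_u}.

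\textbf{Step 3: the reflection $\mathfrak{R}$ and reality.} $\mathfrak{R}$ is unitary. I use $S_a\mathfrak{R} = \iota_a\mathfrak{R}S_a$, $\mathfrak{R}\partial_n = (-1)^{n+1}\partial_n\mathfrak{R}$ from \eqref{eq:com_R_der}, and $\mathfrak{R}\phi_1 = \pm\phi_2$ from \eqref{eq:transf_phi_R}. The latter implies $\mathfrak{R}\phi_2 = \pm\phi_1$ with the same sign, since $\mathfrak{R}^2 = 1$, so the sign squares away. This gives
\begin{align*}
M^{(ab)} = \iota_1\iota_2\,\sigma_3 M^{(\underline a\underline b)}\sigma_3.
\end{align*}

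For $a = b = 1$: $t\1 + s\sigma_2 = \iota_1\iota_2(\overline t\1 + \overline s\sigma_2)$, using $\sigma_3\sigma_2\sigma_3 = -\sigma_2$. When $\iota_1\iota_2 = 1$ this gives $t,s\in\R$.

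Here is a problem: if $\iota_1\iota_2 = -1$, the same identity forces $t,s\in i\R$, not $\R$ as the statement claims. I will therefore recheck carefully whether $\iota_1\iota_2$ really enters the diagonal case. The likely intended setting has $Y_1$ and $Y_2$ built from the same operators, so $\iota_1\iota_2 = 1$ there. I will present $t,s\in\R$ under that reading and flag the sign.

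For $a = 1$, $b = 2$: $rD = \iota_1\iota_2\,\overline r\,\sigma_3 D^*\sigma_3$. A direct computation gives $\sigma_3 D^*\sigma_3 = D$, so $r = \iota_1\iota_2\overline r$. Thus $r\in\R$ if $\iota_1\iota_2 = 1$ and $r\in i\R$ if $\iota_1\iota_2 = -1$, as claimed.
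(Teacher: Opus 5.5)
Your proof is correct and follows the paper's own argument step for step. You use the rotation identity $R_{\f{2\pi}3}M^{(ab)}R_{-\f{2\pi}3}=\omega^{b-a}M^{(ab)}$ with Lemma~\ref{lem:rotM} to fix the shapes. You then use $\cP\cC$ to get $M^{(ab)}=\overline{M^{(\underline a\underline b)}}$ and hence \eqref{eq:double_der_u}, and finally $\mathfrak{R}$ for the reality statements.

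The sign problem you flag is genuine, and it is an error in the paper rather than in your argument. In \eqref{eq:action_R} the paper writes the factor as $\iota_a\iota_b$, with $a,b$ the \emph{state} labels of $M^{(ab)}$. The hypothesis, however, attaches $\iota_1,\iota_2$ to the \emph{operators} $S_1,S_2$. The correct factor is therefore $\iota_1\iota_2$ in every block, and the diagonal blocks only give $t=\iota_1\iota_2\overline{t}$ and $s=\iota_1\iota_2\overline{s}$.

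For $\iota_1\iota_2=-1$ the claim $t,s\in\R$ is false in general. Take $Y_1=1$ and $Y_2$ the multiplication by $ig$, where $g$ is real, $\Omega$-periodic, $\cR_{\f{2\pi}3}$-invariant, and odd under both $x\mapsto -x$ and $\mathfrak{R}$ (a sublattice-staggered potential). Then $S_2$ commutes with $\cR_{\f{2\pi}3}$ and $\cC\cP$ and anticommutes with $\mathfrak{R}$. This gives $t=i\int_\Omega g\ab{\partial_1\phi_1}^2\in i\R$, which is generically nonzero.

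So the correct conclusion is $t,s\in\R$ if $\iota_1\iota_2=1$ and $t,s\in i\R$ if $\iota_1\iota_2=-1$, exactly as you derived. Every application in the paper uses $Y_j\in\{1,R\}$, so $\iota_1=\iota_2=1$ there and nothing downstream is affected.
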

\begin{proof}
From~\eqref{eq:perm_partial_exp} we have
\begin{align*}
\ps{Y_1 (-i\partial_{K,n}) w_a, Y_2 (-i\partial_{K,j}) w_b} = \ps{S_1 \partial_n \phi_a, S_2 \partial_j \phi_b}.
\end{align*}
	We define the $2 \times 2$ matrices $M^{(ab)}$ by
	\begin{align*}
		M^{(ab)}_{ij} := \ps{S_1\partial_i \phi_a, S_2 \partial_j \phi_b}.
	\end{align*}
	We recall that for any $A \in \cM_{2 \times 2}(\C)$, $C_1 ,C_2 \in \cM_{2 \times 1}(\C)$ and $L_1, L_2 \in \cM_{1 \times 2}(\C)$, we have $A \mat{C_1 & C_2} = \mat{A C_1 & A C_2}$ and $\mat{L_1 \\ L_2} A = \mat{L_1 A \\ L_2 A}$. Moreover, we consider $\na = \mat{\partial_1 \\ \partial_2}$ as a column vector and $\na^T = \mat{\partial_1 & \partial_2}$. Hence
	\begin{align*}
		R_{\f{2\pi}3} M^{(ab)} &= R_{\f{2\pi}3} \mat{\ps{ S_1 \na \phi_a, S_2 \partial_1 \phi_b} & \ps{ S_1 \na \phi_a, S_2 \partial_2 \phi_b}} \\
				       &=\mat{\ps{R_{\f{2\pi}3} S_1 \na \phi_a, S_2 \partial_1 \phi_b} & \ps{R_{\f{2\pi}3} S_1 \na \phi_a, S_2 \partial_2 \phi_b}} \\
				       &=  \ps{R_{\f{2\pi}3} S_1 \na \phi_a, S_2 \na^T \phi_b}
				       \end{align*}
				       and
				       \begin{align*}
		M^{(ab)} R_{-\f{2\pi}3} &= \mat{\ps{S_1 \partial_1 \phi_a, S_2 \na \phi_b}^T \\ \ps{S_1 \partial_2 \phi_a, S_2 \na \phi_b}^T} R_{-\f{2\pi}3}= \mat{\ps{S_1 \partial_1 \phi_a, S_2 \na \phi_b}^T R_{\f{2\pi}3}^T \\ \ps{S_1 \partial_2 \phi_a, S_2 \na \phi_b}^T R_{\f{2\pi}3}^T} \\
					&= \mat{\ps{S_1 \partial_1 \phi_a,R_{\f{2\pi}3} S_2 \na \phi_b}^T \\ \ps{S_1 \partial_2 \phi_a,R_{\f{2\pi}3} S_2 \na \phi_b}^T}= \ps{ S_1 \na \phi_a, S_2 \pa{R_{\f{2\pi}3}\na}^T \phi_b}.
	\end{align*}
	We deduce that
	\begin{align*}
		\pa{R_{\f{2\pi}3} M^{(ab)} R_{-\f{2\pi}3}}_{ij} &= \pa{\ps{ S_1 R_{\f{2\pi}3} \na \phi_a, S_2 \pa{R_{\f{2\pi}3}\na}^T \phi_b}}_{ij}\\
								&= \ps{ \pa{S_1 R_{\f{2\pi}3}\na \phi_a}_i, \pa{S_2 R_{\f{2\pi}3} \na \phi_b}_j} \\
		&= \ps{ \cR_{\f{2\pi}3} \pa{S_1 R_{\f{2\pi}3}\na \phi_a}_i, \cR_{\f{2\pi}3}\pa{S_2  R_{\f{2\pi}3} \na \phi_b}_j} \\
		&\underset{\substack{\eqref{eq:transform_cR}}}{=} \; \ps{ \pa{S_1\na \cR_{\f{2\pi}3}\phi_a}_i, \pa{ S_2\na\cR_{\f{2\pi}3} \phi_b}_j} \\
		&\underset{\substack{\eqref{eq:syms_phi}}}{=} \; \omega^{b-a}\ps{ \pa{S_1\na \phi_a}_i, \pa{ S_2\na \phi_b}_j} \\
		&= \omega^{b-a} \ps{S_1 \partial_i \phi_a, S_2  \partial_j \phi_b} =\omega^{b-a} M^{(ab)}_{ij},
	\end{align*}
	so $R_{\f{2\pi}3} M^{(ab)} R_{-\f{2\pi}3} = \omega^{b-a} M^{(ab)}$. Using Lemma \ref{lem:rotM}, we obtain that there exist $r,r',t,s,t',s' \in \C$ such that
\begin{align*}
\ps{S_1 \partial_n \phi_1, S_2 \partial_j \phi_2} &= r D_{nj} \\
\ps{S_1 \partial_n \phi_2, S_2 \partial_j \phi_1} &= r' D_{nj}^* \\
\ps{S_1 \partial_n \phi_1, S_2 \partial_j \phi_1} &=t \delta_{nj} + s \pa{\sigma_2}_{nj} \\
\ps{S_1 \partial_n \phi_2, S_2 \partial_j \phi_2} &=t' \delta_{nj} + s' \pa{\sigma_2}_{nj}.
\end{align*}

Then
\begin{align}\label{eq:pc_action} 
	M^{(ab)}_{nm} &= \ps{S_1 \partial_n \phi_a, S_2  \partial_m \phi_b} = \overline{\ps{\cP \cC S_1 \partial_n \phi_a, \cP \cC S_2 \partial_m \phi_b}} \nonumber \\
		    &= \overline{\ps{ S_1 \partial_n \cP \cC\phi_a,  S_2 \partial_m \cP \cC\phi_b}} = \overline{\ps{S_1 \partial_n \phi_{\underline{a}}, S_2 \partial_m \phi_{\underline{b}}}} = \overline{M^{(\underline{a} \underline{b})}_{nm}},
\end{align}
 hence 
\begin{align}\label{eq:set1} 
	t &= M^{(11)}_{11} = \overline{M^{(22)}_{11}} = \overline{t'} \nonumber\\
	i s &= M^{(11)}_{21} = \overline{M^{(22)}_{21}} = \overline{i s'} = -i \; \overline{s'} \\
r &= M^{(12)}_{11} = \overline{M^{(21)}_{11}} = \overline{r'}.\nonumber
\end{align}
This concludes the proof of~\eqref{eq:double_der_u}. 

\begin{align}\label{eq:action_R} 
M^{(ab)}_{nm}  &=\ps{S_1 \partial_n \phi_a, S_2 \partial_m \phi_b}  =  \ps{\mathfrak{R} S_1 \partial_n \phi_a, \mathfrak{R} S_2 \partial_m \phi_b}   \nonumber \\
	       &= \iota_a \iota_b \ps{ S_1 \mathfrak{R}\partial_n \phi_a,  S_2 \mathfrak{R}\partial_m \phi_b} \underset{\substack{\eqref{eq:com_R_der}}}{=} \;  \iota_a \iota_b(-1)^{n+m} \ps{ S_1 \partial_n \mathfrak{R}\phi_a,  S_2 \partial_m \mathfrak{R}\phi_b} \nonumber \\
	       &\underset{\substack{\eqref{eq:transf_phi_R}}}{=} \; \iota_a \iota_b (-1)^{n+m} (-1)^{2\eta} \ps{ S_1 \partial_n \phi_{\underline{a}},  S_2 \partial_m \phi_{\underline{b}}}=  \iota_a \iota_b(-1)^{n+m}M^{(\underline{a} \underline{b})}_{nm}. 
\end{align}
Finally,
\begin{align*}
	t &= M^{(11)}_{11} = M^{(22)}_{11} = t' = \overline{t} \nonumber\\
	i s &= M^{(11)}_{12} = -M^{(22)}_{12} = -i s' = i \overline{s} \\
r &= M^{(12)}_{11} = \iota_1 \iota_2 M^{(21)}_{11} =\iota_1 \iota_2 r'=\iota_1 \iota_2 \overline{r} \nonumber.
\end{align*}
This enables us to conclude that $t,s \in \R$ and that $r \in \R$ if $\iota_1 \iota_2 = 1$ and $r \in i\R$ if $\iota_1 \iota_2 = -1$.
\end{proof}

As a consequence of Lemma~\ref{lem:D}, when the assumptions are satisfied we can compute
\begin{align}\label{eq:double_der_u_generic}
\ps{Y_1 \der_k w_a, Y_2 \der_k w_a} \underset{\substack{a \in\{1,2\}}}{=} \; t, \qquad \qquad \ps{Y_1 \der_k w_1, Y_2 \der_k w_2} = r e^{i 2 \theta_k}.
\end{align}
Indeed, using that $k_1 = \ab{k} \cos \theta_k$, $k_2 = \ab{k} \sin \theta_k$, we have
\begin{align*}
	\ps{Y_1 \der_k w_1, Y_2 \der_k w_2} &= \f{r}{\ab{k}^2 }  \pa{k_1^2 - k_2^2 + 2 i k_1 k_2} = \f{r}{\ab{k}^2 }  \pa{k_1 + i k_2}^2 \\
	&= \f{r}{\ab{k}^2 } \pa{k_\C}^2 = r e^{i 2 \theta_k}.
\end{align*}

\subsection{Three derivatives}%
\label{sub:Three derivatives}

To finish, we compute the elements involving three derivatives.

Let us define
\begin{align*}
F^{ab}_{qjm} := \ps{Y_1 (-i\partial_{K,q}) w_a, Y_2 (-i\partial_{K,j}) Y_3 (-i\partial_{K,m}) w_b},
\end{align*}
and
\begin{align*}
\chi_1 := F^{11}_{111}, \quad \;\;    \chi_2 := -i F^{11}_{222}, \quad \;\; \gamma_1 := F^{12}_{122}, \quad \;\;   \gamma_2 := F^{12}_{212}, \quad \;\;   \gamma_3 :=  F^{12}_{221}.
\end{align*}
For any proposition $\bbP$, we define $\delta_{\bbP} := 1$ if $\bbP$ is true and $\delta_{\bbP} := 0$ otherwise.

\begin{lemma}[Three derivatives]\label{lem:three_derivatives}
	Assume that the bounded operators $Y_1,Y_2,Y_3$ of $L^2\per(\Omega)$ commute with $\cR_{\f{2\pi}{3} }$, $\cC \cP$ and $\mathfrak{R}$. Then
\begin{align}\label{eq:formulas_F_aa} 
\begin{array}{l}
F^{11}_{122} \hspace{-0.05cm} = \hspace{-0.05cm}  F^{11}_{212} \hspace{-0.05cm} = \hspace{-0.05cm}  F^{11}_{221} \hspace{-0.05cm} = \hspace{-0.05cm}  F^{22}_{122} \hspace{-0.05cm} = \hspace{-0.05cm}  F^{22}_{212} \hspace{-0.05cm} = \hspace{-0.05cm}  F^{22}_{221} \hspace{-0.05cm} = \hspace{-0.05cm}  - F^{11}_{111} \hspace{-0.05cm} = \hspace{-0.05cm}  -F^{22}_{111} \hspace{-0.05cm} = \hspace{-0.05cm}  - \chi_1 \in \R \\
F^{11}_{112} \hspace{-0.05cm} = \hspace{-0.05cm}  F^{11}_{211} \hspace{-0.05cm} = \hspace{-0.05cm}  F^{11}_{121} \hspace{-0.05cm} = \hspace{-0.05cm}  -F^{22}_{112} \hspace{-0.05cm} = \hspace{-0.05cm}  - F^{22}_{211} \hspace{-0.05cm} = \hspace{-0.05cm}  - F^{22}_{121} \hspace{-0.05cm} = \hspace{-0.05cm}  -  F^{11}_{222}\hspace{-0.05cm} = \hspace{-0.05cm}  F^{22}_{222} \hspace{-0.05cm} = \hspace{-0.05cm}  -i\chi_2 \in i\R,
\end{array}
\end{align}
and
\begin{align}\label{eq:formulas_F} 
\begin{array}{l}
F^{12}_{122} = F^{21}_{122} = i F^{12}_{211} = -i F^{21}_{211} = \gamma_1 \in \R  \\
F^{12}_{212} = F^{21}_{212}= iF^{12}_{121} = -iF^{21}_{121} =  \gamma_2 \in \R \\
F^{12}_{221} = F^{21}_{221} = i F^{12}_{112} = -i F^{21}_{112} = \gamma_3 \in \R  \\
F^{12}_{111} = F^{21}_{111} = i F^{12}_{222} = -i F^{21}_{222} = \gamma_1 + \gamma_2 + \gamma_3 \in \R. 
\end{array}
\end{align}
If moreover $Y_1 = Y_3$ and $Y_2 = 1$, then $\chi_2 = 0$ and $\gamma_1 = \gamma_3$.
\end{lemma}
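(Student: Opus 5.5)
The plan is to treat, for each pair $(a,b)$, the array $F^{ab} = (F^{ab}_{qjm})_{q,j,m}$ as a rank-three tensor on $\C^2$, and to constrain it in the same way as Lemmas~\ref{lem:rotM} and~\ref{lem:D}. First I would use~\eqref{eq:perm_partial_exp} to rewrite
\begin{align*}
F^{ab}_{qjm} = c\,\ps{S_1 \partial_q \phi_a, S_2 \partial_j S_3 \partial_m \phi_b},
\end{align*}
with $S_i := e^{iKx} Y_i e^{-iKx}$ and a harmless constant $c$ coming from the factors $-i$. Inserting $\cR_{\f{2\pi}3}$ on both sides of the scalar product, and using that it commutes with the $S_i$, together with~\eqref{eq:transform_cR} and~\eqref{eq:syms_phi}, gives the three-index analogue of the proof of Lemma~\ref{lem:D}:
\begin{align*}
\sum_{q',j',m'} (R_{\f{2\pi}3})_{qq'} (R_{\f{2\pi}3})_{jj'} (R_{\f{2\pi}3})_{mm'} F^{ab}_{q'j'm'} = \omega^{b-a} F^{ab}_{qjm}.
\end{align*}

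To solve this equation, I would not write out an $8\times 8$ linear system. Instead I would pass to the eigenbasis $e_\pm = \icol{1\\ \mp i}$ of $R_{\f{2\pi}3}$ from~\eqref{eq:eigenvec_R}, whose eigenvalues are $\omega^{\pm1}$. In the product basis $e_{\pm}\otimes e_\pm\otimes e_\pm$ the rotation acts diagonally, by $\omega^{s_1+s_2+s_3}$ with $s_i\in\{\pm1\}$. A component can therefore be nonzero only if $s_1+s_2+s_3 \equiv b-a \pmod 3$.
\begin{itemize}
\item For $a=b$, only $(+,+,+)$ and $(-,-,-)$ survive. Translated back to Cartesian components, this is precisely the $\cos 3\theta$ / $\sin 3\theta$ pattern behind~\eqref{eq:formulas_F_aa}: for instance $F_{122}=F_{212}=F_{221}=-F_{111}$, with a second independent constant on the odd-in-index-$2$ entries.
\item For $(a,b)=(1,2)$, only the three sign patterns with sum $1$ survive, one for each position of the unique minus sign. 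This gives three free constants, which I would identify with $\gamma_1,\gamma_2,\gamma_3$, and it reproduces the structure of~\eqref{eq:formulas_F}, including $F_{111}=\gamma_1+\gamma_2+\gamma_3$.
\item The case $(2,1)$ is analogous, with conjugated eigenvectors.
\end{itemize}

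Next come the discrete symmetries, used exactly as in~\eqref{eq:pc_action} and~\eqref{eq:action_R}.
\begin{itemize}
\item $\cC\cP$ sends $\phi_a\mapsto\phi_{\underline a}$ and anticommutes with $\na$. It yields $F^{ab}_{qjm} = \pm\overline{F^{\underline a\underline b}_{qjm}}$, where the sign comes from three derivatives and the constant $c$.
\item $\mathfrak{R}$ commutes with the $Y_i$ by assumption, and satisfies~\eqref{eq:com_R_der} and~\eqref{eq:transf_phi_R}. It yields $F^{ab}_{qjm} = (-1)^{\#\{\text{indices equal to }2\}} F^{\underline a\underline b}_{qjm}$.
\end{itemize}
Combining these two relations with the rotation structure forces $\chi_1,\gamma_1,\gamma_2,\gamma_3\in\R$ and $\chi_2\in\R$, so that $F^{11}_{222}\in i\R$. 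It also fixes the relations between $F^{11}$ and $F^{22}$, and between $F^{12}$ and $F^{21}$, including the factors $\pm i$ in~\eqref{eq:formulas_F}.

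For the last claim, assume $Y_1=Y_3$ and $Y_2=1$. Since $-i\partial_{K,j}$ is symmetric on $L^2\per(\Omega)$, moving it to the left gives
\begin{align*}
F^{ab}_{qjm} = \ps{Y_1 (-i\partial_{K,j})Y_1(-i\partial_{K,q})w_a, (-i\partial_{K,m}) w_b} = \overline{F^{ba}_{mjq}}.
\end{align*}
Taking $a=b=1$ and $q=j=m=2$ shows that $F^{11}_{222}$ is real; since it also lies in $i\R$, it vanishes, hence $\chi_2=0$. Taking $(a,b)=(1,2)$ and $(q,j,m)=(1,2,2)$ gives $\gamma_1 = \overline{F^{21}_{221}} = \gamma_3$, using that $\gamma_3$ is real.

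I expect the main obstacle to be bookkeeping rather than ideas. One has to translate the eigenbasis description back into Cartesian entries with exactly the $i$ factors and signs of~\eqref{eq:formulas_F}, and track the signs produced by the conjugation factor $c$, by $\cC\cP$ and by $\mathfrak{R}$ (including the sign $(-1)^\eta$ in~\eqref{eq:transf_phi_R}, which enters squared). I would handle this by computing $e_+^{\otimes 3}$ and the three vectors of type $e_+\otimes e_+\otimes e_-$ explicitly, and then reading off the entries.
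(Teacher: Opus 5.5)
Your proposal is correct. The discrete-symmetry step is the same as the paper's, and so is the final adjointness argument. For the first, $\cC\cP$ commutes with $-i\partial_j$, so it gives $F^{ab}_{qjm}=\overline{F^{\underline a\underline b}_{qjm}}$ with a plus sign, and $\mathfrak{R}$ gives the factor $(-1)^{q+j+m+1}=(-1)^{\#\{\text{indices}=2\}}$. For the second, you use $F^{11}_{222}$ where the paper uses $F^{11}_{121}$, which is equivalent.

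The genuine difference is how you solve the rotation constraint $R_{\frac{2\pi}{3}}^{\otimes 3}F^{ab}=\omega^{b-a}F^{ab}$. The paper writes it as the explicit $8\times 8$ system $(W-8\omega^{a-b})F^{ab}=0$. It then reduces that system by exhibiting invertible matrices $A$, $B$ with $A(W-8)=96\widetilde A$ and $B(W-8\omega)=96\widetilde B$, and reads off the relations from the rows of $\widetilde A$ and $\widetilde B$. You instead diagonalize $R_{\frac{2\pi}{3}}^{\otimes 3}$ in the basis $e_{s_1}\otimes e_{s_2}\otimes e_{s_3}$ and keep only the patterns with $s_1+s_2+s_3\equiv b-a \pmod 3$.

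This reproduces exactly the stated relations. With $F^{11}=\alpha\, e_+^{\otimes 3}+\beta\, e_-^{\otimes 3}$ you get $F_{122}=-(\alpha+\beta)=-F_{111}$ and $F_{112}=-i(\alpha-\beta)=-F_{222}$. With $F^{12}=x\,e_-\otimes e_+\otimes e_+ + y\,e_+\otimes e_-\otimes e_+ + z\,e_+\otimes e_+\otimes e_-$ you get $F^{12}_{122}=-x+y+z$, $F^{12}_{212}=x-y+z$, $F^{12}_{221}=x+y-z$ and $F^{12}_{111}=x+y+z$. You also get $F^{12}_{211}=-iF^{12}_{122}$, and similarly for the other two, so $\gamma_1+\gamma_2+\gamma_3=F^{12}_{111}$ falls out automatically.

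Your route explains why the solution spaces have dimensions $2$ and $3$, and it removes the need to produce and check the matrices $A$, $B$. It also extends verbatim to any number of derivatives. The paper's route is a direct verification that is easy to check by machine but gives no structural insight. You also do not need a separate rotation analysis for $(2,1)$: $\cC\cP$ already gives $F^{21}_{qjm}=\overline{F^{12}_{qjm}}$ entrywise.

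There is one small slip in the last step. You should move only $-i\partial_{K,j}$ to the left, which gives $F^{ab}_{qjm}=\langle(-i\partial_{K,j})Y_1(-i\partial_{K,q})w_a,\,Y_1(-i\partial_{K,m})w_b\rangle=\overline{F^{ba}_{mjq}}$. Your intermediate expression also moves $Y_1$ across, which would require $Y_1^*=Y_1$, and that is not assumed. The identity $F^{ab}_{qjm}=\overline{F^{ba}_{mjq}}$ you then use is correct regardless, so $\chi_2=0$ and $\gamma_1=\gamma_3$ follow as you say.
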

\begin{proof}
	Using $\cP \cC$ and $\mathfrak{R}$, the same arguments as in~\eqref{eq:pc_action} and~\eqref{eq:action_R} show that for any $a,b,q,j,m \in \{1,2\}$,
\begin{align*}
F^{ab}_{qjm} = \overline{F^{\underline{a} \underline{b}}_{qjm}}, \qquad \qquad F^{ab}_{qjm} = (-1)^{q+j+m+1} F^{\underline{a} \underline{b}}_{qjm},
\end{align*}
where we used that $(-1)^{\delta_{q-2} + \delta_{j-2} + \delta_{m-2}} = (-1)^{q+j+m+1}$, which comes from the commutation of $\mathfrak{R}$ with $\partial_s$ for $s \in \{q,j,m\}$. Using both equations we deduce that $F^{ab}_{qjm} = (-1)^{q+j+m+1} \overline{F^{ab}_{qjm}}$ hence 
\begin{align}\label{eq:fab} 
\begin{array}{ll}
F^{ab}_{qjm} = F^{\underline{a} \underline{b}}_{qjm} \in \R & \mbox{if } q+j+m \in 2\N +1, \\
F^{ab}_{qjm} = -F^{\underline{a} \underline{b}}_{qjm} \in i\R & \mbox{if }q+j+m \in 2\N.
\end{array}
\end{align}

For $q \in \{1,2\}$, let us denote by $P_q : \C^2 \rightarrow \C$ the projection on the $q^{\text{th}}$ coordinate, and $Q := 2R_{-\f{2\pi}{3}} = \mat{-1 & \sqrt{3} \\ -\sqrt{3} & -1}$. As previously we set $S_j := e^{iKx} Y_j e^{-iKx}$ for $j \in \{1,2\}$. We have
\begin{align}\label{eq:relation_rot_M} 
	&8\omega^{a-b} F^{ab}_{qjm} \nonumber \\
	&=8 \omega^{a-b}\ps{S_1 (-i\partial_q) \phi_a, S_2 (-i\partial_j) S_3 (-i\partial_m) \phi_b} \nonumber \\
	&= 8 \omega^{a-b} \ps{S_1 (P_q (-i\na)) \phi_a, S_2 (P_j(-i\na)) S_3 (P_m(-i\na)) \phi_b} \nonumber \\
	&=  8 \omega^{a-b}\ps{\cR_{\f{2\pi}{3}} S_1 (P_q (-i\na)) \phi_a, \cR_{\f{2\pi}{3}}S_2 (P_j(-i\na)) S_3 (P_m(-i\na)) \phi_b} \nonumber \\
	&\underset{\substack{\eqref{eq:transform_cR}}}{=} \;  8 \omega^{a-b} \times \nonumber \\
	&\hspace{-0.5cm} \ps{ S_1 (P_q R_{-\f{2\pi}{3}} (-i\na)) \cR_{\f{2\pi}{3}} \phi_a, S_2 (P_j R_{-\f{2\pi}{3}}(-i\na))  S_3 (P_m R_{-\f{2\pi}{3}}(-i\na)) \cR_{\f{2\pi}{3}} \phi_b} \nonumber \\
	&\underset{\substack{\eqref{eq:syms_phi}}}{=} \;  \sum_{\substack{1 \le q',j',m' \le 2}} \ps{ S_1 Q_{qq'} (-i\partial_{q'})  \phi_a, S_2 Q_{jj'}(-i\partial_{j'})  S_3 Q_{mm'}(-i\partial_{m'}) \phi_b} \nonumber \\
	&=   \sum_{\substack{1 \le q',j',m' \le 2}} Q_{qq'} Q_{jj'} Q_{mm'}        F^{ab}_{q'j'm'}
\end{align}
We then define the coordinates 
\begin{align*}
	g_1 &:= (1,1,1), \qquad g_2 := (1,1,2), \qquad g_3 := (1,2,1), \qquad g_4 := (1,2,2), \\
g_5 &:= (2,1,1), \qquad g_6 := (2,1,2), \qquad g_7 := (2,2,1), \qquad g_8 := (2,2,2),
\end{align*}
the vectors $F^{ab} \in \C^8$ by $F^{ab} := \pa{F^{ab}_{g_\mu}}_{1 \le \mu \le 8}$, and the $8 \times 8$ matrix $W$ by
\begin{align*}
W_{\alpha\beta} := Q_{(g_\alpha)_1 (g_\beta)_1} Q_{(g_\alpha)_2 (g_\beta)_2} Q_{(g_\alpha)_3 (g_\beta)_3},
\end{align*}
where $(g_\alpha)_q$ is the $q^{\text{th}}$ coordinate of $g_\alpha$, where $q \in \{1,2,3\}$ and $\alpha , \beta \in \{1,\dots,8\}$. We can compute
\begin{align*}
	\tiny{
W = 
\mat{
 -1 & \sqrt{3} &  \sqrt{3} & -3 & \sqrt{3} & -3 &-3 & 3\sqrt{3} \\
 -\sqrt{3} & -1 & 3 & \sqrt{3} &  3 & \sqrt{3} & -3\sqrt{3} & -3 \\
 -\sqrt{3} &  3 &-1 & \sqrt{3} &  3 &-3\sqrt{3} &  \sqrt{3} & -3 \\
 -3 &-\sqrt{3} & -\sqrt{3} & -1 & 3\sqrt{3} &  3 & 3 & \sqrt{3} \\
 -\sqrt{3} &  3 & 3 &-3\sqrt{3} & -1 & \sqrt{3} &  \sqrt{3} & -3 \\
 -3 &-\sqrt{3} &  3\sqrt{3} &  3 &-\sqrt{3} & -1 & 3 & \sqrt{3} \\
 -3 & 3\sqrt{3} & -\sqrt{3} &  3 &-\sqrt{3} &  3 &-1 & \sqrt{3} \\
 -3\sqrt{3} & -3 &-3 &-\sqrt{3} & -3 &-\sqrt{3} & -\sqrt{3} & -1 \\
}
},
\end{align*}
the relation~\eqref{eq:relation_rot_M} can be written, for any $a,b \in \{1,2\}$,
\begin{align}\label{eq:system_8} 
	\pa{W - 8 \omega^{a-b}} F^{ab} = 0.
\end{align}

Let us first determine $F^{11}$ and $F^{22}$. We define
\begin{align*}
	\tiny{
A := 
\mat{
   \sqrt{3} & -9 & -3 & - 3 \sqrt{3} &  3 & - \sqrt{3} &   \sqrt{3} &  3 \\
  3 &   \sqrt{3} &   3 \sqrt{3} & -9 & - \sqrt{3} &  3 & -3 & - \sqrt{3} \\
 -24 & 0 & 0 &  24 & 0 &  24 &  24 & 0 \\
 0 &  24 &  24 & 0 &  24 & 0 & 0 & -24 \\
 0 & -6 &  6 & 0 & 0 & - 2 \sqrt{3} &   2 \sqrt{3} & 0 \\
 -6 & 0 & 0 & -6 & - 2 \sqrt{3} & 0 & 0 & - 2 \sqrt{3} \\
   2 \sqrt{3} & 0 & 0 &   2 \sqrt{3} & -6 & 0 & 0 & -6 \\
 0 &   2 \sqrt{3} & - 2 \sqrt{3} & 0 & 0 & -6 &  6 & 0 \\
}
}
\end{align*}
respecting $\det A \neq 0$, and
\begin{align*}
	\tiny{
\widetilde{A} := \mat{
0 &  1 & 0 & 0 & -1 & 0 & 0 & 0 \\
0 & 0 & 0 &  1 & 0 & -1 & 0 & 0 \\
0 & 0 & 0 & 0 & 0 & 0 & 0 & 0 \\
0 & 0 & 0 & 0 & 0 & 0 & 0 & 0 \\
0 &  1 & -1 & 0 & 0 & 0 & 0 & 0 \\
 1 & 0 & 0 &  1 & 0 & 0 & 0 & 0 \\
0 & 0 & 0 & 0 &  1 & 0 & 0 &  1 \\
0 & 0 & 0 & 0 & 0 &  1 & -1 & 0 \\
}
}
\end{align*}
and we can check that $A (W - 8) = 96\widetilde{A}$. Thus the systems~\eqref{eq:system_8} with $a=b$ can be rewritten $\widetilde{A} F^{11} = 0$ and $\widetilde{A} F^{22} = 0$. The first system is equivalent to
\begin{align}\label{eq:eq_sys} 
F^{11}_{112} = F^{11}_{211} = F^{11}_{121} = -F^{11}_{222}, \quad \text{ and }  \quad F^{11}_{122} = F^{11}_{212} = F^{11}_{221} = -F^{11}_{111}.
\end{align}
Using~\eqref{eq:fab} we have $F^{11}_{222} \in i \R$ so $\chi_2 = -i F^{11}_{222}  \in \R$. Similarly, we have $\chi_1 \in \R$. Again using~\eqref{eq:fab}, we deduce~\eqref{eq:formulas_F_aa}.

We now determine $F^{12}$ and $F^{21}$. We define
\begin{align*}
&B := \\ 
&\tiny{
\mat{
  7 \sqrt{3} - 3i & -3 - 3 \sqrt{3}i & 0 & 0 & -3+ \sqrt{3}i &  \sqrt{3}+ 3i &  4 \sqrt{3} & 0 \\
 -3+ \sqrt{3}i & -3 \sqrt{3}+ 3i & 0 & 0 &  \sqrt{3}+ 3i &  3 - \sqrt{3}i &  4 \sqrt{3}i & 0 \\
 -\sqrt{3} - 3i & -3 - 3 \sqrt{3}i & 0 & 0 & -3+ \sqrt{3}i & -7 \sqrt{3}+ 3i &  4 \sqrt{3} & 0 \\
 -3+ \sqrt{3}i &  5 \sqrt{3}+ 3i & 0 & 0 & -7 \sqrt{3}+ 3i &  3 - \sqrt{3}i &  4 \sqrt{3}i & 0 \\
 -4 \sqrt{3} &  12 & 0 & 0 &  4 \sqrt{3}i & -4 \sqrt{3} & -4 \sqrt{3}+ 12i & 0 \\
 -8 \sqrt{3}i & -8 \sqrt{3} & 0 & 0 & 0 & 0 & -8 \sqrt{3}i & -8 \sqrt{3} \\
  24 \sqrt{3} & 0 & 0 &  24 \sqrt{3} & -24 \sqrt{3}i & 0 & 0 & -24 \sqrt{3}i \\
 0 &  24 \sqrt{3} & -24 \sqrt{3} & 0 & 0 & -24 \sqrt{3}i &  24 \sqrt{3}i & 0 \\
}
}
\end{align*}
respecting $\det B \neq 0$, and
\begin{align*}
	\tiny{
\widetilde{B} := \mat{
-i & 0 & 0 & 0 & 0 & 0 & 0 &  1 \\
0 &  i & 0 & 0 & 0 & 0 &  1 & 0 \\
0 & 0 & -1 & 0 & 0 &  i & 0 & 0 \\
0 & 0 & 0 &  1 &  i & 0 & 0 & 0 \\
0 & 0 & 0 & 0 &  1 &  i &  i & -1 \\
0 & 0 & 0 & 0 & 0 & 0 & 0 & 0 \\
0 & 0 & 0 & 0 & 0 & 0 & 0 & 0 \\
0 & 0 & 0 & 0 & 0 & 0 & 0 & 0 \\
}.
}
\end{align*}
We can check that $B (W - 8\omega) = 96\widetilde{B}$, so the system~\eqref{eq:system_8} with $a \neq b$ can be rewritten $\widetilde{B} F^{21} = 0$, and we deduce that
\begin{align*}
F^{21}_{222} &= i F^{21}_{111}, \qquad \qquad F^{21}_{221} = - i F^{21}_{112}, \qquad \qquad F^{21}_{212} = -i F^{21}_{121} \\
F^{21}_{211} &= i F^{21}_{122}  \qquad \qquad F^{21}_{211} + i F^{21}_{212} + i F^{21}_{221} - F^{21}_{222} = 0,
\end{align*}
and we can conclude by using~\eqref{eq:fab}.

If moreover, $Y_1 = Y_3$ and $Y_2 = 1$, then
\begin{align*}
	-i\chi_2 &= F^{11}_{121} = \ps{Y_1 (-i\partial_{K,1}) w_1, (-i\partial_{K,2}) Y_1 (-i\partial_{K,1}) w_1} \\
&= \ps{(-i\partial_{K,2}) Y_1 (-i\partial_{K,1}) w_1,  Y_1 (-i\partial_{K,1}) w_1} \\
&= \overline{\ps{Y_1 (-i\partial_{K,1}) w_1, (-i\partial_{K,2}) Y_1 (-i\partial_{K,1}) w_1}} = \overline{F^{11}_{121}} = i \chi_2,
\end{align*}
so $\chi_2 = 0$. Similarly,
\begin{align*}
\gamma_3 &= F^{12}_{221} = \ps{Y_1 (-i\partial_{K,2}) w_1, (-i\partial_{K,2}) Y_1 (-i\partial_{K,1}) w_2} \\
 &= \overline{\ps{ Y_1 (-i\partial_{K,1}) w_2,  (-i\partial_{K,2}) Y_1 (-i\partial_{K,2}) w_1}} = \overline{F^{21}_{122}} = \overline{\gamma_1} = \gamma_1.
\end{align*}
\end{proof}

\subsection{Computation of the matrices~\eqref{eq:def_matrices} in the case of $\cF_{1,k}$, proof of Proposition~\ref{prop:order1_kdep}}%
\label{sub:Matrices in the case of k}

 Those computations involve one or two derivatives, so Lemma~\ref{lem:D} is going to provide the answers. 

\subsubsection{Satisfying the assumptions of Lemma~\ref{lem:D} }%
\label{ssub:Satisfying the assumptions of Lemma}

We define $h := \f{1}{2} (-\Delta) + v$ as an operator of $L_K^2(\Omega)$. With the notation~\eqref{eq:def_phi}, we define the operator $\cP$ as the orthogonal projection of $L^2_K(\Omega)$ onto $\Span (\phi_1,\phi_2) \subset L^2_K(\Omega)$, and the pseudo-inverse
\begin{align*}
	Z := \left\{
	\begin{array}{ll}
	\pa{(E\fer - h)_{\mkern 1mu \vrule height 2ex\mkern2mu \cP_\perp L^2_K(\Omega) \rightarrow \cP_\perp L^2_K(\Omega)}}^{-1} & \mbox{on } \cP_\perp L^2_K(\Omega) \\
	0 & \mbox{on } \cP L^2_K(\Omega).
	\end{array}
	\right.
\end{align*}
We have formally $Z = e^{iKx} R e^{-iKx}$, and $Z$ has the role of ``$S_a$'' in Lemma~\ref{lem:D}. We have that 
\begin{align*}
\sigma_3 (ma) = m_1 \sigma_3 a_1 + m_2 \sigma_3 a_2 = -m_1 a_2 - m_2 a_1 = - (\sigma_1 m) a
\end{align*}
so 
\begin{align*}
K \cdot ( -(\sigma_1 m) a) = -K \cdot ( \sigma_3 (ma)) = - (\sigma_3 K) \cdot ma \underset{\substack{\sigma_3 K = - K}}{=} \; K \cdot ma.
\end{align*}
If $\phi \in L^2_K(\Omega)$, then
\begin{align*}
	(\mathfrak{R} \phi)(x + ma) &= \phi(\sigma_3(x+ma)) = \phi(\sigma_3 x - (-(\sigma_1 m) a)) \\
&= e^{i K \cdot (-(\sigma_1 m) a)} \phi(\sigma_3 x) = e^{i K \cdot ma} \pa{\mathfrak{R} \phi} (x),
\end{align*}
and $\mathfrak{R} \phi \in L^2_K(\Omega)$. Hence $\mathfrak{R}$ sends $L^2_K(\Omega)$ into itself. Moreover, since $\mathfrak{R} \phi_1 = (-1)^\eta \phi_2$, then $\mathfrak{R}$ sends $\cP L^2_K(\Omega)$ into itself and $\cP_\perp L^2_K(\Omega)$ as well. We have $\mathfrak{R} \Delta = \Delta \mathfrak{R}$ and by the symmetries~\eqref{eq:honeycomb}, $\mathfrak{R} v = v \mathfrak{R}$ so $\mathfrak{R}$ commutes with $h$. We can conclude that
\begin{align*}
	\mathfrak{R} Z = Z \mathfrak{R},
\end{align*}
and we are going to be able to satisfy the assumption ``$S_a \mathfrak{R} = \mathfrak{R} S_a$'' of Lemma~\ref{lem:D}. 

Moreover, $R_{\f{2\pi}{3} }K -K$ belongs to the reciprocal lattice of $\bbL$ so for any $m \in \Z^2$, $K \cdot R_{-\f{2\pi}{3} } (ma) = R_{\f{2\pi}{3} } K \cdot  (ma) = K \cdot ma$ modulo $2\pi$, and we can see that $\cR_{\f{2\pi}{3} }$ sends $L^2_K(\Omega)$ onto itself and commutes with $Z$. Similarly $\cP \cC$ sends $L^2_K(\Omega)$ onto itself and commutes with $Z$. 

We proved that all the assumptions of Lemma~\ref{lem:D} are satisfied, with $\iota_1 \iota_2 = 1$, so we can apply it.

\subsubsection{Matrix $S$}%
\label{ssub:Matrix S k}

We have $S = \pa{\ps{R \der_k w_a, R \der_k w_b}}_{1 \le a,b \le 2}$. Applying Lemma~\ref{lem:D} and~\eqref{eq:double_der_u_generic} yields $S$ as in~\eqref{eq:expr_SM_k}.

\subsubsection{Matrix $M$}%
\label{ssub:Matrix M k}

We have
\begin{align*}
M_{ab}&=\ps{R (-i\partial_{K,\alpha}) w_a, \pa{h_K - E\fer} R (-i\partial_{K,\beta}) w_b} \\
      & = - \ps{(-i\partial_{K,\alpha}) w_a, R (-i\partial_{K,\beta}) w_b},
\end{align*}
and we proceed as for $S$.

\subsubsection{Matrix $T$}%
\label{ssub:Matrix T k}

Then
\begin{align*}
T_{ab} = \ps{w_a, (-i\na_K) R \der_k w_b} = \ab{k}^{-1} \sum_{1\le \alpha \le 2} k_\alpha \ps{w_a, (-i\na_K) R (-i\partial_{K,\alpha}) w_b}.
\end{align*}
Using Lemma~\ref{lem:D},
\begin{align*}
T_{11} &= \ab{k}^{-1}\pa{k_1  \ps{w_1, (-i\na_K) R (-i\partial_{K,1}) w_1} + k_2  \ps{w_1, (-i\na_K) R (-i\partial_{K,2}) w_1}} \\
       &= \ab{k}^{-1}\pa{k_1  \mat{\tur \\ i\sur} + k_2 \mat{-i\sur \\ \tur}} = \ab{k}^{-1} \mat{k_1 \tur - i k_2 \sur\\ ik_1 \sur + k_2 \tur} \\
       &= \pa{\tur \1 - \sur \sigma_2} \tfrac{k}{\ab{k} } .
\end{align*}
Similarly,
\begin{align*}
T_{12} = \rur e^{i\theta_k} \mat{1 \\ i}, \qquad T_{21} = \rur e^{-i\theta_k} \mat{1 \\ -i}, \qquad T_{22} &= \pa{\tur \1 + \sur \sigma_2} \tfrac{k}{\ab{k} } 
\end{align*}
so we have~\eqref{eq:expr_T_k}.

\subsubsection{Matrix $L$}%
\label{ssub:Matrix L k}

The matrix $L$ is the first one which involves three derivatives in the scalar products. We apply Lemma~\ref{lem:three_derivatives}, we are in the configuration where ``$Y_1 = Y_3$'' and ``$Y_2 = 1$'' so $\chi_2 = 0$ and $\gamma_1 = \gamma_3$.  We define $e_1 := \icol{1 \\ 0}$ and $e_2 := \icol{0 \\ 1}$. We have $F^{11}_{qjm} =  \chir (-1)^{1+ \delta_{q=j=m}} \delta_{q+j+m \in 2\N +1}$. We compute, for $a \in \{1,2\}$,
\begin{align*}
&\ps{\psi_{a+2}, (-i\na) \psi_{a+2}} = \ps{R \der_k w_a, (-i\na_K) R \der_k w_a} \\
&\qquad = \ab{k}^{-2} \sum_{1 \le q,j,m \le 2} e_j k_q k_m F^{aa}_{qjm} = \chir \ab{k}^{-2} \mat{k_1^2 - k_2^2 \\ -2 k_1 k_2} =
	\chir \mat{\cos(2 \theta_k) \\ -\sin (2\theta_k)}.
\end{align*}
and
\begin{align*}
&\ps{\psi_3, (-i\na) \psi_4} =\ps{R \der_k w_1, (-i\na_K) R \der_k w_2} \\
&\qquad = \ab{k}^{-2} \mat{ k_1^2 (2\gaur + \gadr) + k_2^2 \gadr - 2 i k_1 k_2 \gaur \\ -i k_1^2 \gadr - i k_2^2 (2\gaur + \gadr) + 2k_1 k_2 \gaur} \\
&\qquad =  \mat{\gadr + 2 \gaur e^{-i\theta_k} \cos \theta_k \\ -i\gadr + 2 \gaur e^{-i\theta_k} \sin \theta_k} = \overline{\ps{\psi_4, (-i\na) \psi_3}}
\end{align*}
and we can conclude~\eqref{eq:expr_L_k}.

\subsection{Case of excited states, proof of Proposition~\ref{prop:excited}}%
\label{sub:Case of excited states}

From~\cite{FefWei12}, defining $\phi_3(x) := e^{i K \cdot x} w_3(x)$, we have that $\cR_{\f{2\pi}{3} } \phi_3 = \cP \cC \phi_3 = (-1)^\eta \mathfrak{R} \phi_3 = \phi_3$ and 
\begin{align*}
&R_{\f{2\pi}{3}}	\ps{\phi_1, (-i\na) \phi_3} = R_{\f{2\pi}{3}}\ps{\cR_{\f{2\pi}{3}} \phi_1, \cR_{\f{2\pi}{3}} (-i\na) \phi_3} \underset{\substack{\eqref{eq:transform_cR}}}{=} \; \overline{\omega}\ps{\phi_1, (-i\na) \phi_3}.
\end{align*}
Thus using also~\eqref{eq:eigenvec_R}, $\ps{\phi_1, (-i\na) \phi_3} = i \vpp \icol{1 \\ i}$ for some $\vpp \in \C$. Similarly, $\ps{\phi_2, (-i\na) \phi_3} = i c \icol{1 \\ -i}$ for some $c \in \C$, and $\ps{\phi_3, (-i\na) \phi_3} = 0$. Then
\begin{align*}
	\ps{\phi_1, (-i\na) \phi_3} = \overline{\ps{\cP \cC \phi_1, \cP \cC (-i\na) \phi_3}} =  \overline{\ps{\phi_2,  (-i\na) \cP \cC \phi_3}} = \overline{\ps{\phi_2, (-i\na) \phi_3}}
\end{align*}
and we deduce that $\vpp = -\overline{c}$. Moreover, 
\begin{align*}
	\ps{\phi_1, (-i\na) \phi_3} &= \ps{\fR \phi_1, \fR (-i\na) \phi_3} = -\sigma_3 (-1)^\eta \ps{\phi_2,  (-i\na) \fR \phi_3} \\
& = -\sigma_3 \ps{\phi_2,  (-i\na) \phi_3} 
\end{align*}
implying that $\vpp = -c$ and $\vpp \in \R$.

\section{Formal derivation of the perturbation series for the macroscopic functions}%
\label{sec:Formal derivation of the perturbation series for the macroscopic functions}

In this section, we provide a formal justification of the choice of adding derivatives with respect to the momentum, in the families $\cF$ of Section~\ref{sub:Dirac modes and their derivatives}.

We recall that when $A=0$, from~\eqref{eq:def_h_ep_q} we have
\begin{align*}
h^\ep_q = \tfrac{1}{2} (-i\na_q)^2 + v(x) + \ep V(\ep x).
\end{align*}
We formally define $X = \ep x$ as the macroscopic spatial variable, and hence $\na = \na_x + \ep \na_X$, $\na_q = \na_{x,q} + \ep \na_X$ where $\na_{x,q} := \na_x +iq$, and
\begin{align*}
	h^\ep_q &= \tfrac{1}{2} \pa{(-i\na_{x,q})^2 + 2 \ep (-i\na_{x,q})(-i\na_X) + \ep^2 (-i\na_X)^2 } + v(x) + \ep V(X) \\
&= h_q +  \ep (-i\na_{x,q})(-i\na_X) + \tfrac 12 \ep^2 (-i\na_X)^2 + \ep V(X).
\end{align*}
We are interested in a branch $\ep \mapsto (E^\ep_q, U^\ep_q)$ of eigenmodes of $h^\ep_q$. Thus we expand the eigenmodes in $\ep$ as 
\begin{align*}
\xi^\ep_q(x,X) = \xi^0_q(x,X) + \ep \xi^1_q(x,X) + O(\ep^2), \qquad E^\ep_q = E^0_q + \ep E^1_q + O(\ep^2).
\end{align*}
Since $\xi^\ep_q$ is $\ep^{-1}\Omega$-periodic, then the $\xi^j_q$'s as well, for $j \in \{0,1\}$.
The zeroth order in $\ep$ of $h^\ep_q \xi^\ep_q = E^\ep_q \xi^\ep_q$ is
\begin{align}\label{eq:first_order} 
\pa{\tfrac{1}{2} (-i\na_{x,q})^2 + v(x) - E^0_q} \xi^0_q(x,X) = 0
\end{align}
and the following order is
\begin{multline}\label{eq:second_order} 
\pa{\tfrac{1}{2} (-i\na_{x,q})^2 + v(x) - E^0} \xi^1_q(x,X) \\
= -\pa{(-i\nabla_{x,q}) \cdot (-i\nabla_X) + V(X) - E^1} \xi^0_q(x,X).
\end{multline}

From~\eqref{eq:first_order} we obtain that for $q$ in a neighborhood of $K$, there exist $\Omega$-periodic functions $\alpha_q^1,\alpha_q^2$ such that
\begin{align*}
\xi^0_q(x,X) = \sum_{a=1}^{2} \alpha_q^a(X) g^a_q(x),
\end{align*}
where $(g^a_q)_{1 \le a \le 2}$ are the Bloch eigenfunction of $\tfrac{1}{2} (-i\na_q)^2 + v$ at the Fermi level. Then~\eqref{eq:second_order} becomes
\begin{multline*}
\pa{\tfrac{1}{2} (-i\na_{x,p})^2 + v(x) - E^0} \xi^1_q(x,X) \\
  = -\sum_{j=1}^{2} \pa{(-i\na_X \alpha_q^j)(X) \cdot (-i\na_{x,q} g^j_q)(x) + (V(X) - E^1_q)\alpha_q^j(X) g^j_q(x)}.
\end{multline*}
This equation has a solution if and only if the right-hand side of this last equation is orthogonal to $\Span (g^a_q)_{a \in \{1,2\}}$, that is, for all $X \in \Omega$,
\begin{align*}
 0 
   &=  \sum_{j=1}^{2} (-i\na_X \alpha_q^j)(X) \cdot\ps{g_q^a,  (-i\na_{x,q}) g_q^j}_{L^2\per(\Omega)} + (V(X) - E^1_q) \alpha_q^j(X)\delta_{aj}.
\end{align*}
By using that $\ps{g^a_q,g^b_q} = \delta_{ab}$, it is equivalent to
\begin{align*}
	\pa{\slashed{D}_q - E^1_q} \alpha_q = 0,
\end{align*}
where $F_q := \pa{\ps{g_q^a, (-i\na_{x,q}) g_q^b}}_{1 \le a,b \le 2}$ and
\begin{align*}
	\slashed{D}_q := F_q \cdot (-i\na_X) + \1_{2 \times 2} \otimes V(X).
\end{align*}
Finally, we are interested in $q = K + \ep k$ so we can expand
\begin{align*}
g^b_{K + \ep k} = g^b_K + \ep \pa{\delta^+_K g^b_\cdot}(k) + O(\ep^2) = w_b + \ep \pa{\delta^+_K g^b_\cdot}(k) + O(\ep^2),
\end{align*}
where $\pa{\delta^+_x f}y$ is the directional derivative of a function $f$ at $x$ in direction $y$. 
We now expand 
\begin{align*}
	\xi^0_q(x,X) = \sum_{a=1}^{2} \alpha_q^a(X) \pa{w_a + \ep \pa{\delta_K^+ g_\cdot^a}(k)} + O(\ep^2)
\end{align*}
Let us denote by $P_\cF$ the orthogonal projection onto the subspace~\eqref{eq:proj_space}. The error between the exact and the approximate eigenfunctions is bounded (up to a constant independent of $\ep$) by $\nor{P^\perp_\cF \xi_q^\ep}{} $, see~\cite{GarSta24}. We focus on the first order error in $\ep$. In the case where $\cF = \cF^0$, we have
\begin{align*}
	\nor{P^\perp_{\cF^0} \xi_q^\ep}{L^2\per(\Omega)} \le \ep \nor{P^\perp_{\cF^0} \pa{\xi_q^1 + \sum_{a=1}^{2} \alpha_q^a \pa{\delta_K^+ g_\cdot^a}(k)}}{L^2\per(\Omega)} + O(\ep^2)
\end{align*}
while in the case where $\cF = \cF^1$, since $\pa{\delta_K^+ g_\cdot^a}(k) \in \Span \cF^1$ we have that $P^\perp_{\cF^1} \pa{\delta_K^+ g_\cdot^a}(k) = 0$ and
\begin{align*}
	\nor{P^\perp_{\cF^1} \xi_q^\ep}{L^2\per(\Omega)} \le \ep \nor{P^\perp_{\cF^1} \xi_q^1 }{L^2\per(\Omega)} + O(\ep^2).
\end{align*}
Moreover, since $\cF^0 \subset \cF^1$, we have that for any function $f$, $\nor{P^\perp_{\cF^1} f}{L^2\per(\Omega)} \le \nor{P^\perp_{\cF^0} f}{L^2\per(\Omega)}$. We see that taking a family containing the first derivatives cuts the term $\pa{\delta_K^+ g_\cdot^a}(k)$ and leads to expect that the effective operator will be more accurate.


\section*{Research data management}
The Julia~\cite{BezEdeKar17} code enabling to produce the figures of this document can be found on the Github repository
\begin{center}
\url{https://github.com/lgarrigue/superlattice_graphene} 
\end{center}

\section*{Acknowledgement}
I thank Eric Cancès for a useful discussion.

\appendix

\section{Formal degenerate perturbation theory}%
\label{sec:Degenerate perturbation theory}

Here we review first order degenerate perturbation theory when the degeneracy is lifted at first order and when the degeneracy is exactly $2$. We are only interested in the formal computations and will not be precise about the rigorous assumptions that are needed, see for instance\cite{Kato} in~\cite{ReeSim4} for this purpose.

\subsection{The problem}%
\label{sub:The problem}

Consider some self-adjoint operators $A^0,A^1,A^2$ of a Hilbert space $\cH$. Take, for any $s \in \R$, a Hamiltonian 
\begin{align*}
A(s) = A^{0} + s A^{1} + s^2 A^{2}
\end{align*}
and assume that there exists an eigenvalue $E^0 \in \R$ of $A^0$ such that $\dim \Ker (A^0 - E^0) = 2$. Then there exists two branch eigenstates $s \mapsto U_n(s) \in \cH$ of $A(s)$ for $n \in \{-,+\}$, analytic in $s \in [0,s_0[$ for some $s_0 >0$, such that $\Ker (A^0 - E^0) = \Span \pa{U_-(0),U_+(0)}$. The corresponding eigenvalues are denoted by $E_n(s)$. We expand in the so-called Rayleigh-Schrödinger series
\begin{align*}
U_n(s) = U_n^0 + s U_n^1 + s^2 U_n^2 + O(s^3), \qquad E_n(s) = E_n^0 + s E_n^1 + s^2 E_n^2 + O(s^3).
 \end{align*}
The goal is to compute $U_n^0$ and $U_n^1$ when the degeneracy is lifted at first order. This problem is well-known in the physics literature, see for instance~\cite{Kato,Hir69}. We provide the computations here for completeness.

\subsection{The reasoning}
We consider intermediate normalization, hence $U_n^k \perp U_n^{0}$ for any $k \in \N \backslash \{0\}$, but we do not necessarily have $\norm{U_n(s)} = 1$, see~\cite[Appendix A]{GarSta24} for more details on this procedure. Nevertheless, as presented in~\cite[Lemma C.1]{GarSta24}, perturbation vectors from intermediate normalization are equal to perturbation vectors from unit normalization at zeroth and first order, so the conclusion will not change.

We develop at the first 3 orders the eigenvalue equations
\begin{align*}
\bpa{A(s) - E_n(s)} U_n(s) = 0
\end{align*}
 and obtain
\begin{align}
\bpa{A^{0} - E^0} U_n^{0} &= 0, \label{eq:first_schro} \\
\bpa{A^{1} - E_n^{1}} U_n^{0} + \bpa{A^{0} - E^0} U_n^{1} &= 0,\label{eq:second_schro} \\
\bpa{A^{2} - E_n^{2}} U_n^{0} +\bpa{A^{1} - E_n^{1}} U_n^{1} + \bpa{A^{0} - E^0} U_n^{2} &= 0.\label{eq:third_schro}
\end{align}
 Let us denote by $P$ the orthogonal projection onto $\Ker \bpa{A^{0}-E^{0}}$, and $P_\perp := 1 - P$. Let us define the pseudoinverse 
\begin{align*}
	G := \left\{
	\begin{array}{ll}
	\pa{(E^0 - A^0)_{\mkern 1mu \vrule height 2ex\mkern2mu P_\perp \cH \rightarrow P_\perp \cH}}^{-1} & \mbox{on } P_\perp \cH \\
	0 & \mbox{on } P \cH,
	\end{array}
	\right.
\end{align*}
extended by linearity on $\cH$. Applying $G$ to~\eqref{eq:second_schro} yields
 \begin{align}\label{eq:Pperp_psi1}
 P_\perp U_n^{1} = G \bpa{A^{1}- E_n^{1}} U_n^{0} = G A^1 U_n^0,
\end{align}
while applying $P$ gives
\begin{align*}
P A^{1} P U_n^{0} = E_n^{1} U_n^{0},
\end{align*}
that is $\bpa{E^{1}_n,U_n^0}$ is an eigenmode of the matrix $P A^{1} P$, in a basis we are free to choose, this is how we get $E^{1}_n$. Then we define $P_n^{1} := P \bpa{1-\widetilde{P}^{1}_n}$ where $\widetilde{P}^{1}_n$ is the orthogonal projection onto $\Ker \bpa{P \pa{A^{1}  - E^{1}_n} P_{\mkern 1mu \vrule height 2ex\mkern2mu P \cH}} \subset P \cH$, hence $P_n^1$ is the orthogonal projection onto $P \cH \cap \pa{\Ker \bpa{P \pa{A^{1} - E^{1}_n} P}}^\perp$. For any vector $u \in \cH$, let us denote by $P_u$ the orthogonal projection onto $\Span u$. We assume that degeneracies are lifted at first order, i.e. that $(P A^1 P)_{\mkern 1mu \vrule height 2ex\mkern2mu P \cH \rightarrow P \cH}$ has no degenerate eigenvalue, then we have $P^1_n = P_{U_{-n}^{0}}$. We apply $P_n^{1}$ to~\eqref{eq:third_schro} and obtain
\begin{align*}
0 &= P_n^{1} \bpa{A^{1} - E^{1}_n} U_n^{1} + P_n^{1} A^{2} U_n^{0} \\
& \underset{\substack{P + P_\perp = 1}}{=} \; P_n^{1} \bpa{A^{1} - E^{1}_n}  P U_n^{1} +P_n^{1} \bpa{A^{1} - E^{1}_n}  P_\perp U_n^{1} + P_n^{1} A^{2} U_n^{0} \\
& \underset{\substack{P U_n^{1} = P_n^{1}U_n^{1}}}{=} \; P_n^{1} \bpa{A^{1} - E^{1}_n}  P_n^{1} U_n^{1} +P_n^{1} A^{1} P_\perp U_n^{1} + P_n^{1} A^{2} U_n^{0} \\
& \underset{\substack{~\eqref{eq:Pperp_psi1} }}{=} \;P_n^{1} \bpa{A^{1} - E^{1}_n}  P_n^{1} U_n^{1} +P_n^{1} A^{1} G A^{1} U_n^{0} + P_n^{1} A^{2} U_n^{0} \\
& = - P_n^{1} \bpa{E^{1}_n - A^{1} }  P_n^{1} U_n^{1} +P_n^{1} \pa{A^{2}  + A^{1} G A^{1}} U_n^{0},
\end{align*}
where we used that $U_n^{1} \perp U_n^{0}$ hence $P U_n^{1} = P_n^{1} U_n^{1}$. Finally, degeneracy is lifted at first order so the restriction $\pa{P_n^{1} \bpa{E^{1}_n - A^{1}}  P_n^{1}}_{\mkern 1mu \vrule height 2ex\mkern2mu \Span U^0_{-n} \rightarrow\Span U^0_{-n}}$ is invertible and its inverse is $\pa{E^1_n - E^1_{-n}}^{-1} P_{U^0_{-n}}$. We apply this operator in the last equation, yielding
\begin{align}\label{eq:Pu} 
P U_n^{1} 
&= \pa{E^1_n - E^1_{-n}}^{-1} \hspace{-0.1cm} P_{U^0_{-n}} \bpa{A^2 + A^{1} G A^{1}} U_n^{0} \nonumber \\
&=\hspace{-0.05cm} \f{\ps{U^0_{-n}, \bpa{A^2 + A^{1} G A^{1}} U_n^{0}}}{E^1_n - E^1_{-n}} U_{-n}^{0}.
\end{align}

\subsection{Conclusion}%
\label{sub:Conclusion general deg}

To conclude, we saw that $U_n^0$ are vectors which diagonalize the matrix $P A^1 P$, i.e. such that
\begin{align*}
P A^{1} P U_n^{0} = E_n^{1} U_n^{0}.
\end{align*}
Then assembling~\eqref{eq:Pperp_psi1} and~\eqref{eq:Pu}, we can deduce the first order
\begin{align}\label{eq:exc_psi1} 
U_n^{1} = G A^{1} U_n^{0}+  \f{\ps{U^0_{-n}, \bpa{A^2 + A^{1} G A^{1}} U_n^{0}}}{E^1_n - E^1_{-n}} U_{-n}^{0}.
\end{align}
Using the scalar products of~\eqref{eq:second_schro} and~\eqref{eq:third_schro} against $U_n^0$, we also obtain
\begin{align*}
E_n^1 &= \ps{U_n^0, A^1 U_n^0}, \\
E_n^2 &= \ps{U_n^0, A^2 U_n^0} + \ps{U^0_n , A^1 U^1_{n}} = \ps{U_n^0, \pa{A^2 + A^1 G A^1} U_n^0}
\end{align*}

\section{Degenerate perturbation theory for our problem}%
\label{sec:Degenerate perturbation theory our problem}

\subsection{Directional derivatives}%
\label{sub:Directional derivatives}

We recall the definition of directional derivatives. Let us take two open normed spaces $B$ and $C$, and a map $f : B \to C$. We consider $x ,y \in B$ such that there exists $s_0 > 0$ with $x+sy \in B$ uniformly in $s \in [0,s_0[$. If the limit
\begin{align*}
\pa{\delta^+_x f}(y) := \mylim{s \to 0^+} \f{f(x+sy)-f(x)}{s}
\end{align*}
exists, then we say that this is the directional derivative of $f$ in the direction $y$. Similarly, we can define the directional derivatives of order $\ell$ of $f$ at $x$ in the direction $y$ as $\bpa{\delta^{+,\ell}_x f}(y) := \pa{\tfrac{\d^\ell}{\d s^\ell} \vp_y}_{\mkern 1mu \vrule height 2ex\mkern2mu s = 0}$, where $\vp_y(s) := f(x+sy)$. We then have the Taylor expansion, for instance up to order 2,
\begin{align*}
f(x + sy) = f(x) + s \bpa{\delta^{+,1}_x f}(y) + \tfrac 12  s^2 \bpa{\delta^{+,2}_x f}(y) + O(s^3).
\end{align*}

\subsection{$k\cdot p$ perturbation method}%
\label{sub:kp perturbation}

Perturbation theory in the Bloch momentum variable $k$ is called the $k\cdot p$ method in physics literature~\cite{LutKoh55,Kittel,PetCar10,Harrison12}. We recall it in this section.

Take $k \in \R^2 \backslash \{(0,0)\}$, the goal of this section is to differentiate the eigenvectors of $h_{K + k}$ with respect to $k$, that is in the direction given by $k$. We define the operators
\begin{align*}
A(s) := h_{K + s k} = \tfrac 12 \bpa{-i\na_{K + s k}}^2 + v = A^0 + s A^1 + s^2 A^2,
\end{align*}
where
\begin{align}\label{eq:def_A} 
	A^0 := \tfrac 12 \pa{-i\na_K }^2 + v, \qquad A^1 := \ab{k}  \der_k = k \cdot (-i\na_K ), \qquad A^2 := \tfrac{\ab{k}^2}{2} .
\end{align}
Similarly as in Section~\ref{sec:Degenerate perturbation theory}, let us assume that we start the perturbation from a level $E^0$ in which $\Ker (A^0 - E^0)$ is of dimension 2, as in the graphene case, and that the degeneracy is lifted at first order, in any direction. We denote by $\bbU_{\eta}(q)$ the two eigenvectors of $h_q$, $\eta \in \{-,+\}$, defined for $q$ locally around $q= K$, such that their eigenvalues at $q = K$ are $E^0$, and such that the maps $U_{\eta,k}(s) := \bbU_{\eta}(K+sk)$ are smooth for $s$ in a neighborhood of $0^+$. The corresponding eigenvalues of $\bbU_\eta(q)$ are denoted by $e_\eta(q)$, and $E_{\eta,k}(s) := e_\eta(K + sk)$.

For any $\ell \in \N \cup \{0\}$, we define the directional derivatives of $\bbU$ and $e$ at $K$ in the direction $k$,
\begin{align*}
U_{\eta,k}^\ell := \bpa{\delta^{+,\ell}_K \bbU_\eta}(k) = \pa{\tfrac{\d^\ell}{\d s^\ell} U_{\eta,k}}_{\mkern 1mu \vrule height 2ex\mkern2mu s = 0}, \quad 
E_{\eta,k}^\ell := \bpa{\delta^{+,\ell}_K e_\eta}(k) = \pa{\tfrac{\d^\ell}{\d s^\ell} E_{\eta,k}}_{\mkern 1mu \vrule height 2ex\mkern2mu s = 0}
\end{align*}
for $\eta \in \{-,+\}$ and $\ell \in \{0,1\}$. We have $U_\eta(s) = \uz{\eta} + s \uo{\eta}$, $E_\eta(s) = E^0 + s E_{\eta,k}^1 + O(s^2)$. We define $k_\C := k_1 + i k_2 \in \C$. In the basis $\w_1, \w_2$ (defined in~\eqref{eq:w1w2}), $(P A^{1} P)_{\mkern 1mu \vrule height 2ex\mkern2mu P\cH \to P \cH}$ can be rewritten
 \begin{align*}
 \pa{\ps{\w_i, A^{1} \w_j}}_{1\le i,j \le 2} = k \cdot \pa{\ps{\w_i, \pa{- i\na_K } \w_j}}_{1\le i,j \le 2} \underset{\substack{\eqref{eq:vF}}}{=} \;  v\fer \sigma \cdot k
 \end{align*}
 and its two eigenvalues are $E^{1}_{\eta,k} = \eta v\fer \ab{k} $. 
 
We recall that for $\kappa \in \C$, the matrix $M := \mat{0 & \kappa \\ \overline{\kappa} & 0}$ has eigenvectors $g_\eta = \mat{\eta \f{\kappa}{\ab{\kappa}} & 1}^T$, with $M g_\eta = \eta \ab{\kappa} g_\eta, \eta \in \{\pm\}$. Hence from Section~\ref{sub:Conclusion general deg} we deduce that
\begin{align*}
\uz{\eta} =   \f{ 1}{\sqrt 2} \pa{\eta\f{\overline{k_\C}}{\ab{k}}\w_1 + \w_2}
\end{align*}
and $P A^{1} P \uz{\eta} = E^1_{\eta,k} \uz{\eta}$. The family $\uz{\eta}$ is not continuous in $k$ around $k=0$. Moreover, from~\eqref{eq:exc_psi1} we have
\begin{align*}
\uo{\eta} 
= \ab{k} R \der_k \uz{\eta}  + \f{\eta \ab{k} }{2v\fer}  \ps{\uz{-\eta}, \der_k R \der_k \uz{\eta}}\uz{-\eta}
\end{align*}
and
\begin{align*}
E_n^1 = \ab{k} \ps{U_{\eta,k}^0, \der_k U_{\eta,k}^0}, \qquad E_n^2 &= \f{\ab{k}^2 }{2}  + \ab{k}^2 \ps{U^0_{\eta,k},  \der_k R \der_k U_{\eta,k}^0}.
\end{align*}
Finally 
\begin{align*}
\bbU_{\eta}(K+k) &= \bbU_{\eta} \pa{K+ \ab{k} \tfrac{k}{\ab{k}}} = U^0_{\eta,\f{k}{\ab{k}}} + \ab{k} U^1_{\eta,\f{k}{\ab{k} } } + O(\ab{k}^2) \\
 &= U^0_{\eta,k} + U^1_{\eta,k } + O(\ab{k}^2).
\end{align*}

\subsection{Building reduced spaces from perturbation theory}%
\label{sub:Building reduced spaces from perturbation theory}

Avoiding the use of degenerate perturbation theory with eigenstates, we can use perturbation theory with density matrices, which is much simpler when the goal is only to generate the space spanned by perturbative eigenvectors. See~\cite{McWeeny62} for the computation of coefficients in density matrix pertubation theory, and~\cite[Section 9]{GarSta24} for a mathematical review.

Let us denote by $\Gamma_k(s)$ the orthogonal projection onto the space 
\begin{align*}
\Span \pa{U_{\eta,k}(s)}_{\eta \in \{-,+\}}.
\end{align*}
 We expand it as $\Gamma_k(s) = \sum_{n=0}^{+\infty} s^n \Gamma_k^n$. We define
\begin{align*}
\cB_{\ell,k} &:= \Span \pa{\pa{U^m_{\eta,k}}_{\eta \in \{-,+\}}^{0 \le m \le \ell}} 
\end{align*}
which is the space spanned by the derivatives of the eigenvectors in the direction $k$. We are searching for two kinds of families, with the minimal number of vectors in each, $\cF_{\ell,k}$ and $\cF_\ell$ such that
\begin{align}\label{eq:spanF} 
	\Span \cF_{\ell,k} = \cB_{\ell,\f{k}{\ab{k} } },\qquad\qquad \Span \cF_\ell = \underset{\substack{k \in \cB \backslash \{0\}}}{\bigcup} \;\cB_{\ell,\f{k}{\ab{k} } }.
\end{align}
Remark that $\Span \cF_{\ell,k} \subset \Span \cF_\ell$. Moreover, we have
\begin{align*}
\bbU_{\eta}(K+k) = U + O(\ab{k}^{\ell + 1}) \qquad \text{where } U \in \Span\cF_{\ell,k} \subset \Span \cF_\ell.
\end{align*}
 The first kind of family will span the ``perturbative space'' in the direction $k/\ab{k} $ up to order $\ell$, and the second one will span it uniformly in $k$. From~\cite[Lemma 4.2]{GarSta24} we have
\begin{align}\label{eq:gam} 
\cB_{\ell,k} =\Span\pa{\pa{\Gamma^n_k \w_a}_{a \in \{1,2\}}^{0 \le n \le \ell}}.
\end{align}
Now, since the derivatives $\Gamma^n$ are easy to compute, we can obtain the space written on the left of the last equation. Let us do this up to second order. We use the notations~\eqref{eq:def_A}. From~\cite{McWeeny62} or~\cite[Section 9]{GarSta24} we have
\begin{align*}
	\Gamma^1_k&= R A^1 P + P A^1 R, \\
	\Gamma^2_k &= R A^1 P A^1 R - P A^1 R^2 A^1 P + R A^2 P + P A^2 R \\
		 & \qquad \qquad + R A^1 R A^1 P + P A^1 R A^1 R - R^2 A^1 P A^1 P - P A^1 P A^1 R^2,
\end{align*}
Remark that $\Gamma^0_k = P$, which we denote by $\Gamma^0$ since it does not depend on $k$. Hence, 
\begin{align*}
\Gamma^1_k \w_a &= \ab{k} R \der_k \w_a, \\
\Gamma^2_k \w_a &= \ab{k}^2 \pa{  - P \der_k R^2 \der_k \w_a + (R \der_k)^2 \w_a - R^2 \der_k P \der_k \w_a}.
\end{align*}
We precise that $(R \der_k)^2  = \sum_{a,b \in \{1,2\}} \ab{k}^{-2} k_a k_b R(-i\partial_a)R(-i\partial_b)  $.

For the second order, 
\begin{align*}
\im \pa{P \der_k R^2 \der_k \w_a}& \subset \Span \cF_{1,k} \\
\im \pa{R^2 \der_k P \der_k \w_a} &\subset \Span \pa{R^2 \der_k \w_a}_{1 \le a \le 2},
 \end{align*}
 Thus for the families defined in~\eqref{eq:cFs}, we see that~\eqref{eq:spanF} and~\eqref{eq:gam} are satisfied. Following this method, one can obtain reduced spaces families $\cF_{\ell,k}$ and $\cF_\ell$ at all perturbative orders $\ell$. From~\cite[Lemma 4.2]{GarSta24}, except in exceptional cases where the corresponding family of vectors is not linearly independent (in which case we need less vectors), for any $\ell \in \N \cup \{0\}$ we need exactly $2(\ell+1)$ elements to build the family $\cF_{\ell,k}$ at order $\ell$. Hence $\ab{\cF_{\ell,k}} = M = 2(\ell +1)$.

\section{Schur reduction}%
\label{sec:Schur complement}

In this section, we quickly recall the principle of the Schur reduction~\cite{Schur18,Schur18b}, at a formal level. Consider two self-adjoint operators $H$ and $Y$ on a Hilbert space $\cH$, take an orthogonal projection $\PP$ onto $\cH$, define $\PP^\perp := 1 - \PP$,  and assume that $\PP Y \PP^\perp = 0$. Let us denote by $(E,\phi)$ a non-degenerate solution of the generalized eigenvalue problem associated to $H$ and $Y$, so $H \phi = E Y \phi$, and $Y$ is called the mass operator. 
Then
\begin{align}\label{eq:init_} 
(H-EY) \PP \phi + (H-EY) \PP^\perp \phi = 0.
\end{align}
Applying $\PP^\perp$ gives $\PP^\perp (EY-H) \PP^\perp \phi =  \PP^\perp H \PP \phi$. Assume that there exists $\ep > 0$ such that 
\begin{align*}
 \sigma \pa{\restr{\pa{\PP^\perp (EY-H) \PP^\perp}}{\PP^\perp \cH \rightarrow \PP^\perp \cH}} \cap \; ]-\ep,\ep[ = \varnothing,
\end{align*}
and define the partial inverse 
\begin{align*}
	r &:= 
\left\{
\begin{array}{ll}
0 & \mbox{on } \PP\cH , \\
\pa{ \restr{\pa{\PP^\perp (EY-H) \PP^\perp}}{\PP^\perp \cH \rightarrow \PP^\perp \cH}}^{-1}_\perp & \mbox{on } \PP^\perp \cH,
\end{array}
\right.
\end{align*}
extended by linearity on all of $\cH$. We have $r \PP^\perp (EY-H) \PP^\perp = \PP^\perp$ so $\PP^\perp \phi = r H \PP \phi$. Applying $\PP$ to~\eqref{eq:init_} and using $1 = \PP + \PP^\perp$ gives
\begin{align}\label{eq:schur_eq} 
\PP \pa{H + H r H - EY } \PP \phi = 0, \qquad \qquad \phi = (1 + r H) \PP \phi.
\end{align}
This set of equations has to be seen in the following way : one of the generalized eigenmodes of the effective operator $\PP \pa{H + H r H} \PP$ (with mass operator $Y$) is $(E, \PP \phi)$, so once one has obtained it, one can obtain $\phi$ by applying $1 + rH$ to $\PP \phi$. Thus, solving $\PP \pa{H + H r H} \PP$ with mass operator $Y$ leads to solve the exact operator $H$ with mass operator $Y$.

\subsection{Schur reduction applied to the effective operator}%
\label{sub:Schur complement applied here}

Then, with the same notations as in Section~\ref{sub:Reduction to a matrix}, we apply the Schur reduction with $H \leftarrow \bbH\elk$ (from~\ref{eq:effective_op_general}), $Y \leftarrow \cS$ and $\PP \leftarrow P_2$. From~\eqref{eq:M_L_S}, we have that 
\begin{align*}
&P_2 \cM = \cM P_2 = 0,\quad  P_2^\perp \cM P_2^\perp = M,\quad  P_2 \cS P_2^\perp = P_2^\perp \cS P_2 = 0, \\
&\qquad \qquad P_2 \cL P_2^\perp = T,\quad  P_2^\perp \cL P_2 = T^*.
\end{align*}
Hence we have formally
\begin{align*}
r &= P_2^\perp \pa{ \pa{E \cS - \bbH\elk}_{\mkern 1mu \vrule height 2ex\mkern2mu P_2^\perp \cH \rightarrow P_2^\perp \cH}}^{-1} P_2^\perp 
= -P_2^\perp \pa{\ep^{-1} \pa{M + O(\ep)}}^{-1} P_2^\perp \\
  & = -\ep P_2^\perp M^{-1} P_2^\perp + O(\ep^2).
\end{align*}
Hence
\begin{align*}
P_2 \bbH\elk r \bbH\elk P_2 = - T \cdot(-i\na_k) (M^{-1} T^*) \cdot(-i\na_k) + O(\ep).
\end{align*}
and the effective operator~\eqref{eq:schur_operator}. To reconstruct an approximation of the eigenvector of the original operator, on needs to form the quantity
\begin{align*}
	(1+r \bbH\elk) P_2 &= P_2 - \ep \mat{0 & 0 \\ 0 & M^{-1}}\bbH\elk P_2 + O(\ep^2) \\
			   &= (1 - \ep P_2^\perp M^{-1} T^* ) P_2 + O(\ep^2).
\end{align*}

\bibliographystyle{siam}
\bibliography{superlattice_graphene}
\end{document}